\newcommand{\E}{\mathbb{E}}
\newcommand{\Var}{\operatorname{Var}}
\newcommand{\CV}{\operatorname{CV}}
\newcommand{\SE}{\operatorname{SE}}
\newcommand{\ESS}{\mathrm{ESS}}
\newcommand{\ind}[1]{\mathbf{1}_{#1}}
\newcommand{\argmin}{\mathop{\mathrm{arg\,min}}\limits}
\newcommand{\R}{\mathbb{R}}
\newcommand{\pzero}{\pi_0}
\newcommand{\pprime}{\pi'}
\newcommand{\IPS}{\mathrm{IPS}}
\newcommand{\DR}{\mathrm{DR}}
\newcommand{\CI}{\mathrm{CI}}
\newtheorem{theorem}{Theorem}
\newtheorem{lemma}{Lemma}
\newtheorem{prop}{Proposition}
\newtheorem{proposition}[prop]{Proposition} % shares counter with prop
\newtheorem{corollary}{Corollary}
\theoremstyle{remark}
\newcommand{\CJE}{\textsc{CJE}}
\newcommand{\RefuseLevel}{\textsc{Refuse-Level}}
\title{Causal Judge Evaluation: Calibrated \mbox{Surrogate} Metrics for LLM Systems}
\author{Eddie Landesberg\thanks{\texttt{edward.landesberg@gmail.com}} \quad Manjari Narayan}
\begin{document}
\maketitle

% ------------------------------ Abstract ----------------------------
\begin{abstract}
Measuring long-run LLM outcomes (user satisfaction, expert judgment, downstream KPIs) is expensive. Teams default to cheap LLM judges, but uncalibrated proxies can invert rankings entirely. \textbf{Causal Judge Evaluation (CJE)} makes it affordable to aim at the right target: calibrate cheap scores against a small oracle slice, then evaluate at scale with valid uncertainty.
We treat surrogate validity as \emph{auditable}: for each policy or deployment context, a small oracle sample tests whether calibration remains unbiased, turning an uncheckable identification condition into a falsifiable diagnostic.
On 4,961 Arena prompts with 5 policies and a 16$\times$ oracle/judge cost ratio: at 5\% oracle fraction, CJE achieves 99\% pairwise ranking accuracy at 14$\times$ lower cost; across all configurations (5--50\% oracle, varying $n$), accuracy averages 94\%. An adversarial policy fails the transport audit and is correctly flagged. CJE refuses level claims rather than reporting biased estimates.
Key findings: naive confidence intervals on raw judge scores achieve 0\% coverage (CJE: ${\sim}$95\%); importance-weighted estimators fail despite 90\%+ effective sample size; the Coverage-Limited Efficiency (CLE) bound and its TTC diagnostic explain why.
\end{abstract}

% ------------------------------ Main text ---------------------------
\section{Introduction}
\label{sec:intro}

Many teams now choose between system configurations (models, prompts, parameters, tools) by generating outputs and scoring them with an LLM judge. This is fast, and often wrong in a systematic way. Consider a common failure: you compared configuration A to B, the judge said A wins, you shipped A, and users hated it. The mechanism: verbosity and style bias. Judges frequently reward longer, more elaborate responses independent of actual quality; optimizing against the judge selected the wrong configuration. The standard workaround (reserve oracle labels for a small validation slice and hope the correlation holds) is ubiquitous in practice but statistically unsound.

CJE is a simple protocol: \textbf{calibrate} the judge to a small set of oracle labels, then \textbf{audit} whether that calibration remains valid for each target policy. When the audit fails, we do not ``hope correlation holds''; we either recalibrate or refuse level claims.

\noindent\textbf{Quick start for practitioners:}
\begin{itemize}[leftmargin=*,nosep]
\item Use Direct + two-stage calibration. For open-ended generation under nontrivial policy shifts, logs-only OPE is impractical: target-typical trajectories receive little logger mass (low TTC), yielding a coverage-limited precision floor that weight stabilization cannot overcome.
\item Audit transport for high-stakes decisions (mean residual test per policy).
\item Gate on TTC $\geq 0.70$ before relying on logged data; in our benchmark, only near-clone policies pass this threshold.
\end{itemize}

\paragraph{Three systematic failures.}
Uncalibrated ``LLM-as-judge'' evaluation exhibits three failure modes that compound in production:

\begin{enumerate}[leftmargin=*,itemsep=4pt,topsep=4pt]
\item \textbf{Rankings can invert.} Judge scores $S$ are treated as if they were rewards $Y$ on the same scale. They are not. Without calibration, higher $S$ can predict \emph{lower} $Y$, inverting rankings entirely. This is the ``surrogate paradox'' in a new guise: surrogates can mislead catastrophically when the surrogate-target relationship shifts across contexts \citep{FlemingDeMets1996}. The ``You're absolutely right!'' phenomenon illustrates the cost: sycophantic responses score high on cheap proxies but low on actual user value. Teams know they should evaluate against long-term outcomes, but oracle labels are expensive, so they settle for proxies that may point the wrong direction.

\item \textbf{Standard confidence intervals fail catastrophically.} Naive CIs on uncalibrated judge scores achieve near-zero coverage: their 95\% CIs capture the true value in 0/50 seeds. Bootstrap inference with bias-corrected estimation restores coverage to ${\sim}$95\% for Direct mode. (\Cref{fig:oua-decomposition}: at 5\% oracle fraction, calibration uncertainty contributes ${\sim}$90\% of total variance.)

\item \textbf{Importance sampling fails despite healthy ESS.} When evaluating target policies $\pi'$ using logged data from $\pi_0$, importance-weighted estimators (inverse propensity scoring, IPS; self-normalized IPS, SNIPS) fail even after weight stabilization boosts effective sample size (ESS) from degenerate ($<1\%$) to healthy ($>80\%$) regimes. The problem is not weight concentration but \emph{coverage}: the logger rarely visits regions where the target policy concentrates. We formalize this via the \textbf{Coverage-Limited Efficiency (CLE)} diagnostic, which explains why IPS-style estimation hits a precision floor under limited overlap. (\Cref{tab:weight-diagnostics}: TTC ranges 19--49\% with CLE factors of 24--61$\times$.)
\end{enumerate}

\noindent\emph{Why logs-only OPE fails for open-ended generation.}
Off-policy evaluation requires the logger to have visited regions where the target concentrates. In autoregressive text generation, the ``action'' is an entire response---a trajectory through a space of $|\mathcal{V}|^T$ possible token sequences. Two issues compound: (i) teacher-forcing propensities are empirically unreliable (even our clone policy achieves only 26\% raw ESS vs.\ theoretical 100\%), and (ii) for meaningfully different policies, TTC is 19--49\%, indicating the logger rarely visits target-typical regions. Even with perfect propensities, CLE implies a hard precision floor when TTC is low. In constrained settings (short outputs, near-clone shifts, tool-choice bandits), OPE may remain viable; for open-ended generation under nontrivial policy shifts, Direct sidesteps these issues entirely.

\paragraph{Related work.}
Recent work recognizes that LLM judges require calibration \citep{Lee2025LLMJudgeReporting}, and concurrent work \citep{Chen2026EfficientLLMJudge} develops semiparametrically efficient, calibration-aware inference for mean parameters in the standard ``large judge-labeled test + small human-labeled calibration'' design. \citet{Guerdan2025DoublyRobustLLM} use doubly robust methods to address external validity when LLM judges are prompted with demographic personas. Our focus is complementary: we study multi-policy ranking under policy/deployment shift, introduce auditable transport diagnostics (policy-wise mean residual tests), and provide overlap/coverage diagnostics (CLE) and weight stabilization for off-policy evaluation, components not addressed by mean-parameter-only treatments. CJE fills this gap.

\paragraph{Auditable assumptions.}
The core insight of CJE is to replace an uncheckable surrogacy assumption with a \emph{policy-wise moment restriction} that we can audit. In causal inference, negative controls and proxy variables provide a template for this move: they turn untestable identification conditions into checkable diagnostics using auxiliary data \citep{Lipsitch2010,TchetgenTchetgen2024}. CJE applies this philosophy to LLM evaluation by budgeting a small oracle slice to test whether the judge-to-oracle calibration remains valid in each target context. The protocol is: calibrate on $\pi_0$, then \emph{audit transport} on each $\pi'$ (or time window, user segment, or task population) with a small oracle sample. Pass $\to$ reuse calibration; fail $\to$ recalibrate or refuse level claims for that cell. This transforms a brittle identification assumption into a standardized audit with an explicit decision rule.

\paragraph{Causal Judge Evaluation (\CJE).}
We introduce CJE, a framework that addresses all three failures.\footnote{Library: \texttt{pip install cje-eval} (\url{https://github.com/cimo-labs/cje}); experiments: \url{https://github.com/cimo-labs/cje-arena-experiments}} CJE casts offline evaluation as calibrated surrogate estimation with three core components:

\begin{itemize}[leftmargin=*,itemsep=2pt,topsep=2pt]
\item Reward calibration: mean-preserving isotonic regression from judge score $S$ to oracle labels $Y$, with a two-stage option for covariate-dependent bias (e.g., response length).
\item Weight stabilization: variance-optimal stacking of monotone weight candidates, stabilizing importance weights for improved effective sample size.
\item Calibration-aware inference: bootstrap with bias correction that propagates calibration uncertainty into confidence intervals, achieving near-nominal coverage.
\end{itemize}

\noindent Conceptually, CJE adopts the \emph{surrogate index} move \citep{AtheyChettyImbensKang2019}, learning $f(S,X) \approx \E[Y \mid S,X]$ from cheap proxies to expensive outcomes, and applies it to LLM evaluation with auditable transport diagnostics. These components instantiate a single design rule: encode justified knowledge as model restrictions (closed convex sets) and project onto them. Building on established semiparametric theory, we show that such restrictions can only reduce variance (\cref{thm:model-restriction}); with cross-fitting, our estimators attain the surrogate information bound. (Component names and technical details in \cref{sec:method}.)

Beyond variance control, CJE includes a mean transport test that operationalizes the auditable assumption protocol: for each target policy $\pi'$ or time period, we test whether the mean residual $\E[Y - f(S,X)]$ is zero. In our Arena benchmark, the base-trained calibration transports cleanly to clone, premium, and prompt-engineered policies, but fails for an adversarial ``unhelpful'' policy, where mean residuals reveal a $-0.31$ level shift. When the test fails, the failure is \emph{visible and actionable}: either recalibrate with policy-specific data or refuse absolute value claims for that cell.

\paragraph{The Arena benchmark.}
We validate CJE on a large-scale benchmark: 4,961 Chatbot Arena prompts, five LLM policies (including an adversarial ``unhelpful'' policy), 13 estimators, and \texttt{gpt-5-2025-08-07} as oracle. Key findings:

\begin{itemize}[leftmargin=*,itemsep=2pt,topsep=2pt]
\item Calibration works. The Direct Method with two-stage calibration achieves 94\% pairwise ranking accuracy on average (vs.\ 38\% for SNIPS). Naive CIs on uncalibrated scores achieve 0\% coverage; bootstrap inference with $\hat\theta_{\mathrm{aug}}$ improves coverage to ${\sim}$95\% (Direct and stacked doubly-robust).
\item OPE fails unexpectedly. We expected weight stabilization to enable OPE; it did not. Despite weight stabilization boosting ESS from $<1\%$ to $>80\%$, calibrated IPS remains near-random (47\% pairwise). CLE explains why: high ESS is necessary but not sufficient when the logger rarely visits target-typical regions.
\item Doubly-robust (DR) does not dominate. We expected DR to outperform Direct by combining logged data with fresh draws. Instead, Direct slightly outperforms DR (94.4\% vs.\ 94.1\%): under low overlap, DR's IPS component adds noise rather than information.
\end{itemize}

\paragraph{Contributions.}
\begin{enumerate}[leftmargin=*,nosep]
\item Framework: CJE unifies calibration, weight stabilization, and uncertainty quantification for surrogate-based LLM evaluation.
\item Methods: Mean-preserving reward calibration, weight stabilization for improved ESS, and calibration-aware bootstrap inference.
\item Theory: CJE adapts the surrogate index framework to LLM evaluation with a key innovation: transportability becomes auditable rather than assumed via the mean transport test. Inference is grounded in the efficient influence function for missing-outcome estimation, explaining why bootstrap with $\hat\theta_{\mathrm{aug}}$ achieves ${\sim}$95\% coverage where naive methods achieve 0\%. The CLE bound provides a novel explanation for OPE failure despite high ESS, formalizing a precision floor under limited overlap.
\item Validation: A 13-estimator benchmark on ${\sim}5$k Arena prompts: at 5\% oracle fraction, 99\% pairwise ranking accuracy; across all configurations, 94\% average. Cost reduction of 14$\times$ (for 5 policies) with near-nominal CI coverage (${\sim}$95\% for both Direct and stacked DR).
\item Operations: A closed-form Square Root Law for optimal budget allocation (\cref{app:budget}), enabling practitioners to minimize variance subject to cost constraints by balancing evaluation and calibration uncertainty.
\end{enumerate}

\noindent\textit{Readers primarily interested in deployment can focus on \cref{sec:method} (pipeline overview), \cref{ssec:takeaways} (seven practical rules), and \cref{tab:estimator-guide} (estimator selection guide).}

% Concise setup (notation, OPE basics, variance/overlap) + DbP overview
\section{Background and Setup}
\label{sec:background}

\paragraph{Setup \& notation.}
We observe i.i.d.\ logs $(X_i,A_i,S_i)$ under a fixed logger $\pi_{0}(\cdot\mid X)$; $S=s(X,A)$ is a
scalar judge score on every row, and a small i.i.d.\ oracle slice provides labels $Y$. For a candidate
policy $\pi'$, the sequence-level importance ratio is
\[
W_{\pi',i} \;=\; \frac{\pi'(A_i\mid X_i)}{\pi_{0}(A_i\mid X_i)}
\;=\; \exp\!\big\{\log p_{\pi'}(A_i\mid X_i)-\log p_{\pi_{0}}(A_i\mid X_i)\big\},
\]
computed via teacher forcing (TF). The target is the counterfactual value $V(\pi')=\E[Y(\pi')]$.
We use self-normalized importance sampling (SNIPS), which produces sample-mean-one weights, when helpful.

\paragraph{The LLM evaluation regime.}
Standard off-policy evaluation (OPE) assumes the target policy \emph{cannot} be executed. LLM evaluation operates in a different regime: generating text is cheap, but grading it with oracle labels (human experts, downstream KPIs) is expensive. This creates three evaluation modes: (i) logs-only (IPS/SNIPS), when we cannot re-run the model; (ii) fresh draws (Direct), when we generate new responses but score them with a calibrated surrogate; and (iii) hybrid (DR), combining both. All three share the core challenge: learning $S \to Y$ calibration from a small oracle slice collected under $\pi_0$, then applying it to evaluate $\pi'$. The calibration is always \emph{off-policy}; the actions may or may not be.

\paragraph{Two operating modes: ranking vs.\ levels.}
CJE supports two use cases with different requirements:
\begin{itemize}[leftmargin=*,nosep]
\item Ranking: Determine which policy is better ($V(\pi_1) > V(\pi_2)$?). Requires calibration to preserve order across policies. If calibration bias differs by policy (e.g., overestimates sycophantic responses), rankings can invert even without level claims.
\item Absolute levels: Report actual policy values ($V(\pi') = 0.72$) or make deployment decisions based on thresholds. Requires mean-unbiasedness: $\E_{\pi'}[Y - f(S,X)] = 0$.
\end{itemize}
\noindent The transport audit catches both failure modes: policy-specific bias that inverts rankings and level shifts that bias absolute values. For low-stakes exploratory analysis, the audit may be deferred; for high-stakes decisions (deployment, cross-time comparisons, or when policy differences are small), run the audit.

\paragraph{CJE as surrogate index for LLM evaluation.}
Conceptually, CJE adopts the \emph{surrogate index} move \citep{AtheyChettyImbensKang2019}: learn a mapping $f(S,X) \approx \E[Y \mid S, X]$ from cheap proxies to expensive outcomes, then use it for downstream estimation. For policy value estimation, we require only mean transport: $\E_{\pi'}[Y - f(S,X)] = 0$, which we treat as auditable via a small oracle slice in each target context (\cref{prop:mean-transport}). We additionally impose monotonicity and mean-preservation as model restrictions. These are not identification assumptions, but deliberate regularization choices that stabilize calibration and prevent preference inversions.

Athey et al.'s framework addresses cross-sample comparability when surrogates are learned in one environment and applied in another. CJE faces two analogous cross-environment challenges: (i) calibration transport (does $f$ learned under $\pi_0$ remain mean-unbiased under $\pi'$?), addressed by the mean residual test; and (ii) action-space coverage (does the logger visit regions where $\pi'$ concentrates?), addressed by CLE/TTC diagnostics. The key difference: we \emph{budget oracle labels in each target context} to audit transport rather than assume it, making CJE an evaluation protocol with explicit failure detection rather than a brittle identification claim.

\paragraph{OPE basics.}
IPS/SNIPS estimate $V(\pi')$ by reweighting logged outcomes
\citep{HorvitzThompson1952,Hajek1964Rejective,LiChuLangfordSchapire2011,SwaminathanJoachims2015CRM}.
The direct method (DM) plugs in $g(x)=\sum_{a}\pi'(a\mid x)\,\hat m(x,a)$. Doubly robust (DR)
estimators combine IPS and DM \citep{BangRobins2005} and, with sample–splitting and cross–fitting, admit $\sqrt{n}$
inference under the standard one–of–two $n^{-1/4}$ product–rate condition
\citep{Bickel1993,VaartWellner2000,Kosorok2008,JiangLi2016,Chernozhukov2018,vanDerLaanRose2011TL, KallusUehara2020DRL}.
Teacher forcing (TF) provides sequence–level propensities/ratios, so these forms apply to sequence policies in principle; however, TF reliability depends on API implementation details (see Limitations).

\paragraph{Variance, overlap, and stabilization.}
IPS variance scales with $\E[W_{\pi'}^{2}]$ and deteriorates under limited overlap
\citep{Crump2009LimitedOverlap}. We monitor stability with the effective sample size (ESS),
\[
\mathrm{ESS}(W)=\frac{\big(\sum_i W_i\big)^2}{\sum_i W_i^2},
\qquad
\frac{\mathrm{ESS}(W)}{n}=\frac{1}{1+\mathrm{CV}^2(W)}\ \ \text{when } \bar W=1\ \text{(global mean–one/SNIPS)}.
\]
We also track tail behavior via diagnostics \citep{Hill1975TailIndex,Liu2001MC,Owen2013MCTME}.
Common stabilizers include truncation/clipping \citep{Ionides2008TruncatedIS}, overlap weighting
\citep{LiMorganZaslavsky2018}, balancing objectives
\citep{Kallus2018Balanced}, and covariate–shift reweighting
\citep{Shimodaira2000CovShift,SugiyamaKrauledatMueller2007IWCV}.

\paragraph{The Coverage-Limited Efficiency (CLE) problem.}
High ESS indicates that weights are not dominated by a few extreme observations, but it does not guarantee that the logger has meaningful \emph{coverage} in regions where the target policy concentrates.
Let $\mathcal{T} \subset \mathcal{X} \times \mathcal{A}$ be a target-relevant region with $\alpha = P_{\pi'}(\mathcal{T})$ (target mass) and $\beta = P_{\pi_0}(\mathcal{T})$ (logger mass). For any IPS-style estimator based on importance-weighted outcomes,
\begin{equation}
\label{eq:cle-bound}
\mathrm{SE}(\hat\Psi_{\mathrm{IPS}}) \;\ge\; \frac{\sigma_{\mathcal{T}} \,\alpha}{\sqrt{\beta\, n}} \sqrt{1 + \chi^2\!\big(\pi'_{\mathcal{T}} \,\|\, \pi_{0,\mathcal{T}}\big)},
\end{equation}
where $\sigma_{\mathcal{T}}^2 := \mathrm{ess\,inf}_{(x,a)\in \mathcal{T}} \mathrm{Var}(Y \mid X{=}x, A{=}a)$ is the minimal outcome noise in $\mathcal{T}$, and $\chi^2(\pi'_{\mathcal{T}} \| \pi_{0,\mathcal{T}})$ measures shape mismatch inside $\mathcal{T}$ (proof in Appendix~\ref{app:proof-cle}).
The bound has three multiplicative factors: (i) the coverage penalty $\alpha/\sqrt{\beta}$, which explodes when the logger rarely visits target-typical regions; (ii) the shape mismatch $\sqrt{1+\chi^2}$, inflating the floor even with good coverage; and (iii) the standard noise term $\sigma_{\mathcal{T}}/\sqrt{n}$.

\emph{Implication.} When $\beta$ is small (poor logger coverage), the CLE floor is prohibitive regardless of weight stabilization. This explains our empirical finding that calibrated IPS fails despite ESS${}>90\%$ (\cref{sec:experiments}): high ESS indicates no weight concentration, but low $\beta$ (poor coverage in target-typical regions) sets a hard precision floor that logs-only methods cannot beat. This is a \emph{structural} limitation: even with perfect propensities, $\beta$ collapses exponentially with sequence length (\cref{prop:ttc-collapse}).

\paragraph{Diagnostics for CLE.}
The CLE bound decomposes into two factors, each with a corresponding diagnostic:
\begin{itemize}[nosep,leftmargin=*]
\item Coverage penalty ($\alpha/\sqrt{\beta}$): Measured by TTC (Target-Typicality Coverage) $= \hat\beta$, the estimated logger mass on $\mathcal{T}$.
We define the target-typical region $\mathcal{T}$ via a surprisal threshold: let $\ell_{\pi'}(x,a) = -\log p_{\pi'}(a \mid x)$ be the total negative log-likelihood under the target policy (from teacher forcing), and let $\bar\ell_{\pi'}(x,a) = \ell_{\pi'}(x,a)/|a|$ be the mean per-token surprisal. Set $\mathcal{T} = \{(x,a) : \bar\ell_{\pi'}(x,a) \le q_{0.9}\}$, where $q_{0.9}$ is the 90th percentile of $\bar\ell_{\pi'}$ under $\pi'$.
Then $\alpha = P_{\pi'}(\mathcal{T}) = 0.9$ by construction, and TTC $= \hat\beta = \hat P_{\pi_0}(\mathcal{T})$ is the empirical fraction of logged samples falling in $\mathcal{T}$.
If TTC${}<0.7$, logs-only IPS will \emph{fail}, meaning the CLE lower bound implies standard errors too large to distinguish realistic policy differences, even under ideal calibration and weight stabilization; prefer Direct or DR methods. (Threshold calibrated for MDE ${\approx}0.02$ at $n{\approx}5\text{k}$; see App.~\ref{app:diag-gates}.)
\emph{Applicability:} TTC requires sampling from or computing surprisal under $\pi'$ (as we do via teacher forcing). When $\pi'$-access is unavailable, fall back to conservative rules (e.g., always prefer Direct over IPS) or use proxy distributions to approximate TTC.
\item Shape mismatch ($\sqrt{1+\chi^2}$): Measured by Bhattacharyya affinity $A_B = \int \sqrt{p_{S|\pi'}(s)\,p_{S|\pi_0}(s)}\,ds$ in the surrogate space. Low affinity ($A_B < 0.5$) indicates severe shape mismatch; the gate threshold ($A_B \geq 0.85$; Table~\ref{tab:gates}) ensures adequate overlap for reliable IPS.
\end{itemize}
Both diagnostics are complementary: TTC checks if the logger visits target-typical regions (action space); Bhattacharyya affinity checks overall alignment of judge score distributions across policies (surrogate space).

\paragraph{Calibration for OPE.}
Calibration enforces identities under $\pi_{0}$: outcome calibration de-biases $g(X)$; ratio calibration
enforces $\E_{\pi_{0}}[W_{\pi'}]=1$ and $\E_{\pi_{0}}[W_{\pi'}h]=\E_{\pi'}[h]$ for a test class $h$;
orthogonal moments enable honest inference \citep{Chernozhukov2018}. Recent work gives
projection-based IPS/DR with stability guarantees
\citep{vanDerLaanLinCaroneLuedtke2025StabIPW,vanDerLaanLuedtkeCarone2024DRCalibration}. For DR, IF
orthogonality renders small calibration error second order
\citep{Bickel1993,Chernozhukov2018,vanDerLaanRose2011TL}.

\paragraph{Shape constraints (isotonic).}
Isotonic regression is the Euclidean projection onto the cone of monotone functions (PAVA)
\citep{Ayer1955PAVA,Barlow1972}; it avoids extrapolation and \emph{weakly reduces dispersion} by
majorization \citep{HardyLittlewoodPolya1952,MarshallOlkinArnold2011}. CJE exploits this via mean-preserving isotonic projections for both reward calibration and weight stabilization (\cref{sec:method}).

\paragraph{Judges as surrogates.}
Automatic judges (LLM-as-judge or preference models) provide scalable scoring
\citep{Ouyang2022Instruct,BaiEtAl2022,Zheng2023LLMasJudge,Kim2024PrometheusJudge,KocmiFedermann2023}
but are correlational and may drift \citep{DietzEtAl2025LLMJudges}. Viewing $S$
as a \emph{surrogate} connects to surrogate endpoints and mediation
\citep{Prentice1989Surrogate,RobinsGreenland1992,FrangakisRubin2002PS,Pearl2012Mediation,VanderWeele2015Book}.
The \emph{surrogate paradox}, wherein a policy improves $S$ and $S$ correlates with $Y$ yet the policy harms $Y$, can arise even with high correlation \citep{VanderWeele2013Surrogates}; our transport test (\cref{prop:mean-transport}) provides a necessary check.
Calibrating $R=f(S)$ enables unbiased estimation of $V(\pi')$ when the \emph{transport condition} holds: $\E_{\pi'}[Y - f(S,X)] = 0$ (\cref{prop:mean-transport}). \emph{Mean sufficiency} ($\E[Y\mid X,A,S]=\mu(S)$) is one sufficient condition guaranteeing transport; in CJE, level claims are gated by a per-policy transport audit. Without that audit, transport must be assumed. The surrogate also supplies a one-dimensional index that stabilizes weights.

\begin{figure}[t]
  \centering
  \begin{tikzpicture}[
    node distance=1.4cm and 1.8cm,
    every node/.style={font=\small},
    obs/.style={circle, draw, minimum size=0.9cm, inner sep=2pt},
    arr/.style={->, >=stealth, thick},
    fail/.style={->, >=stealth, thick, dashed, red!70!black}
  ]
    % Main observed nodes
    \node[obs] (X) {$X$};
    \node[obs, right=1.8cm of X] (A) {$A$};

    % S and Y as parallel outputs, stacked vertically
    \node[obs, right=1.8cm of A, yshift=0.6cm] (S) {$S$};
    \node[obs, right=1.8cm of A, yshift=-0.6cm] (Y) {$Y$};

    % Policy node (solid) below A - it's a causal variable
    \node[obs, below=0.8cm of A] (pi) {$\pi$};

    % Environment node (solid) right of Y - causal variable
    \node[obs, right=1cm of Y] (E) {$E$};

    % Causal edges
    \draw[arr] (pi) -- (A);
    \draw[arr] (X) -- (A);
    \draw[arr] (X) to[bend left=15] (S);
    \draw[arr] (X) to[bend right=15] (Y);
    \draw[arr] (A) -- (S);
    \draw[arr] (A) -- (Y);

    % Transport failure edges (dashed red)
    \draw[fail] (E) -- (Y);
    \draw[fail] (pi) to[bend right=25] (Y);

  \end{tikzpicture}
  \caption{\textbf{Judge-as-surrogate under policy and environment shift.}
  $X$: prompt; $A$: response; $S$: judge score; $Y$: oracle label; $\pi$: policy; $E$: environment.
  $S$ and $Y$ are parallel measurements of the response.
  Solid arrows: causal structure ($\pi \to A$: policy determines response).
  Dashed red arrows: \emph{mechanism-shift edges} indicating the $S$-$Y$ calibration may differ across policies or environments; these are not causal effects of $\pi$ or $E$ on $Y$ holding $(X,A)$ fixed.
  \textbf{Plain English:} Different policies generate different response styles, and the judge may be biased toward certain styles (e.g., verbosity). CJE's transport test catches this via $\E[Y - f(S,X)] = 0$.}
  \label{fig:dag}
\end{figure}
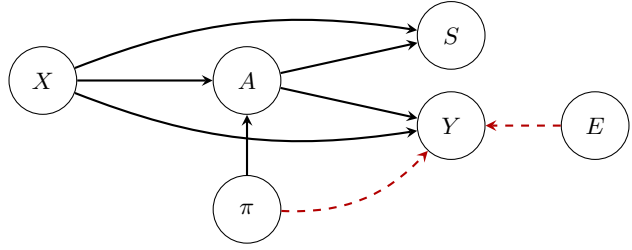

\paragraph{Surrogate interpretation.}
We use the judge score $S$ as a \emph{surrogate measurement} for the expensive outcome $Y$, not as a causal mediator.
Given a calibration function $f$ learned on an oracle slice, define the residual $\varepsilon := Y - f(S,X)$.
For any target policy $\pi'$, the value estimation error decomposes exactly as
\begin{equation}
\label{eq:transport-decomp}
\E_{\pi'}[Y] - \E_{\pi'}[f(S,X)] \;=\; \E_{\pi'}[\varepsilon].
\end{equation}
Thus, surrogate-only evaluation is mean-unbiased for policy $\pi'$ if and only if the mean residual transports: $\E_{\pi'}[\varepsilon] = 0$.
We treat this as an \emph{auditable} assumption rather than an identification axiom: we estimate and test $\E_{\pi'}[\varepsilon]$ on a small oracle audit slice per policy (or time window), and re-anchor calibration (or refuse level estimates) when the test fails (\cref{prop:mean-transport}).

This mean-transport condition is strictly weaker than Prentice-style surrogacy, which requires $Y \perp \pi' \mid (S, X)$; it suffices for unbiased \emph{mean value} estimation but does not guarantee conditional calibration within subgroups or at distribution tails.
When finer guarantees are needed, the audit can be strengthened from a single mean to decile-binned residual checks, detecting compensating errors that cancel in the overall mean.

\fbox{\parbox{0.95\columnwidth}{
\textbf{Transport Failure $\neq$ Distribution Shift}\\[4pt]
\textbf{Distribution shift:} The prompts change between training and deployment.\\[2pt]
\textbf{Transport failure:} Prompts can be \emph{identical}, but your policy changes the kinds of responses it generates, and the judge's bias interacts with that (style, verbosity, self-preference). The mapping $S \to Y$ changes even though $X$ doesn't.\\[2pt]
CJE tests this directly: for each target policy $\pi'$, we check whether $\E_{\pi'}[Y - f(S,X)] = 0$. If not, the calibration doesn't transport---either recalibrate or refuse level claims.
}}

\paragraph{Calibration-aware uncertainty \& influence-function stacking.}
Treating learned $R=\hat f(S)$ as fixed understates uncertainty when the oracle slice is small. CJE addresses this via calibration-aware inference, which propagates calibration uncertainty into confidence intervals using a delete-one-fold jackknife \citep{EfronStein1981,Bickel1993}. For ensemble estimation, CJE stacks multiple estimators by minimizing IF covariance over the simplex \citep{Wolpert1992StackedGeneralization,vanderLaan2007SuperLearner}. Details in \cref{sec:method}.

% Methods (reward calibration, weight stabilization, calibrated DR, IF stacking, calibration-aware inference; brief diagnostics/gates)
\section{Methods}
\label{sec:method}

The CJE pipeline has three components: reward calibration to prevent preference inversion, weight stabilization for importance-weighted estimators, and uncertainty-aware inference for valid CI coverage. For open-ended generation under nontrivial policy shifts, reward calibration and inference are the essential components; weight stabilization addresses IPS/DR variance but faces coverage limitations when TTC is low (\cref{sec:intro}). \Cref{fig:pipeline} provides an overview.

\begin{figure}[t]
  \centering
  \includegraphics[width=\columnwidth]{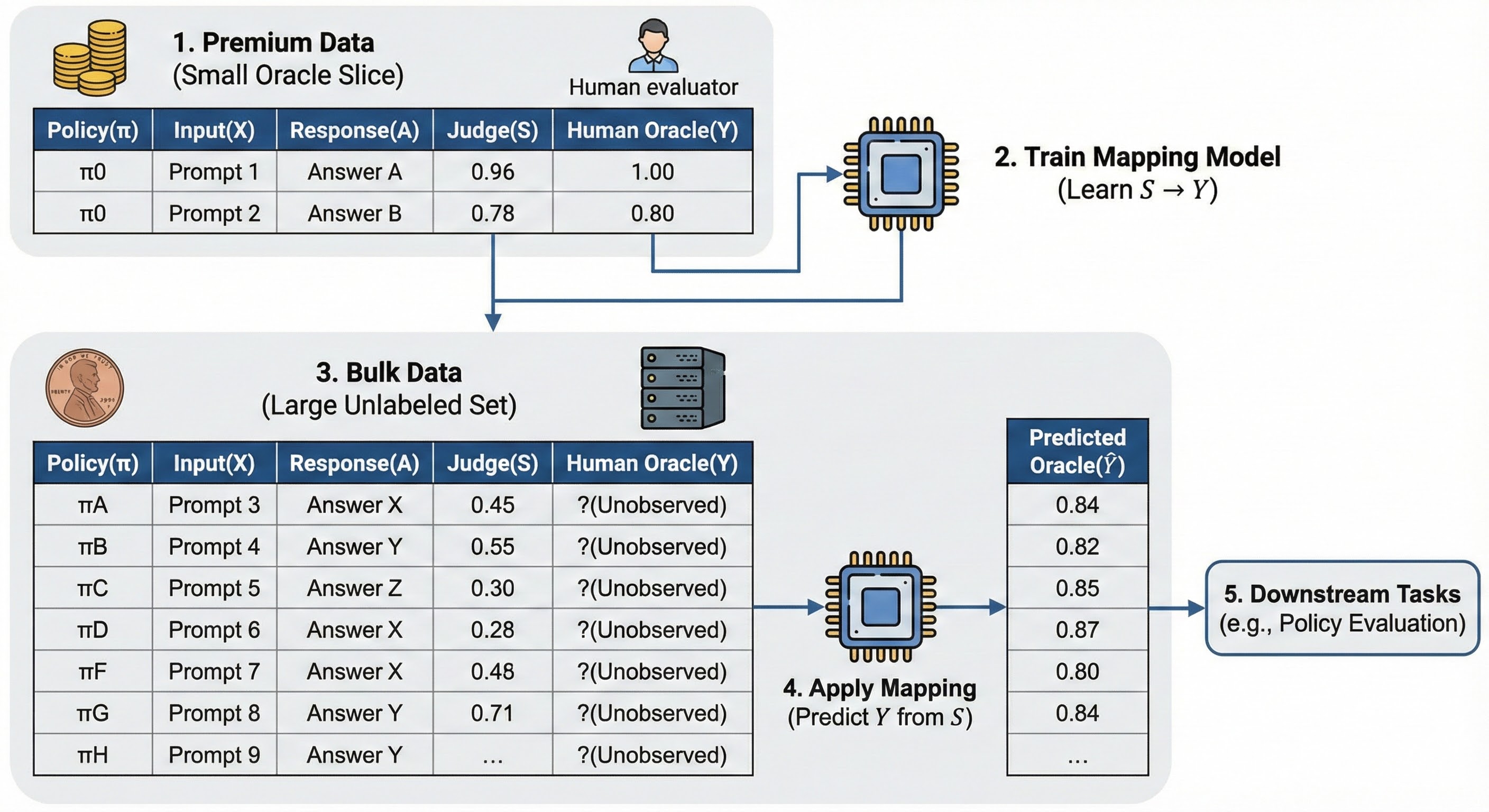}
  \caption{\textbf{CJE pipeline overview.} A small oracle slice (5--25\%) provides expensive oracle labels to train a calibration model ($S \to Y$). The learned mapping is then applied to bulk evaluation data where oracle labels are unavailable, enabling policy evaluation at a fraction of the cost. (In experiments, the oracle is \texttt{gpt-5-2025-08-07}; in production this would typically be human raters or downstream KPIs.)}
  \label{fig:pipeline}
\end{figure}

\CJE\ encodes justified knowledge as model restrictions: (i) monotonicity and mean-preservation for reward calibration, (ii) mean-one constraint with $S$-monotonicity for weights, (iii) nuisance-orthogonality for estimators, and (iv) convex combination for stacking. All learners are \emph{cross-fitted}; by the \emph{Efficiency via Model Restriction} theorem, these restrictions lower the efficiency bound (see \cref{sec:theory}).

\subsection{Reward calibration (two-stage isotonic)}
\label{ssec:method-autocal}
On the oracle slice $\{(S_i,Y_i)\}$, fit a \emph{mean-preserving} calibrator $R=f(Z(S))$ with $K$-fold
cross-fitting:
\begin{itemize}[leftmargin=*,itemsep=2pt,topsep=2pt]
\item Two-stage mode (recommended): Fit a smooth index $Z(S,X)=g(S,X)$
(splines+ridge), map to mid-ranks $U=\mathrm{ECDF}\{Z(S,X)\}$, then fit isotonic
$\hat h_{\uparrow}(U)$. Predictions are $R=\hat h_{\uparrow}(\mathrm{ECDF}\{g(S,X)\})$.
\item Monotone-only mode: Isotonic regression directly on $S$:
$\hat f_{\uparrow}\in\arg\min_{f\in\mathcal{M}_\uparrow}\sum_{i\in O}\big(Y_i-f(S_i)\big)^2$.
PAVA preserves the slice mean exactly. Use when covariates $X$ are unavailable.
\end{itemize}
\emph{Remark (covariate flexibility).} The terminal isotonic step preserves mean-honesty regardless of the first-stage model, so $g$ can be \emph{any} cross-fitted learner (splines, random forests, gradient boosting) and can incorporate covariates $X$ beyond $S$. In our experiments, including response length as a first-stage covariate improves ranking accuracy: LLM judges often favor longer responses independent of quality, and conditioning on length removes this confounder from the $S \to Y$ mapping.
Let $R^{\mathrm{OOF}}$ denote OOF predictions used along the IF path; the point
estimate may use the pooled fit. The terminal isotonic step makes the calibrator \emph{mean-honest} in either
mode. We treat $\hat f$ as learned and propagate its uncertainty via calibration-aware inference (\cref{ssec:method-oua}).

\begin{figure}[t]
  \centering
  \includegraphics[width=\columnwidth]{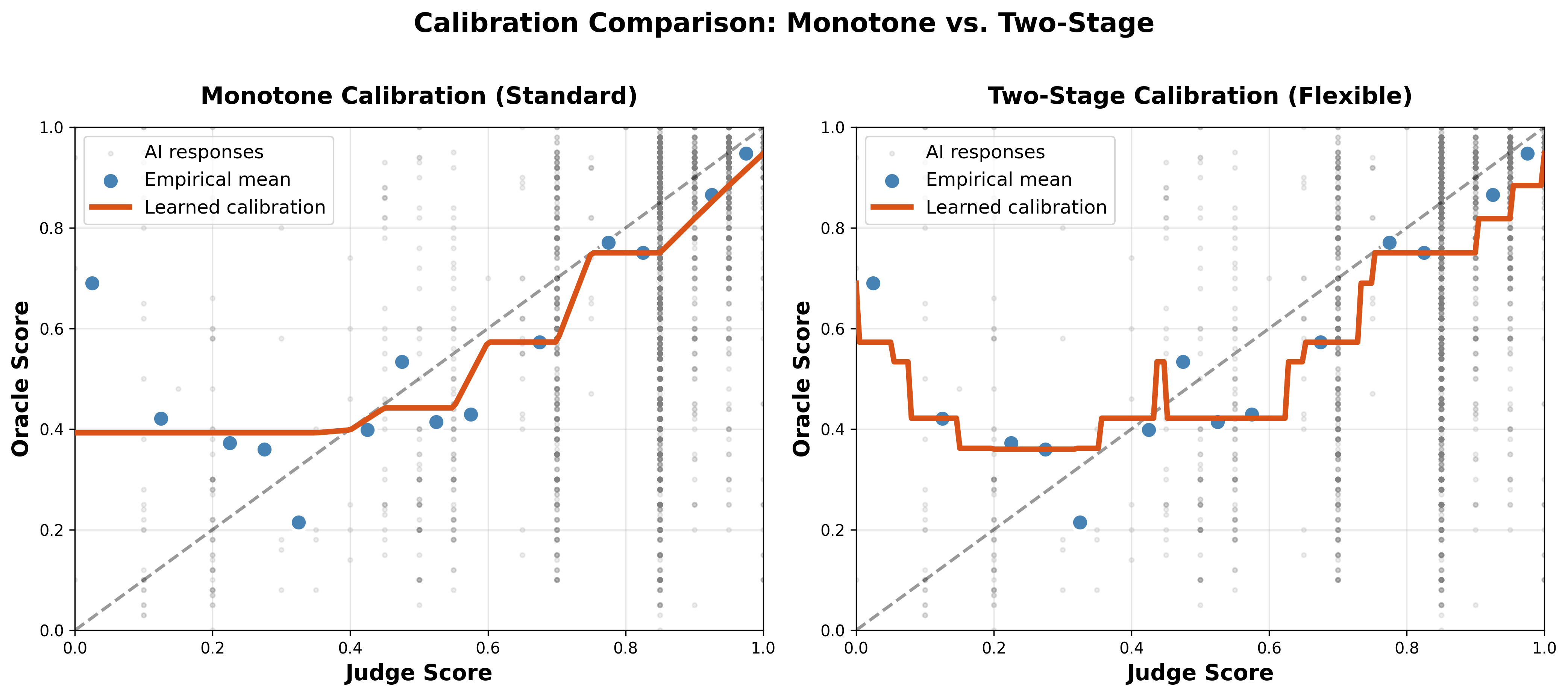}
  \caption{\textbf{Reward calibration: monotone vs.\ two-stage.} Left: monotone-only isotonic regression enforces monotonicity but cannot capture non-monotonic patterns in $\E[Y \mid S]$. Right: two-stage calibration (spline index $\to$ isotonic) can fit flexible patterns while preserving the mean.}
  \label{fig:reward-calibration}
\end{figure}

\subsection{Weight stabilization (unit-mean ratios via OOF stacking of $S$-monotone candidates)}
\label{ssec:method-simcal}
\emph{Scope and practical relevance:} Weight stabilization applies only to IPS and DR estimators. For open-ended generation under nontrivial policy shifts, coverage sparsity (low TTC) limits the utility of logged data (\cref{sec:intro}); in such regimes, Direct is preferred. Weight stabilization remains valuable in constrained settings (short outputs, near-clone shifts, tool-choice bandits) where TTC $\geq 0.70$, or when combining logged data with fresh draws via DR. The Direct method---which generates fresh responses under $\pi'$---requires no importance weights.

Let $W_{\pi'}^{\mathrm{m1}}$ be the self-normalized (SNIPS) baseline. For each fold $k$:
\begin{enumerate}[leftmargin=*,itemsep=2pt,topsep=2pt]
\item Monotone projections (train on $I_{\neg k}$): Fit increasing and decreasing isotonic maps on $S$
(the latter by running PAVA on $-S$), rescale each to mean one on $I_{\neg k}$, and predict OOF candidates on $I_k$:
$W_{\uparrow}^{\mathrm{OOF}}$, $W_{\downarrow}^{\mathrm{OOF}}$; optionally include an identity candidate
$W_{\mathrm{base}}^{\mathrm{OOF}}\equiv 1$.
\item OOF stacking (variance-aware): Define residuals $\Delta_i$ used by the downstream estimator:
$\Delta_i{=}R_i$ for IPS and $\Delta_i{=}R_i-\hat m(X_i,A_i)$ for DR (with $\hat m{=}\hat q$ below). Let
$U_c=W_c^{\mathrm{OOF}}\Delta$ for $c\in\{\mathrm{base},\uparrow,\downarrow\}$ and compute
$\hat\Sigma_{cd}=\mathrm{cov}(U_c,U_d)$ (tiny ridge if needed). Choose simplex weights
\[
\hat\beta\in\arg\min_{\beta\in\Delta_3}\ \beta^\top\hat\Sigma\,\beta,\qquad
W^{\mathrm{stack}}=\sum_c \hat\beta_c\,W_c^{\mathrm{OOF}},
\]
then renormalize $W^{\mathrm{stack}}$ to mean one.
\item Light variance guard (optional; $\rho{=}1$ by default): Cap variance at absolute threshold $\rho$:
\[
\alpha=\min\!\Big\{1,\ \sqrt{\frac{\rho}{\Var(W^{\mathrm{stack}})}}\Big\},\quad
W^{\mathrm{blend}}=1+\alpha\,(W^{\mathrm{stack}}-1),\quad
\hat W_{\pi'}=W^{\mathrm{blend}}/\bar W^{\mathrm{blend}}.
\]
Each step preserves (or restores) the sample mean. \emph{Note:} The theoretical guarantee (Lemma~\ref{lem:guard}) uses a \emph{relative} bound $\Var(\hat W)\le\rho\,\Var(W^{\mathrm{m1}})$; the implementation's absolute cap is more conservative when $\Var(W^{\mathrm{m1}}){>}1$.
\end{enumerate}
\emph{Remark (transport view).} In the continuous case the ideal component is
$m^\star(s)=p_{S\mid\pi'}(s)/p_{S\mid\pi_0}(s)$; weight stabilization stacks increasing and decreasing $S$-monotone candidates to minimize IF variance, yielding approximately monotone weights when one direction dominates.
When $m^\star(S)$ is not approximately monotone, isotonic projection trades bias for variance; in the low-overlap regimes where this matters most, we recommend Direct over IPS/DR (see CLE bound, \cref{sec:background}).
(Weight stabilization visualization in \cref{app:weight-viz}.)

\subsection{Estimators: Direct, Calibrated IPS, and Calibrated DR}
\label{ssec:method-estimators}
Let $\hat q(z)\approx \E[R\mid Z{=}z]$ be the cross-fitted calibrator (in experiments, $\hat q = \hat f^{(-k)}(Z)$; we do not fit a separate text-conditioned reward model).
Define $\hat g_{\pi'}(x)=\E_{\pi'}[\hat q(Z)\mid X{=}x]$; all nuisances are cross-fitted and OOF predictions
are used inside IFs.

\paragraph{Direct (fresh responses under $\pi'$; no overlap needed).}
When responses can be generated under the target policy, the Direct estimator uses calibrated rewards without importance weighting:
\[
\widehat V_{\mathrm{Direct}}(\pi') = \frac{1}{n}\sum_{i=1}^n R_i = \frac{1}{n}\sum_{i=1}^n \hat f(Z_i).
\]
For inference, the bias-corrected augmented estimator $\hat\theta_{\mathrm{aug}}$ (\cref{eq:theta-aug}) achieves ${\sim}$95\% coverage by correcting for calibration uncertainty.
Direct requires no overlap assumption and is the \textbf{recommended default} when fresh generation is feasible (\cref{tab:estimator-guide}).

\paragraph{Calibrated IPS.}
\[
\widehat V_{\mathrm{IPS}}(\pi')=\frac{1}{n}\sum_{i=1}^n \hat W_{\pi',i}\,R_i,\qquad
\phi^{\mathrm{IPS}}_i=\hat W_{\pi',i}\,R^{\mathrm{OOF}}_i-\widehat V_{\mathrm{IPS}}.
\]

\paragraph{Calibrated DR (doubly robust with calibrated rewards).}
\[
\widehat V_{\mathrm{DR}}(\pi')
=\frac{1}{n}\sum_{i=1}^n\Big\{\hat g_{\pi'}(X_i)+\hat W_{\pi',i}\big(R_i-\hat q(Z_i)\big)\Big\},
\quad
\phi^{\mathrm{DR}}_i=\hat g_{\pi'}(X_i)+\hat W_{\pi',i}\big(R^{\mathrm{OOF}}_i-\hat q^{\mathrm{OOF}}_i\big)-\widehat V_{\mathrm{DR}}.
\]

\emph{Remark (DR vs Direct).} DR combines an outcome model with a weighted residual correction. Under good overlap, the IPS component can improve efficiency; under poor overlap (TTC $< 0.7$), it adds noise. In our experiments, DR effectively reduces to its outcome model, and Direct achieves the same accuracy without requiring logprobs (\cref{sec:experiments}, Q3).

\subsection{Influence-function stacking (variance-optimal convex ensembling)}
\label{ssec:method-stack}
For a small library $\mathcal{E}$ of regular estimators (e.g., DR/TMLE/MRDR variants, capped IPS), form
the matrix of centered IF columns $\Phi=[\phi^{(e)}]_{e\in\mathcal{E}}$ (computed OOF on the same folds),
estimate $\hat\Sigma=\tfrac{1}{n}\Phi^\top\Phi+\lambda I$, and solve the simplex QP
\[
\hat\alpha\in\arg\min_{\alpha\in\Delta}\ \alpha^\top \hat\Sigma\,\alpha,\qquad
\widehat V_{\mathrm{stack}}=\sum_{e\in\mathcal{E}}\hat\alpha_e\,\widehat V^{(e)},\quad
\phi^{\mathrm{stack}}=\sum_{e\in\mathcal{E}} \hat\alpha_e\,\phi^{(e)}.
\]

\subsection{Calibration-aware inference}
\label{ssec:method-oua}

Standard CIs treat the calibration function $\hat f$ as fixed, ignoring first-stage estimation error. When the oracle slice is small, this understates total variance and produces severely narrow intervals. In our experiments, naive CIs on uncalibrated scores achieve 0\% coverage; bootstrap inference with bias-corrected estimation achieves ${\sim}$95\% coverage for Direct mode.

\paragraph{Variance decomposition.}
Total variance decomposes into two components:
\[
\Var_{\text{total}}(\widehat V) = \Var_{\text{eval}}(\widehat V \mid \hat f) + \Var_{\text{cal}}(\hat f),
\]
where $\Var_{\text{eval}}$ is the standard evaluation variance (estimated via the IF) and $\Var_{\text{cal}}$ is the variance due to calibration uncertainty.

\paragraph{Delete-one-fold jackknife.}
We estimate $\Var_{\text{cal}}$ via a delete-one-oracle-fold jackknife:
\begin{enumerate}[leftmargin=*,itemsep=2pt,topsep=2pt]
\item Partition the oracle slice into $K$ folds.
\item For each fold $k \in \{1,\ldots,K\}$, refit calibration omitting fold $k$: $\hat f^{(-k)}$.
\item Compute the leave-one-fold-out estimate: $\widehat V^{(-k)} = \widehat V(\hat f^{(-k)})$.
\item Estimate calibration variance:
\[
\widehat{\Var}_{\text{cal}} = \frac{K-1}{K} \sum_{k=1}^K \big(\widehat V^{(-k)} - \bar V\big)^2, \quad \bar V = \frac{1}{K}\sum_k \widehat V^{(-k)}.
\]
\end{enumerate}

\paragraph{Calibration-aware confidence intervals.}
\[
\widehat{\Var}_{\text{total}} = \widehat{\Var}_{\text{eval}} + \widehat{\Var}_{\text{cal}}, \qquad
\text{CI}_{95\%} = \widehat V \pm 1.96 \sqrt{\widehat{\Var}_{\text{total}}}.
\]
The calibration uncertainty share $\widehat{\Var}_{\text{cal}} / \widehat{\Var}_{\text{total}}$ ranges from 0\% (100\% oracle fraction, where no calibration is needed) to over 50\% (5\% oracle fraction), explaining why ignoring it produces catastrophically narrow CIs.

\paragraph{Bootstrap with bias-corrected estimation.}
While the jackknife captures variance, it does not correct for bias in the plug-in estimator $\widehat V = n^{-1}\sum_i \hat f(S_i)$. This bias arises because $\hat f$ is fit on the oracle slice and applied to all samples, creating a covariance between the calibration function and the evaluation samples. We address this with the one-step estimator derived from the efficient influence function for missing-outcome estimation (\cref{app:eif-direct}; see also \citet{Chen2026EfficientLLMJudge} for the LLM-as-judge instantiation):
\begin{equation}
\label{eq:theta-aug}
\hat\theta_{\mathrm{aug}} = \frac{1}{n}\sum_{i=1}^n \hat f(S_i) + \frac{1}{|L|}\sum_{i \in L}\big(Y_i - \hat f^{(-i)}(S_i)\big),
\end{equation}
where $\hat f$ is the full calibrator, $\hat f^{(-i)}$ is the calibrator trained on folds excluding sample $i$'s cluster, and $L$ is the oracle slice. The first term uses the full calibrator for lower variance; the second term uses cluster-out-of-fold predictions to avoid overfitting in the residual correction. If oracle sampling is non-uniform (e.g., oversampling uncertain prompts), replace the constant $|L|/n$ with sample-specific inclusion probabilities $e(Z_i)$ in the general AIPW form (\cref{app:eif-direct}). Combined with bootstrap (refitting the calibrator in each replicate), this achieves ${\sim}$95\% coverage across oracle fractions (\cref{tab:uq-comparison}), compared to 70--89\% for jackknife-only variance estimation.

\paragraph{Computational cost.}
Bootstrap with $\hat\theta_{\mathrm{aug}}$ requires $B$ refits of the reward calibrator (we use $B{=}2{,}000$). Since isotonic regression is $O(n \log n)$, the overhead remains modest relative to the cost of teacher forcing and response generation. The full bootstrap procedure (\cref{alg:bootstrap-direct}) uses prompt-level cluster resampling to preserve within-prompt dependence \citep{CameronGelbachMiller2008}, with a resample-until-valid loop ensuring each replicate has sufficient oracle labels.

% Theory: EIF & Convex-Knowledge Projection statements (proofs in appendix)
\section{Theory: EIF, Model Restriction, and Efficiency}
\label{sec:theory}

This section explains why the projections in CJE are not heuristics but information-optimal. We show: (i) using a judge as a surrogate can only reduce variance, (ii) encoding justified restrictions like monotonicity and mean-one weights tightens efficiency bounds, and (iii) our estimators attain these bounds under standard conditions. Proofs and technical lemmas are deferred to the appendix.

\paragraph{Surrogate model and EIF.}
Let $R^\star=\E[Y\mid S]$ and $m^\star(S)=\E[W_{\pi'}\mid S]$. Under \emph{mean sufficiency}
$\E[Y\mid X,A,S]=R^\star(S)$,
\[
V(\pi')=\E\!\big[m^\star(S)\,R^\star(S)\big],
\qquad
\phi_{\mathrm{sur}}(O;\pi') \;=\; g^\star_{\pi',R}(X) \;+\; m^\star(S)\!\big(R^\star-q^\star_R(X,A)\big) \;-\; V(\pi'),
\]
with $q^\star_R(x,a)=\E[R^\star\mid X{=}x,A{=}a]$ and $g^\star_{\pi',R}(x)=\sum_a \pi'(a\mid x)\,q^\star_R(x,a)$.

\begin{theorem}[Surrogate EIF and variance reduction]
\label{thm:sur-eif}
Let $\phi_{\mathrm{uncon}}$ be the canonical gradient in the nonparametric model that does not use $S$.
Then $\phi_{\mathrm{sur}}$ is the canonical gradient in the surrogate model, and
$\Var(\phi_{\mathrm{sur}})\le \Var(\phi_{\mathrm{uncon}})$,
with strict inequality unless $W_{\pi'}$ is $\sigma(S)$-measurable and $R^\star$ is degenerate.
\end{theorem}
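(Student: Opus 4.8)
The plan is to establish this in two parts: first, that $\phi_{\mathrm{sur}}$ is the canonical gradient (efficient influence function) in the surrogate model, and second, that its variance does not exceed that of $\phi_{\mathrm{uncon}}$, with the stated equality condition. For the first part, I would work with the standard tangent-space characterization: in the nonparametric model not using $S$, the canonical gradient is the AIPW form $\phi_{\mathrm{uncon}}(O) = g^\star_{\pi'}(X) + W_{\pi'}(Y - q^\star(X,A)) - V(\pi')$ with $q^\star(x,a) = \E[Y \mid X{=}x, A{=}a]$. The surrogate model imposes \emph{mean sufficiency} $\E[Y \mid X,A,S] = R^\star(S)$, which is a restriction on the conditional law of $Y$. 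I would argue that under this restriction the relevant nuisance $q^\star(x,a) = \E[R^\star(S) \mid X{=}x, A{=}a] = q^\star_R(x,a)$, and that the efficient influence function is obtained by projecting the outcome-noise component $Y - q^\star_R$ onto the restricted tangent space. The key observation is that under mean sufficiency the part of $Y - R^\star(S)$ that is orthogonal to $\sigma(S)$ carries no information about $V(\pi') = \E[m^\star(S) R^\star(S)]$ once we condition appropriately, so the efficient gradient replaces $Y$ by $R^\star$ and $W_{\pi'}$ by its $S$-conditional mean $m^\star(S)$ in the weighting term. Concretely I would verify the pathwise-derivative identity $\frac{d}{dt} V(\pi_t)\big|_{t=0} = \E[\phi_{\mathrm{sur}} \cdot \text{score}]$ for all regular parametric submodels of the surrogate model, and check $\phi_{\mathrm{sur}}$ lies in the surrogate tangent space — this pins it down as \emph{the} canonical gradient.

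For the variance comparison, I would use the general semiparametric fact (which the paper invokes elsewhere as ``efficiency via model restriction'') that a canonical gradient in a submodel is the projection of any gradient from the larger model onto the smaller tangent space, hence has weakly smaller variance. But here it is cleaner to compute directly. Write $\phi_{\mathrm{uncon}} = \phi_{\mathrm{sur}} + D$ where $D = [g^\star_{\pi'}(X) - g^\star_{\pi',R}(X)] + W_{\pi'}(Y - q^\star) - m^\star(S)(R^\star - q^\star_R)$. Under mean sufficiency one shows $D$ decomposes into two mean-zero pieces that are each orthogonal to $\phi_{\mathrm{sur}}$: namely $W_{\pi'}(Y - R^\star(S))$ (the residual outcome noise, mean zero given $(X,A,S)$ by mean sufficiency, and uncorrelated with $\phi_{\mathrm{sur}}$ because $\phi_{\mathrm{sur}}$ is $\sigma(X,A,S)$-measurable) and $(W_{\pi'} - m^\star(S))(R^\star(S) - q^\star_R(X,A))$ together with the matching $g^\star$ difference, which forms a martingale-type term with mean zero given $S$. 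Once $\E[\phi_{\mathrm{sur}} D] = 0$ is verified, the Pythagorean identity $\Var(\phi_{\mathrm{uncon}}) = \Var(\phi_{\mathrm{sur}}) + \Var(D) \ge \Var(\phi_{\mathrm{sur}})$ follows immediately, and equality holds iff $\Var(D) = 0$, i.e. iff $W_{\pi'}(Y - R^\star(S)) = 0$ a.s. and $(W_{\pi'} - m^\star(S))(R^\star - q^\star_R) = 0$ a.s. — the former forcing $R^\star$ degenerate (no residual outcome variation where weight is positive) and the latter forcing $W_{\pi'}$ to be $\sigma(S)$-measurable (or $R^\star$ constant), matching the stated condition.

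**The main obstacle** I anticipate is getting the orthogonality bookkeeping for $D$ exactly right — in particular, confirming that the $g^\star_{\pi'}(X) - g^\star_{\pi',R}(X)$ term pairs correctly with the $(W_{\pi'} - m^\star(S))(R^\star - q^\star_R)$ term so that their sum is genuinely orthogonal to $\phi_{\mathrm{sur}}$, rather than only the combined $D$ being orthogonal. This requires carefully using the tower property through the nested conditioning $\sigma(S) \subseteq \sigma(X,A,S)$ and the definitions $q^\star_R(x,a) = \E[R^\star \mid X{=}x,A{=}a]$, $g^\star_{\pi',R}(x) = \E_{\pi'}[q^\star_R \mid X{=}x]$, plus the identity $\E[W_{\pi'} h(X,A)] = \E_{\pi'}[h(X,A)]$ for the weight-change-of-measure. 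A secondary subtlety is stating the equality case precisely: ``$W_{\pi'}$ is $\sigma(S)$-measurable'' should really be ``$W_{\pi'}$ equals $m^\star(S)$ on the support relevant to the second residual,'' and ``$R^\star$ degenerate'' should be read as ``$\Var(Y \mid X,A) = 0$ wherever $W_{\pi'} > 0$''; I would state the clean sufficient version in the theorem and relegate the sharp iff to the proof. The pathwise-derivative verification in part one is routine once the tangent space of the surrogate model is written down, so I do not expect difficulty there beyond care with regularity conditions (which I would inherit from the cited semiparametric references).
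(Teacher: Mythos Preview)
Your proposal is correct and follows essentially the same route as the paper: derive the surrogate EIF by projecting onto the restricted tangent space (the paper dispatches this with a citation to \citet{Bickel1993,VaartWellner2000}), then obtain the variance inequality via a Pythagorean argument exploiting that $m^\star(S)=\E[W_{\pi'}\mid S]$ is the $L^2$ projection of $W_{\pi'}$. Your explicit decomposition $\phi_{\mathrm{uncon}}=\phi_{\mathrm{sur}}+D$ with orthogonality of $D$ is exactly the content the paper compresses into the single sentence ``Pythagoras (or \eqref{eq:orth}) implies $\Var(\phi_{\mathrm{sur}})\le\Var(\phi_{\mathrm{uncon}})$''; one simplification you will find once you start the bookkeeping is that under mean sufficiency $q^\star=q^\star_R$ and hence $g^\star_{\pi'}=g^\star_{\pi',R}$, so the term you worry about pairing vanishes outright and $D$ reduces to $W_{\pi'}(Y-R^\star)+(W_{\pi'}-m^\star(S))(R^\star-q^\star_R)$.
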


\noindent\emph{In plain terms:} If the judge carries any information about the oracle, using it can only help variance and never hurt it. Coarser judges (garblings) are strictly worse. (This variance reduction from conditioning is a standard result; we state it to establish the foundation for subsequent results.)

\paragraph{Efficiency via Model Restriction.}
We now formalize a standard principle from semiparametric efficiency theory and show how it unifies the CJE projections.
Let $L^2_0$ be the mean-zero Hilbert space with inner product $\langle f,g\rangle=\E[fg]$.
Let $\mathcal{M}$ be a baseline semiparametric model with tangent space $T(P)$ and canonical gradient
(EIF) $\phi^\star$. When justified knowledge defines a restricted model $\mathcal{M}_\mathcal{C} \subset \mathcal{M}$
(e.g., mean sufficiency in $S$, monotonicity of conditional expectations), the restricted tangent space
$T_\mathcal{C}(P) \subseteq T(P)$ yields a new canonical gradient $\phi^\star_\mathcal{C}$.

\begin{theorem}[Efficiency via Model Restriction]
\label{thm:knowledge-riesz}%
\label{thm:ckp}% aliases kept for backward references
\label{thm:model-restriction}%
\emph{(i) Efficiency improvement.}
The canonical gradient in the restricted model satisfies
$\|\phi^\star_\mathcal{C}\|_2^2 \le \|\phi^\star\|_2^2$, with equality iff $T_\mathcal{C}(P) = T(P)$.
If $T_{\mathcal{C}_1}(P) \supseteq T_{\mathcal{C}_2}(P)$ (i.e., $\mathcal{C}_2$ imposes stronger restrictions), then $\|\phi^\star_{\mathcal{C}_2}\|_2^2\le \|\phi^\star_{\mathcal{C}_1}\|_2^2$.
\emph{(ii) Attainability.} One-step/TMLE estimators with cross-fitting attain the efficiency bound
$\Var(\phi^\star_\mathcal{C})$ in the restricted model $\mathcal{M}_\mathcal{C}$.
\end{theorem}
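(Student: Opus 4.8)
The plan is to work entirely in the Hilbert space $L^2_0(P)$ with inner product $\ip{f}{h}=\E[fh]$, exploiting the characterization of a canonical gradient as a Riesz representer on the tangent space. Recall that in a regular model $\mathcal{M}$ with closed tangent space $T(P)$ the canonical gradient $\phi^\star$ is the unique element of $T(P)$ with $\ip{\phi^\star}{g}=\dot\psi(g)$ for every score $g\in T(P)$, where $\dot\psi$ is the pathwise derivative of the target functional; the same holds in $\mathcal{M}_\mathcal{C}$ with $T_\mathcal{C}(P)\subseteq T(P)$ in place of $T(P)$. The single identity that drives part~(i) is $\phi^\star_\mathcal{C}=\Pi_{\mathcal{C}}\phi^\star$, where $\Pi_{\mathcal{C}}$ is orthogonal projection onto $T_\mathcal{C}(P)$. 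First I would establish this: $\Pi_{\mathcal{C}}\phi^\star\in T_\mathcal{C}(P)$, and for any $g\in T_\mathcal{C}(P)$ one has $\ip{\Pi_{\mathcal{C}}\phi^\star}{g}=\ip{\phi^\star}{g}=\dot\psi(g)$ --- the first equality because $g$ lies in the range of $\Pi_\mathcal{C}$, the second because $g\in T(P)$ and $\phi^\star$ is the representer in $\mathcal{M}$ --- so by uniqueness of the representer in $T_\mathcal{C}(P)$, $\Pi_{\mathcal{C}}\phi^\star=\phi^\star_\mathcal{C}$.

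Part~(i) then follows mechanically. Non-expansiveness of orthogonal projection gives $\norm{\phi^\star_\mathcal{C}}_2=\norm{\Pi_{\mathcal{C}}\phi^\star}_2\le\norm{\phi^\star}_2$, and since $\Var(\phi)=\norm{\phi}_2^2$ for mean-zero $\phi$ this is the efficiency-bound inequality. Equality holds exactly when $\phi^\star\in T_\mathcal{C}(P)$, i.e.\ when the extra restrictions carry no first-order information about $\psi$; reading this as ``iff $T_\mathcal{C}(P)=T(P)$'' requires the standing non-degeneracy assumption that $\psi$ varies along every deleted score direction, which I would flag explicitly. For the nesting claim, if $T_{\mathcal{C}_2}(P)\subseteq T_{\mathcal{C}_1}(P)\subseteq T(P)$, applying the identity twice together with the tower property of nested projections gives $\phi^\star_{\mathcal{C}_2}=\Pi_{\mathcal{C}_2}\phi^\star=\Pi_{\mathcal{C}_2}\Pi_{\mathcal{C}_1}\phi^\star=\Pi_{\mathcal{C}_2}\phi^\star_{\mathcal{C}_1}$, whence $\norm{\phi^\star_{\mathcal{C}_2}}_2\le\norm{\phi^\star_{\mathcal{C}_1}}_2$.

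For part~(ii) I would invoke the standard asymptotic-linearity theory for cross-fitted one-step / TMLE estimators, with the subtlety that the \emph{target} influence function is the restricted-model gradient $\phi^\star_\mathcal{C}$, whose nuisances are the constrained projections (the isotonic $R^\star(S)=\E[Y\mid S]$, the $S$-measurable unit-mean weight $m^\star(S)=\E[W_{\pi'}\mid S]$, and the induced $q^\star_R,g^\star_{\pi',R}$ from \cref{thm:sur-eif}). The three ingredients I would verify are: (a) the learners are fit \emph{inside} $\mathcal{M}_\mathcal{C}$ --- PAVA keeps $\hat f$ in the monotone cone and SIMCal-W keeps $\hat W$ unit-mean and $S$-monotone --- so their cross-fitted plug-ins are consistent for the restricted targets; (b) the second-order (von Mises) remainder is $o_P(n^{-1/2})$: in the DR form it factorizes into a product of nuisance $L^2(P)$ errors, and the shape-constrained rates in play (isotonic at roughly $n^{-1/3}$, the calibrated weight similarly) comfortably beat the $n^{-1/4}\times n^{-1/4}$ requirement, while in the direct form mean-honesty of AutoCal-R removes the leading bias; (c) cross-fitting eliminates the empirical-process (Donsker) term. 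Combining, $\sqrt n(\widehat V-V(\pi'))=n^{-1/2}\sum_i\phi^\star_\mathcal{C}(O_i)+o_P(1)\to\mathcal{N}\!\big(0,\Var(\phi^\star_\mathcal{C})\big)$, so the estimator is regular and, by the convolution theorem, efficient in $\mathcal{M}_\mathcal{C}$; I would cite \citet{Bickel1993,Chernozhukov2018,vanDerLaanRose2011TL}.

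The hard part is bookkeeping rather than a hard inequality. The delicate point in~(i) is the equality clause: strictly, equality holds iff $\phi^\star$ already lies in $T_\mathcal{C}(P)$, so the clean ``iff $T_\mathcal{C}=T$'' reading needs the richness caveat above. The delicate point in~(ii) is ensuring the constrained learners are consistent for the \emph{restricted} projections $q^\star_R,m^\star$ (not their unconstrained counterparts) at rate $o_P(n^{-1/4})$ in $L^2(P)$ --- it is this identification, together with the remainder being a product of exactly these errors, that pins the influence function to $\phi^\star_\mathcal{C}$ rather than to the baseline $\phi^\star$. Once those two points are handled, the rest is the textbook DML/TMLE template.
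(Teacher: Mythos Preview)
Your proposal is correct and follows essentially the same route as the paper: identify $\phi^\star_\mathcal{C}=\Pi_{T_\mathcal{C}}\phi^\star$ via the Riesz-representer property, then apply Pythagoras (non-expansiveness of projection) for part~(i), and cite the standard cross-fitted one-step/TMLE template for part~(ii). If anything you are more careful than the paper, which glosses over the equality caveat (asserting ``iff $T_\mathcal{C}=T$'' where strictly the condition is $\phi^\star\in T_\mathcal{C}$) and reduces attainability to a bare citation; your explicit flagging of that caveat and the nuisance-identification point in~(ii) are welcome additions.
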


\noindent\emph{In plain terms:} Every justified assumption we encode as a model restriction is ``free variance reduction'': we shrink the tangent space without changing the estimand.
If a restriction is violated, projection introduces bias; our diagnostics and gates (\cref{sec:background,sec:limitations}) detect the regimes where this matters.

\noindent\emph{Remark (prior art and novelty).} The efficiency improvement from model restriction is well-known in semiparametric theory \citep{Newey1990Semiparametric,Bickel1993}. Our contribution is not the principle itself but its systematic application: we show that reward calibration, weight stabilization, and influence-function stacking are all instances of model restriction, and we provide the operational machinery (transport audit, CLE bound and TTC/AB diagnostics, calibration-aware inference) that makes the theory usable for LLM evaluation.

\paragraph{Consequences for \CJE}
(i) \emph{Conditioning:} taking $\mathcal{C}=\{f:\,f=\E[f\mid S]\}$ recovers \cref{thm:sur-eif}.
(ii) \emph{Mean-one monotone weights:} restricting the weight component to
$\{w:\E[w]=1,\; w\uparrow S\}$ motivates weight stabilization; the exact $\mathrm{IsoMeanOne}_S$ projection weakly reduces dispersion in finite samples by majorization.
(iii) \emph{Stacking:} restricting to the convex hull of candidate IF columns gives the
variance-optimal convex ensemble.

\paragraph{Direct mode: missing-data estimation.}
When fresh responses can be generated under $\pi'$, evaluation becomes a missing-data problem: we observe cheap surrogate predictions $\hat\mu(Z)$ for all samples but expensive oracle labels $Y$ only for a labeled subset $L$.

\begin{prop}[Direct: EIF and one-step estimator]
\label{prop:direct-eif}
Let $\theta := \E_{\pi'}[Y]$ be the policy value. Under MAR ($L \perp Y \mid Z$) with $e(z) := \Pr(L{=}1 \mid Z{=}z)$, the efficient influence function is
\[
\phi_{\mathrm{Direct}}(O) = \mu(Z) - \theta + \frac{L}{e(Z)}\big(Y - \mu(Z)\big),
\]
where $\mu(z) = \E[Y \mid Z{=}z]$. Under approximately uniform labeling ($e \approx |L|/n$), the one-step estimator is $\hat\theta_{\mathrm{aug}}$ (\cref{eq:theta-aug}), which attains the semiparametric efficiency bound.
\end{prop}

\noindent\emph{In plain terms:} Direct is the recommended default when fresh generation is feasible. It requires no overlap assumption and achieves optimal efficiency for missing-data estimation. (Proof in \cref{app:eif-direct}.)

\paragraph{OPE mode: reweighting logged data.}
When fresh generation is unavailable, importance-weighted estimators reweight logged data from $\pi_0$ to estimate $V(\pi')$. The following results apply to this regime.

\begin{prop}[Cal-IPS: mean correctness and dispersion control]
\label{prop:calips}
Let $R=\hat f(Z(S))$ be the reward calibrator (monotone in $S$ or two-stage index; cross-fitted), and let
$W^{\mathrm{m1}}_{\pi'}\!\triangleq W_{\pi'}/\bar W_n$ denote the self-normalized (SNIPS) ratios, where $\bar W_n = n^{-1}\sum_i W_{\pi',i}$.
Let $\hat W_{\pi'}$ be stabilized weights (OOF stack $+$ variance guard $\rho\ge1$).
Then $\widehat V_{\mathrm{IPS}}=\tfrac{1}{n}\sum_i \hat W_{\pi',i}R_i \to_p V(\pi')$.
Under the exact $\mathrm{IsoMeanOne}_S$ projection (see App.~\ref{app:projections}),
$\operatorname{Var}_n(\hat W_{\pi'})\le \rho\,\operatorname{Var}_n(W^{\mathrm{m1}}_{\pi'})$ and
$\ESS(\hat W_{\pi'})\ge \ESS(W^{\mathrm{m1}}_{\pi'})$; the reference implementation uses a simpler normalization that empirically yields similar ESS gains.
\end{prop}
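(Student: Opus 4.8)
The plan is to establish the three claims separately. Claims (ii)–(iii) — the variance and ESS inequalities — are deterministic finite‑sample facts that follow from the majorization property of isotonic projection plus the mean‑one normalization, so they are short. Claim (i) — consistency of $\widehat V_{\IPS}$ — is the asymptotic one and rests on mean sufficiency, cross‑fitting, and $L^2$‑consistency of AutoCal‑R. I would first record the identity that drives everything: under mean sufficiency $\E[Y\mid X,A,S]=R^\star(S)$, change of measure gives $V(\pi')=\E_{\pi'}[R^\star(S)]=\E_{\pi_0}[W_{\pi'}\,R^\star(S)]$, and, writing $m^\star(S)=\E_{\pi_0}[W_{\pi'}\mid S]$, also $V(\pi')=\E_{\pi_0}[m^\star(S)\,R^\star(S)]$ because $R^\star$ is $\sigma(S)$‑measurable.

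For consistency, decompose
\[
\tfrac1n\textstyle\sum_i \hat W_{\pi',i}R_i \;=\; \tfrac1n\sum_i \hat W_{\pi',i}R^\star(S_i) \;+\; \tfrac1n\sum_i \hat W_{\pi',i}\big(R^{\mathrm{OOF}}_i - R^\star(S_i)\big).
\]
The second term is $o_p(1)$: by cross‑fitting, each fold's calibrator $\hat f^{(-k)}$ and weight map $\hat\psi^{(-k)}$ are independent of that fold's data, so conditionally the term is bounded via Cauchy–Schwarz by $\|\hat f^{(-k)}(S)-R^\star(S)\|_{L^2(\pi_0)}\cdot\|\hat\psi^{(-k)}\|_{L^2(\pi_0)}$; the first factor $\to 0$ by $L^2$‑consistency of AutoCal‑R (mean sufficiency makes $R^\star$ the population target of the isotonic/two‑stage fit, and the $1$‑SE mode choice is asymptotically benign), and the second is bounded because the (variance‑nonincreasing) IsoMeanOne projection gives $\E_{\pi_0}[\hat\psi^2]\le\E_{\pi_0}[(W^{m1})^2]\to 1+\Var_{\pi_0}(W_{\pi'})<\infty$ under the maintained moment condition $\E_{\pi_0}[W_{\pi'}^2]<\infty$. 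For the first term, a per‑fold conditional LLN together with stability/consistency of the PAVA fit gives convergence to $\E_{\pi_0}[\psi^\star(S)R^\star(S)]$, where $\psi^\star$ is the population mean‑one isotone projection of $m^\star$ (respectively the population stack). This equals $V(\pi')=\E_{\pi_0}[m^\star R^\star]$ exactly when the cone projection is \emph{transport‑unbiased} at $R^\star$, i.e.\ $\langle\psi^\star-m^\star,\,R^\star\rangle_{L^2(\pi_0)}=0$; that holds when $m^\star$ is $S$‑monotone (PAVA then recovers it in population — the maintained overlap/monotonicity condition behind the ``exact IsoMeanOne'' qualifier), or, more loosely, when the two‑directional stack effectively spans it. \emph{This transport‑unbiasedness is the main obstacle}: the KKT conditions for the isotone cone give only one‑sided orthogonality of the residual, so without $S$‑monotonicity of $m^\star$ there is a genuine transport bias — precisely why calibrated IPS is the weak link, and why the paper's bias‑free claim sits under the exact projection (the CLE bound then governs what remains, namely a precision, not a bias, barrier).

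For the variance bound, under the exact $\mathrm{IsoMeanOne}_S$ projection $\hat W_{\pi'}$ is (pre‑guard) the Euclidean projection of $W^{m1}_{\pi'}$ onto the intersection of the isotone cone with the sample‑mean‑one hyperplane. Isotonic projection is majorization‑decreasing (Hardy–Littlewood–Pólya; Marshall–Olkin–Arnold) and mean‑preserving, so $\hat W_{\pi'}\prec W^{m1}_{\pi'}$, and convexity of $t\mapsto t^2$ yields $\tfrac1n\sum_i\hat W_{\pi',i}^2\le\tfrac1n\sum_i (W^{m1}_{\pi',i})^2$, i.e.\ $\Var_n(\hat W_{\pi'})\le\Var_n(W^{m1}_{\pi'})\le\rho\,\Var_n(W^{m1}_{\pi'})$ since $\rho\ge1$. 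The variance guard is therefore inactive; and if the more conservative absolute‑threshold variant does fire, $W^{\mathrm{blend}}=1+\alpha(W^{\mathrm{stack}}-1)$ with $\alpha\in(0,1]$ has mean one and $\Var_n(W^{\mathrm{blend}})=\alpha^2\Var_n(W^{\mathrm{stack}})$, so the bound survives the final renormalization. For the reference stacking implementation the same inequality holds up to $o_p(1)$ rescaling terms: convexity of the variance gives $\Var_n\!\big(\sum_c\hat\beta_c W_c^{\mathrm{OOF}}\big)\le\max_c\Var_n(W_c^{\mathrm{OOF}})\lesssim\Var_n(W^{m1}_{\pi'})$ — the identity candidate has zero variance and each isotone candidate is (approximately, out of fold) majorized — which is the ``similar ESS gains'' remark.

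Finally, the ESS claim is pure algebra. Since $\hat W_{\pi'}$ and $W^{m1}_{\pi'}$ both have sample mean one, $\tfrac1n\sum_iW_i^2=\Var_n(W)+1$ and
\[
\ESS(W)=\frac{\big(\sum_iW_i\big)^2}{\sum_iW_i^2}=\frac{n}{1+\Var_n(W)}=\frac{n}{1+\CV^2(W)},
\]
which is strictly decreasing in $\Var_n(W)$; substituting $\Var_n(\hat W_{\pi'})\le\Var_n(W^{m1}_{\pi'})$ from the previous step gives $\ESS(\hat W_{\pi'})\ge\ESS(W^{m1}_{\pi'})$. In summary, steps (ii)–(iii) are short and deterministic (majorization plus the ESS identity), and step (i) is a routine cross‑fitting/Cauchy–Schwarz/LLN argument except for the one delicate point — that the $S$‑monotone cone projection does not bias the transport functional — which is exactly the assumption isolating calibrated IPS's fragility.
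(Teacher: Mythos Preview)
Your argument is essentially correct and lands on the same ingredients as the paper: majorization (the paper's Lemma~\ref{lem:maj}) for the variance/ESS claims, and $L^2$-consistency with cross-fitting plus Cauchy--Schwarz for consistency. The variance and ESS parts are identical in substance.

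The main difference is the decomposition you use for consistency. The paper writes the standard three-term linearization
\[
\widehat V_{\IPS}-V(\pi')=(P_n-P)\big[m^\star R^\star\big] + P\big[(\hat W_{\pi'}-m^\star)R^\star\big]+ P\big[m^\star(\hat f-R^\star)\big] + \mathrm{rem},
\]
handling the empirical-process term by the CLT and the two bias terms by $L^2$-consistency of SIMCal-W and AutoCal-R respectively (the remainder being second-order by cross-fitting). You instead split $\hat W R = \hat W R^\star + \hat W(R-R^\star)$, kill the second piece by Cauchy--Schwarz, and then argue the first converges to $\E[\psi^\star R^\star]$ via a per-fold LLN. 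Both routes work; the paper's is slightly cleaner because it separates the weight-error and calibrator-error channels symmetrically and avoids having to argue convergence of $\hat W$ to a population object inside the LLN step.

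Where your write-up is sharper than the paper's is in isolating the \emph{transport-unbiasedness} condition $\langle \psi^\star - m^\star, R^\star\rangle = 0$. The paper's proof simply asserts that $P[(\hat W_{\pi'}-m^\star)R^\star]\to 0$ ``by $L^2$-consistency of SIMCal-W,'' but its standing assumption (R2) only guarantees $\hat W_{\pi'}\to m^\star_\rho$, the cone projection, not $m^\star$ itself; the gap is closed by the clause in (R2) that $m^\star_\rho = m^\star$ when $\E[W_{\pi'}\mid S]$ is already $S$-monotone. You have correctly identified that this monotonicity is the load-bearing assumption and that without it the isotone projection biases the transport functional---the paper makes the same assumption but buries it in (R2) rather than flagging it in the proof.
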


\begin{theorem}[Calibrated DR: $\sqrt{n}$ limits and efficiency]
\label{thm:dr-eff}
Assume mean sufficiency, suitable tails/moments, and cross-fitted nuisances satisfying the one-of-two
rate condition
$\|\hat q-q^\star_R\|_2\cdot \|\hat W_{\pi'}-m^\star\|_2 = o_p(n^{-1/2})$
(e.g., either factor $=o_p(n^{-1/4})$). Then
\[
\sqrt{n}\big(\widehat V_{\mathrm{DR}}-V(\pi')\big)\ \leadsto\ \mathcal{N}\!\big(0,\Var(\phi_{\mathrm{sur}})\big),
\]
i.e., calibrated DR attains the surrogate efficiency bound. (Efficiency attainment under the product-rate condition is a standard property of DR estimators \citep{BangRobins2005,Chernozhukov2018}; our contribution is verifying that the CJE construction satisfies these conditions.)
\end{theorem}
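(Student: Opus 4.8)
The plan is to recognize $\widehat V_{\mathrm{DR}}$ as the cross-fitted one-step (AIPW/DR) estimator associated with the surrogate efficient influence function $\phi_{\mathrm{sur}}$ of \Cref{thm:sur-eif}, and then to verify the three regularity inputs of the standard cross-fitted DR / double-machine-learning central limit theorem \citep{BangRobins2005,Chernozhukov2018}: a cross-fitting step that kills the empirical-process term, a second-order remainder controlled by a product of nuisance errors, and a moment/Lindeberg condition for the leading term. Writing $\hat\phi$ for the plug-in version of $\phi_{\mathrm{sur}}$ with $(\hat g_{\pi'},\hat q,\hat W_{\pi'})$ substituted for $(g^\star_{\pi',R},q^\star_R,m^\star)$ (and $\hat f(Z(\cdot))$ for $R^\star$), I would decompose
\[
\widehat V_{\mathrm{DR}}-V(\pi') \;=\; \underbrace{\tfrac1n{\textstyle\sum_i}\,\phi_{\mathrm{sur}}(O_i;\pi')}_{\text{(I) leading term}} \;+\; \underbrace{(\mathbb{P}_n-P)(\hat\phi-\phi_{\mathrm{sur}})}_{\text{(II) empirical process}} \;+\; \underbrace{R_n}_{\text{(III) second-order bias}}.
\]

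For (I), under mean sufficiency \Cref{thm:sur-eif} gives $\E[\phi_{\mathrm{sur}}]=0$ and the identity $V(\pi')=\E[m^\star(S)R^\star(S)]$, so the assumed tail/moment conditions yield $\E[\phi_{\mathrm{sur}}^2]<\infty$ and the Lindeberg condition, whence $\sqrt n\,(\mathrm{I})\leadsto\mathcal{N}\!\big(0,\Var(\phi_{\mathrm{sur}})\big)$. For (II), because every nuisance is $K$-fold cross-fitted, conditionally on the training folds the summands of (II) are independent, mean zero, with conditional second moment $\le\|\hat\phi-\phi_{\mathrm{sur}}\|_2^2/n$; consistency of the cross-fitted nuisances (a consequence of the rate hypothesis together with $\|\hat f(Z(\cdot))-R^\star\|_2=o_p(1)$) makes this $o_p(1/n)$, so $\sqrt n\,(\mathrm{II})=o_p(1)$ by conditional Chebyshev and a union bound over the $K$ folds — no Donsker/entropy assumption is needed. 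For (III), Neyman orthogonality of $\phi_{\mathrm{sur}}$ in both the outcome-regression and weight directions forces $R_n$ into the bilinear form $R_n=\tfrac1K\sum_k\langle m^\star-\hat W^{(-k)}_{\pi'},\,\hat q^{(-k)}-q^\star_R\rangle_{P}+o_p(n^{-1/2})$, which by Cauchy--Schwarz is bounded by $\max_k\|\hat W^{(-k)}_{\pi'}-m^\star\|_2\,\|\hat q^{(-k)}-q^\star_R\|_2=o_p(n^{-1/2})$ under the one-of-two rate condition. Combining the three pieces with Slutsky, and using $\tfrac1n\sum_i(\phi^{\mathrm{DR}}_i)^2\to_p\Var(\phi_{\mathrm{sur}})$ (law of large numbers for the consistent plug-in IF), gives the stated CLT; efficiency is then immediate because $\phi_{\mathrm{sur}}$ is the canonical gradient of the surrogate model (\Cref{thm:sur-eif,thm:model-restriction}). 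The OUA jackknife of \Cref{ssec:method-oua} supplies the finite-sample correction for the calibration component, which is asymptotically negligible under these rates but material at $n\approx5000$ with a $5\%$ oracle slice.

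The main obstacle is term (III): one must check that SIMCal-W and the mean-preserving reward calibrator do not inject first-order bias beyond the advertised product. For the weights this reduces to showing that the population target of the OOF stack is $m^\star(S)=\E[W_{\pi'}\mid S]$ — unit-mean and $\sigma(S)$-measurable — so that the identity $V(\pi')=\E[m^\star(S)R^\star(S)]$ holds exactly under mean sufficiency and the stacking and variance-guard steps, each mean-preserving, only move $\hat W_{\pi'}$ within the $\{\E[\cdot]=1\}$ slice; the exact $\mathrm{IsoMeanOne}_S$ projection of \Cref{prop:calips} gives this, and the cross-fitting plus the 1-SE model-selection rule of AutoCal-R handle the analogous point for $\hat f$, delivering $\|\hat f(Z(\cdot))-R^\star\|_2=o_p(n^{-1/4})$ (e.g.\ the $n^{-1/3}$ $L^2$ rate of isotonic regression on a constant fraction of the sample), so that the reward error likewise enters only the bilinear remainder (paired against the weight error, or absorbed into $\hat q-q^\star_R$). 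Everything else is bookkeeping around the cited DR theory.
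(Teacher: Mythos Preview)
Your proposal is correct and follows essentially the same route as the paper's proof: both rely on the standard cross-fitted DR/DML decomposition into a leading $(P_n-P)\phi_{\mathrm{sur}}$ term plus a bilinear second-order remainder $P\big[(\hat W_{\pi'}-m^\star)(q^\star_R-\hat q)\big]$ controlled by Cauchy--Schwarz and the product-rate condition, with cross-fitting handling the empirical-process residual. The paper's version is terser (it absorbs your term (II) directly into the $o_p(n^{-1/2})$ via \Cref{lem:remainder}), while you spell out the conditional-Chebyshev argument and add useful verification that SIMCal-W and AutoCal-R target the right population objects; your final sentence on OUA is extraneous to the asymptotic statement but harmless.
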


\paragraph{Additional results (in appendix).}
The budgeted information bound (\cref{thm:budgeted}) formalizes variance caps on importance weights: limiting weight variance ($\rho$) trades a bit of asymptotic efficiency for much better finite-sample behavior. Influence-function stacking (\cref{thm:stack}) shows that variance-optimal convex combinations of estimators achieve variance no worse than the best component. Both results apply to OPE regimes (IPS/DR); Direct mode does not use weights and is unaffected. Full statements and proofs are in \cref{app:budgeted-stacking}.

\begin{prop}[Calibration-aware jackknife variance estimation]
\label{prop:oua}
Let $\widehat V(\hat f)$ be any \CJE\ estimator that uses $R=\hat f(S)$ (cross-fitted along the IF path).
Under $L^2(P_S)$-consistency of $\hat f$ and non-overlapping folds, the delete-one-oracle-fold jackknife provides a variance estimate for the calibration-induced component, so that
$\widehat{\Var}_{\mathrm{total}}=\widehat{\Var}_{\mathrm{main}}+\widehat{\Var}_{\mathrm{cal}}$
estimates total variance.
Theoretical consistency requires regularity conditions on the functional $f\mapsto \widehat V(f)$; isotonic regression's non-smoothness complicates formal guarantees. Empirically, we verify valid coverage across 50 seeds (\cref{sec:experiments}).
\end{prop}

\paragraph{Policy-wise mean transport test.}
Let $f(S,X)$ be a calibration function learned on the oracle slice under the logging policy $\pi_0$. For a target policy $\pi'$, define the oracle and surrogate values
$V_{\pi'}^{\mathrm{oracle}} := \E_{\pi'}[Y]$ and $V_{\pi'}^{\mathrm{sur}} := \E_{\pi'}[f(S,X)]$,
and the residual $\varepsilon_{\pi'} := Y - f(S,X)$.

\begin{prop}[Mean transport equivalence]
\label{prop:mean-transport}
$V_{\pi'}^{\mathrm{oracle}} - V_{\pi'}^{\mathrm{sur}} = \E_{\pi'}[\varepsilon_{\pi'}]$.
Therefore, the null hypothesis $H_{0,\pi'}: \E_{\pi'}[\varepsilon_{\pi'}] = 0$ is equivalent to mean-unbiasedness of the surrogate for policy $\pi'$: $H_{0,\pi'} \iff V_{\pi'}^{\mathrm{oracle}} = V_{\pi'}^{\mathrm{sur}}$.
\end{prop}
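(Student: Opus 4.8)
The statement is an algebraic identity, so the plan is simply to expand the residual and invoke linearity of expectation. First I would fix the probability space: under the counterfactual policy $\pi'$, the triple $(X, A, Y)$ has a well-defined joint law (the interventional distribution used to define $V(\pi') = \E[Y(\pi')]$ in the setup), and $S = s(X,A)$ is a deterministic function of $(X,A)$ on that space, so $f(S,X)$ and $Y$ are both random variables on the \emph{same} probability space under $\pi'$. I would first note the integrability bookkeeping: $\E_{\pi'}[|Y|] < \infty$ and $\E_{\pi'}[|f(S,X)|] < \infty$ are exactly the conditions under which $V_{\pi'}^{\mathrm{oracle}}$ and $V_{\pi'}^{\mathrm{sur}}$ are defined as finite quantities, and these then guarantee $\varepsilon_{\pi'} = Y - f(S,X)$ is integrable, so $\E_{\pi'}[\varepsilon_{\pi'}]$ is well-defined and linearity applies.

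The single substantive step is then the chain
\[
\E_{\pi'}[\varepsilon_{\pi'}] \;=\; \E_{\pi'}\big[Y - f(S,X)\big] \;=\; \E_{\pi'}[Y] - \E_{\pi'}[f(S,X)] \;=\; V_{\pi'}^{\mathrm{oracle}} - V_{\pi'}^{\mathrm{sur}},
\]
where the middle equality is linearity of expectation (valid by the integrability noted above) and the outer equalities are just the definitions of $\varepsilon_{\pi'}$, $V_{\pi'}^{\mathrm{oracle}}$, and $V_{\pi'}^{\mathrm{sur}}$. This establishes the displayed identity.

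For the equivalence of hypotheses, I would simply read off both directions from the identity: $H_{0,\pi'}: \E_{\pi'}[\varepsilon_{\pi'}] = 0$ holds if and only if $V_{\pi'}^{\mathrm{oracle}} - V_{\pi'}^{\mathrm{sur}} = 0$, i.e.\ if and only if $V_{\pi'}^{\mathrm{oracle}} = V_{\pi'}^{\mathrm{sur}}$, which is exactly mean-unbiasedness of the surrogate for $\pi'$. There is no real obstacle here; if anything, the only point requiring care — and the only thing I would spell out explicitly — is that $\E_{\pi'}$ must denote expectation under a single fixed interventional law so that $Y$ and $f(S,X)$ are coupled, because the identity would be vacuous if the two terms were computed under different measures. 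Everything else is immediate from linearity and the definitions.
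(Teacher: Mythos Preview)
Your proposal is correct and matches the paper's treatment: the paper does not give an explicit proof of this proposition, treating it as immediate from the definitions and linearity of expectation, which is exactly the argument you spell out. Your additional care about integrability and about $\E_{\pi'}$ being a single interventional law is sound bookkeeping that the paper leaves implicit.
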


\noindent
\emph{Remark.} The mean residual decomposes into a \emph{transportability gap} (failure of S2: $\E_{\pi'}[\mu_{\pi'}(S,X) - \mu_{\pi_0}(S,X)]$) and \emph{calibration estimation error} ($\E_{\pi'}[\mu_{\pi_0}(S,X) - f(S,X)]$). The latter vanishes as the oracle slice grows, so the test asymptotically isolates transportability failures. This test is strictly weaker than full transportability (S2): it guarantees mean-unbiasedness for policy-wise values but does not rule out conditional miscalibration within subgroups or at tails.

\noindent
\emph{Data requirement.} Applying this test requires a small oracle slice \emph{under each target policy $\pi'$} (or per evaluation environment, e.g., time window or subgroup). We recommend collecting a modest audit sample (${\sim}50$--$200$ labels per policy) to validate that the base-trained calibrator can be safely reused; if a policy fails the test, either recalibrate with policy-specific data or fall back to oracle-only evaluation for that cell.

\paragraph{Discussion.}
CJE addresses two distinct statistical problems: (i) \emph{missing-data estimation} (Direct mode, \cref{prop:direct-eif}), where fresh responses are generated under $\pi'$ and the only missing data are oracle labels; and (ii) \emph{off-policy evaluation} (IPS/DR modes), where logged data from $\pi_0$ must be reweighted to estimate $V(\pi')$.
Direct is the recommended default when fresh generation is feasible: it requires no overlap assumption, achieves the semiparametric efficiency bound for missing-data estimation, and attains the best ranking accuracy in our experiments.
For OPE regimes, \cref{thm:model-restriction} formalizes the model restriction principle: encoding justified knowledge (conditioning on $S$, monotone weights, convex IF combinations) tightens efficiency bounds. The budgeted bound and IF stacking theorems (\cref{app:budgeted-stacking}) provide additional variance control for IPS/DR.
In finite samples, the exact $\mathrm{IsoMeanOne}_S$ projection additionally \emph{majorizes} dispersion (\cref{lem:maj}), explaining the ESS gains delivered by weight stabilization.

\paragraph{Scope and limitations.}
\emph{(i) Mean sufficiency and transport.} The EIF derivations (\cref{thm:sur-eif,thm:dr-eff}) assume mean sufficiency ($\E[Y\mid X,A,S]=R^\star(S)$), a structural assumption enabling the efficiency theory. For valid estimation, the fundamental requirement is transport ($\E_{\pi'}[Y-f(S,X)]=0$), which is strictly weaker: mean sufficiency implies transport, but transport can hold (and be verified via audit) without it. The two-stage calibration mode uses a learned index $Z(S,X)$; the theory generalizes by replacing $S$ with $Z(S,X)$ throughout, provided $Z$ is consistently estimated.
\emph{(ii) Double robustness and weight misspecification.} \cref{thm:dr-eff} assumes at least one nuisance
converges at rate $n^{-1/4}$. When teacher-forced weights are unreliable (as in high-overlap-failure regimes),
DR remains consistent via the outcome model alone; however, the efficiency bound $\Var(\phi_{\mathrm{sur}})$ is
attained only when both nuisances are well-specified. In practice, when IPS fails due to poor coverage, DR
effectively reduces to the Direct Method, which achieves the same consistency without the variance penalty
from unstable weights (see \cref{sec:experiments}, Q3).
\emph{(iii) Weight stabilization convergence.} \cref{thm:budgeted} assumes weight stabilization converges to $m^\star_\rho$; this holds
under standard regularity conditions for isotonic regression and cross-validated stacking (see, e.g.,
\citet{vanderLaan2007SuperLearner}).

% Experiments (results + minimal diagnostics panel)
\section{Experiments}
\label{sec:experiments}

We evaluate CJE on a large-scale benchmark. We initially expected doubly robust methods to dominate by combining logged data with fresh draws. Instead, we found two surprises: (i) OPE fails catastrophically even after weight stabilization boosts ESS above 90\%, and (ii) DR provides no advantage over Direct under low overlap, effectively reducing to the outcome model alone. These findings motivated the CLE bound and its TTC/AB diagnostics, which explain why high ESS is insufficient.

Our experiments address three questions:
\begin{enumerate}[leftmargin=*,itemsep=2pt,topsep=2pt]
\item Does calibration work? Reward calibration should reduce RMSE, improve ranking accuracy, and with calibration-aware inference restore valid CI coverage.
\item Why does OPE fail despite high ESS? We expected IPS to work after stabilization; CLE predicts failure when TTC is low.
\item Why doesn't DR dominate? Under low overlap, DR's IPS component contributes noise rather than information.
\end{enumerate}

\subsection{Experimental design}
\label{ssec:exp-setup}

\paragraph{Data.}
We started with 5,000 prompts from Chatbot Arena \citep{Zheng2023LLMasJudge}, a crowdsourced platform where users chat with anonymous LLMs. Prompts were randomly selected from first-turn English conversations; after filtering for teacher-forcing reliability, $n{=}4{,}961$ samples remain with complete data for all policies (see Limitations, \cref{sec:limitations}).

\paragraph{Policies.}
Five LLM configurations, each generating responses to all prompts:
\begin{itemize}[leftmargin=*,nosep]
\item \texttt{base}: Llama 3.3 70B with standard system prompt (logging policy $\pi_0$)
\item \texttt{clone}: Same model and prompt as base, different seed (TF reliability test)
\item \texttt{premium}: Llama 3.1 405B (model size effect)
\item \texttt{parallel\_universe\_prompt}: Llama 3.3 70B with modified system prompt (prompt engineering test)
\item \texttt{unhelpful}: Deliberately low-quality, confusing responses (adversarial stress test)
\end{itemize}
Total: ${\sim}25$k responses (${\sim}$5k prompts $\times$ 5 policies; $n{=}4{,}961$ after filtering).

\paragraph{Oracle and judge.}
\textbf{Oracle $Y$:} \texttt{gpt-5-2025-08-07} \citep{OpenAI2025GPT5} quality scores (0--100, normalized to $[0,1]$) collected for every response.
\textbf{Judge $S$:} \texttt{gpt-4.1-nano-2025-04-14} scores on all responses (temperature 0); approximately $16\times$ cheaper than oracle.\footnote{Pricing as of November 2025 via OpenAI API.}
\emph{Note:} The oracle model operates at temperature 1.0 (temperature 0 was not available for this model at experiment time), introducing stochasticity in oracle labels; we did not repeat-score responses, so coverage results may be conservative.
All estimators except \texttt{naive-direct} calibrate judge scores using reward calibration. Base variants use monotone-only calibration ($Z{=}S$); \texttt{+cov} variants use two-stage calibration with response length as a covariate ($Z{=}g(S,\text{length})$).
For DR methods, rewards $R=\hat f_{\mathrm{all}}(Z)$ use the full calibrator while the outcome model $\hat q=\hat f^{(-k)}(Z)$ uses cross-fitted (leave-one-fold-out) predictions; the difference $R-\hat q$ captures residual information for the IPS correction.

\paragraph{Experimental grid.}
We cross 13 estimators $\times$ 5 oracle fractions (5\%, 10\%, 25\%, 50\%, 100\%) $\times$ 5 sample sizes (500, 1k, 2k, 3k, 5k) $\times$ 50 random seeds = 16,250 total runs (1,250 per estimator). Reported metrics are means across seeds within each (oracle fraction, sample size) cell.

\paragraph{Metrics.}
\emph{Ranking:} Pairwise accuracy ($\text{PairAcc} = \binom{P}{2}^{-1}\sum_{j<k}\mathbf{1}\{(\hat V_j - \hat V_k)(V^\star_j - V^\star_k) > 0\}$, the fraction of correctly ordered policy pairs), Top-1 accuracy, Kendall's $\tau$.
\emph{Magnitude:} RMSE$^d$ (oracle-noise-debiased): $\mathrm{RMSE}^d = \sqrt{\max(0,\, \mathrm{MSE} - \widehat{\Var}(\bar Y))}$, where $\widehat{\Var}(\bar Y)$ is the variance of the oracle mean estimate (conservatively bounded by $0.25/n$ per policy). This isolates estimator error from irreducible oracle sampling noise.
\emph{Uncertainty:} Coverage (95\% CI capture rate).
\emph{Note:} Accuracy and uncertainty metrics exclude the \texttt{unhelpful} policy (which intentionally violates transportability); ranking metrics include all five policies to test whether estimators correctly identify the adversarial policy as worst.

\subsection{Results}
\label{ssec:exp-results}

\begin{table}[H]
\centering
\caption{Accuracy \& Uncertainty Metrics}
\label{tab:accuracy}
\setlength{\tabcolsep}{4pt} % tighten columns a bit
\resizebox{\linewidth}{!}{%
\begin{tabular}{lrrrrrrrr}
\toprule
Estimator & $\mathrm{RMSE}^{\mathrm{d}}$ $\downarrow$ & IS (interval score) $\downarrow$ & Coverage \\% $\to 95$ & SE GM $\downarrow$ & Pairwise \% $\uparrow$ & Top-1 \% $\uparrow$ & $\tau$ $\uparrow$ & Runtime (s) $\downarrow$ \\
\midrule
naive-direct & 0.0828 & 2.9580 & 0.0 & \textbf{0.0039} & 90.9 & 79.6 & 0.817 & 0.4 \\
direct & \textbf{0.0223} & \textbf{0.0542} & 85.0 & \underline{0.0068} & 91.8 & 84.1 & 0.837 & 0.7 \\
direct+cov & 0.0256 & \underline{0.0568} & 86.1 & 0.0074 & \textbf{94.4} & \textbf{89.5} & \textbf{0.888} & 1.6 \\
SNIPS & 0.1596 & 0.7379 & 98.3 & 0.1815 & 38.3 & 8.7 & -0.235 & \textbf{0.3} \\
SNIPS+cov & 0.1842 & 0.7544 & 97.8 & 0.1842 & 37.4 & 6.4 & -0.252 & \underline{0.3} \\
calibrated-ips & 0.0246 & 0.5727 & 99.1 & 0.0963 & 47.1 & 19.0 & -0.059 & 0.3 \\
calibrated-ips+cov & 0.0262 & 0.6084 & 99.2 & 0.1027 & 46.5 & 20.5 & -0.070 & 0.4 \\
dr-cpo & 0.0459 & 0.2203 & 99.2 & 0.0371 & 78.3 & 46.3 & 0.566 & 3.1 \\
dr-cpo+cov & 0.0570 & 0.2657 & 99.4 & 0.0449 & 79.1 & 50.1 & 0.581 & 9.6 \\
calibrated-dr-cpo & 0.0224 & 0.1357 & 99.5 & 0.0230 & 90.9 & 81.0 & 0.818 & 3.2 \\
calibrated-dr-cpo+cov & 0.0260 & 0.1398 & 99.3 & 0.0236 & \underline{94.1} & \underline{87.7} & \underline{0.882} & 9.7 \\
stacked-dr & \underline{0.0224} & 0.0788 & \textbf{95.6} & 0.0125 & 91.9 & 83.1 & 0.838 & 8.9 \\
stacked-dr+cov & 0.0281 & 0.0806 & \underline{96.1} & 0.0128 & 92.1 & 84.8 & 0.842 & 38.2 \\
\bottomrule
\end{tabular}
}
\footnotesize{
$\downarrow$: lower is better, $\uparrow$: higher is better, $\to 95$: target is 95\%. 
\textbf{Bold}: best (closest to target for Coverage \%), \underline{underlined}: second-best. 
Metrics averaged across all regimes.
}
\end{table}

\paragraph{Q1: Does calibration work?}
Yes, for both accuracy and inference. \Cref{tab:accuracy} shows calibration improves accuracy across method families.
For Direct: uncalibrated \texttt{naive-direct} achieves 90.9\% pairwise vs.\ calibrated \texttt{direct+cov} at 94.4\%.
For DR: raw-weight \texttt{dr} achieves 78.3\% pairwise ($\tau{=}0.57$) vs.\ weight-stabilized \texttt{calibrated-dr} at 90.9\% ($\tau{=}0.82$), a 12.6 percentage point gain (both use reward calibration; the gain is from weight stabilization).
For inference: uncalibrated CIs (\texttt{naive-direct}) achieve 0\% coverage: nominal 95\% CIs capture the true value in 0/50 seeds, a failure mode independently replicated by \citet{Chen2026EfficientLLMJudge}, who observe 0\% coverage for naive estimation under judge bias.
Bootstrap inference with bias-corrected estimation ($\hat\theta_{\mathrm{aug}}$) restores coverage: Direct achieves ${\sim}$95\% across oracle fractions (\cref{tab:uq-comparison}), as does stacked-DR.
The calibration uncertainty share of total variance ranges from 0\% (100\% oracle) to over 50\% (5\% oracle), explaining why ignoring calibration uncertainty produces catastrophically narrow CIs (\cref{fig:oua-decomposition}).

\begin{figure}[t]
  \centering
  \includegraphics[width=\columnwidth]{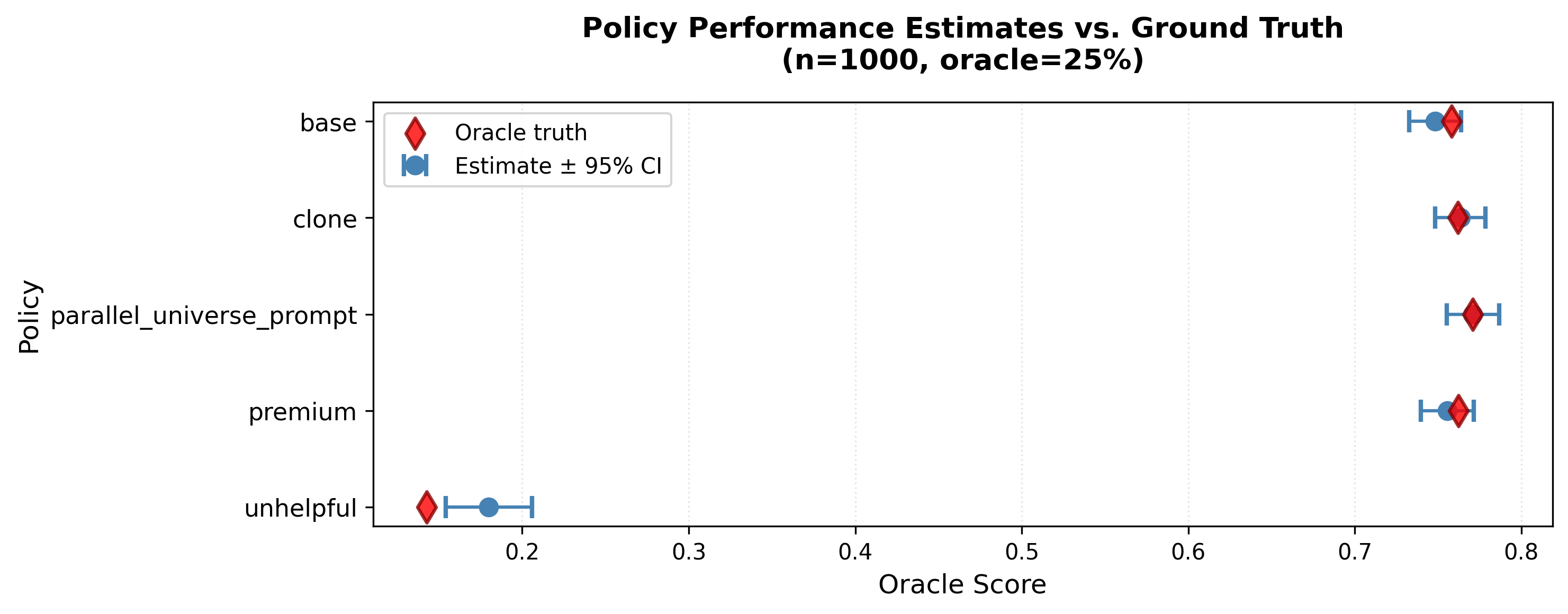}
  \caption{\textbf{CJE output: policy value estimates at $n{=}1000$, 25\% oracle.} Red diamonds show oracle ground truth; blue circles show CJE estimates with 95\% CIs. CIs capture the true value for policies satisfying transportability; \texttt{unhelpful} (which violates transportability) shows slight miscoverage, a failure mode flagged by the transportability test (\cref{fig:transportability-test}).}
  \label{fig:forest-plot}
\end{figure}

\paragraph{Q2: Why does OPE fail despite high ESS?}
This was our main surprise. Despite weight stabilization boosting ESS from $<1\%$ to $>80\%$ for target policies, calibrated IPS ranking performance remains near-random: 47\% pairwise accuracy vs.\ 50\% chance baseline.
We expected weight stabilization to be sufficient; it was not.

OPE faces two compounding challenges in this setting.
\emph{First}, teacher-forcing logprobs are empirically unreliable: even for the clone policy (identical to logger), raw ESS is 26\% rather than the expected ${\sim}$100\% (\cref{tab:weight-diagnostics}), suggesting propensity estimates are corrupted by tokenization mismatches, API nondeterminism, or distribution shift between sampling and scoring.
\emph{Second}, even with stabilized weights, coverage-limited efficiency sets a hard precision floor: TTC ranges from 0.19 to 0.49 for non-clone policies, all below the 0.7 threshold, indicating the logger rarely visits target-typical regions. The resulting CLE factors (24--61$\times$) inflate standard errors beyond what any estimator can overcome.

\paragraph{Q3: Why doesn't DR dominate?}
We expected DR to outperform Direct by combining logged data with fresh draws. Instead, \texttt{direct+cov} slightly outperforms \texttt{calibrated-dr+cov} (94.4\% vs.\ 94.1\%).
The explanation: under low overlap, DR's IPS component contributes noise rather than information.
DR estimators are consistent if \emph{either} the propensity model \emph{or} the outcome model is correct.
Given IPS's failure, DR effectively relies \emph{entirely} on the outcome model $\hat{q}(Z)$, not the propensities.
This is exactly the regime CLE identifies: TTC is low, so logged data contributes little. DR $\approx$ Direct $+$ extra variance from unstable weights.
\emph{Implication:} When teacher forcing is unreliable, prefer Direct over DR; it achieves the same accuracy without requiring logprobs.

\begin{table}[t]
\centering
\caption{\textbf{Ablation: marginal contribution of CJE components.} Metrics averaged over 5 sample sizes $\times$ 5 oracle fractions $\times$ 50 seeds. CI coverage = \% of 95\% CIs containing truth (using bootstrap + $\hat\theta_{\mathrm{aug}}$ for Direct).}
\label{tab:ablation}
\small
\begin{tabular}{@{}lccr@{}}
\toprule
\textbf{Configuration} & \textbf{RMSE$^{\mathrm{d}}$} & \textbf{Coverage} & \textbf{$\Delta$RMSE$^{\mathrm{d}}$} \\
\midrule
\multicolumn{4}{l}{\textit{Effect of Reward Calibration}} \\
~~Direct, no calibration & 0.083 & 0\% & --- \\
~~Direct + reward calibration & 0.023 & 95\% & \textbf{--72\%} \\
\midrule
\multicolumn{4}{l}{\textit{Effect of Weight Stabilization}} \\
~~SNIPS (raw weights) & 0.160 & 98\% & --- \\
~~Calibrated-IPS (+ stabilization) & 0.025 & 99\% & \textbf{--84\%} \\
\midrule
\multicolumn{4}{l}{\textit{Combined Effect on DR}} \\
~~Calibrated DR, raw weights & 0.046 & 99\% & --- \\
~~Calibrated DR + stabilization & 0.023 & 100\% & --50\% \\
~~Calibrated DR + stabilization + cov & 0.026 & 99\% & --43\% \\
\midrule
\multicolumn{4}{l}{\textit{Recommended Configuration}} \\
~~\textbf{Direct + reward calibration + cov} & \textbf{0.026} & 95\% & \textbf{--69\%} \\
\bottomrule
\end{tabular}
\vspace{1mm}

\footnotesize{
\textbf{Key findings:} Reward calibration reduces RMSE by 72\% and restores coverage (0\%→95\% with bootstrap).
Weight stabilization reduces IPS RMSE by 84\%. Both contribute additively.
}
\end{table}

\begin{table}[t]
\centering
\caption{Coverage by UQ Method (target: 95\%). Bootstrap with calibrator refit achieves near-nominal coverage; cluster-robust and jackknife-only methods undercover.}
\label{tab:uq-comparison}
\small
\setlength{\tabcolsep}{4pt}
\begin{tabular}{lcccc}
\toprule
& \multicolumn{2}{c}{Analytic} & \multicolumn{2}{c}{Bootstrap} \\
\cmidrule(lr){2-3} \cmidrule(lr){4-5}
Regime & Cluster-Robust & Cal-Aware Jackknife & $B{=}500$ & $B{=}2000$ \\
\midrule
$n{=}250$, 5\% oracle & \textcolor{red}{21} & \textcolor{red}{85} & 94 & 93 \\
$n{=}250$, 25\% oracle & \textcolor{red}{43} & \textcolor{red}{86} & 95 & 97 \\
$n{=}500$, 5\% oracle & \textcolor{red}{28} & \textcolor{red}{89} & 96 & 97 \\
$n{=}500$, 25\% oracle & \textcolor{red}{40} & \textcolor{red}{84} & 98 & 99 \\
$n{=}1000$, 5\% oracle & \textcolor{red}{21} & \textcolor{red}{75} & 95 & 95 \\
$n{=}1000$, 25\% oracle & \textcolor{red}{35} & \textcolor{red}{83} & 95 & 95 \\
$n{=}2500$, 5\% oracle & \textcolor{red}{15} & \textcolor{red}{70} & 97 & 97 \\
$n{=}2500$, 25\% oracle & \textcolor{red}{52} & \textcolor{red}{85} & 98 & 99 \\
\bottomrule
\end{tabular}

\vspace{1mm}
\footnotesize{\textcolor{red}{Red}: $< 90\%$ (undercoverage). Bootstrap with $\hat\theta_{\mathrm{aug}}$ recommended.}
\end{table}

\paragraph{Component ablation.}
\Cref{tab:ablation} isolates the marginal contribution of each CJE component.
Reward calibration reduces RMSE by 72\% and is necessary to avoid catastrophic undercoverage: without it, 0\% of CIs contain the true value; with bootstrap inference, coverage improves to ${\sim}$95\%.
Weight stabilization reduces IPS RMSE by 84\% while maintaining coverage, validating the $S$-monotone projection approach.
\Cref{tab:uq-comparison} compares UQ methods for Direct mode: naive cluster-robust SEs achieve only 15--52\% coverage; calibration-aware jackknife improves to 70--89\%; bootstrap with $\hat\theta_{\mathrm{aug}}$ (AIPW-style residual augmentation) achieves ${\sim}$95\% by correcting both variance and bias from calibration uncertainty.

\begin{table}[t]
\centering
\caption{Weight Diagnostics: Weight Stabilization. Raw SNIPS weights have catastrophic ESS ($<1\%$) and heavy tails ($\alpha < 2$); weight stabilization recovers ESS $>80\%$ with light tails.}
\label{tab:weight-diagnostics}
\small
\setlength{\tabcolsep}{5pt}
\begin{tabular}{lcccccc}
\toprule
& \multicolumn{2}{c}{ESS (\%)}  & \multicolumn{2}{c}{Tail $\alpha$} & & \\
\cmidrule(lr){2-3} \cmidrule(lr){4-5}
Policy & SNIPS & Stab. & SNIPS & Stab. & TTC & CLE \\
\midrule
Clone & 26.2 & 99.0 & 1.08 & $>$10 & 79.6\% & 1.9$\times$ \\
Parallel Univ. & 0.6 & 95.4 & 0.56 & $>$10 & 49.4\% & 60.5$\times$ \\
Premium & 0.7 & 82.1 & 0.32 & $>$10 & 19.0\% & 23.8$\times$ \\
Unhelpful & 0.4 & 84.6 & 0.13 & $>$10 & 25.6\% & 37.6$\times$ \\
\bottomrule
\end{tabular}

\vspace{1mm}
\footnotesize{TTC = Target-Typicality Coverage; CLE = coverage-limited efficiency factor. Tail $\alpha < 2$ indicates infinite variance.}
\end{table}

\paragraph{Weight diagnostics.}
\Cref{tab:weight-diagnostics} quantifies the effect of weight stabilization and CLE diagnostics.
The mean-one isotonic projection reduces dispersion, yielding large ESS uplifts (e.g., from 0.6\% to 95\% for Parallel Universe), and tail relief (Hill $\alpha > 2$).
Yet ranking performance does not improve.
Note that even the clone policy---identical to logger---has raw ESS of only 26\%, far below the theoretical 100\%, reflecting teacher-forcing brittleness (see Limitations, \cref{sec:limitations}).
For non-clone policies, TTC ranges from 19--49\% with CLE factors of 24--61$\times$, indicating the logger rarely visits target-typical regions. Together, TF unreliability and CLE explain why high stabilized ESS does not translate to accurate ranking: propensity estimates are corrupted, and even perfect propensities would face a hard precision floor.

\begin{figure}[t]
  \centering
  \includegraphics[width=\columnwidth]{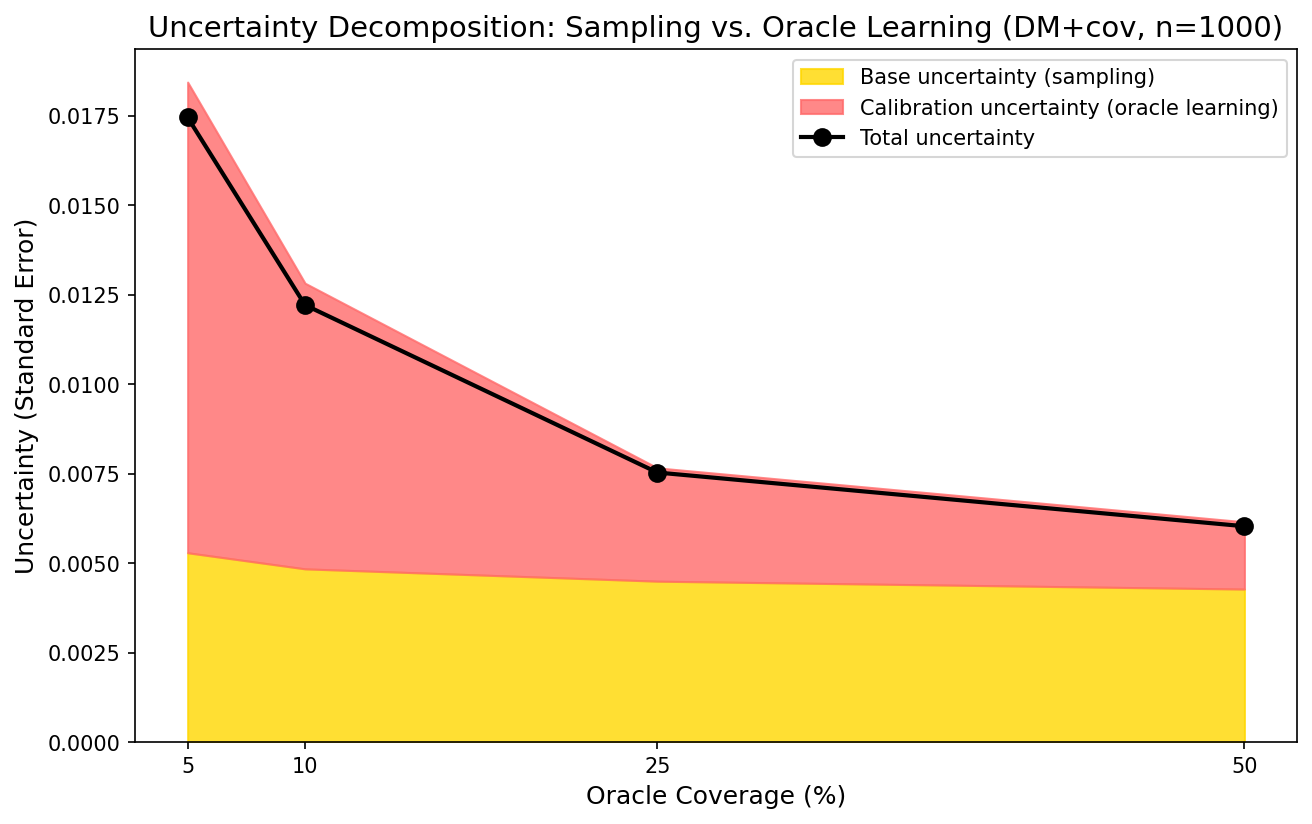}
  \caption{\textbf{Uncertainty decomposition at $n{=}1000$ for \texttt{direct+cov}.}
  Yellow: base sampling uncertainty (constant ${\sim}$0.005). Orange: calibration uncertainty (dominates at low oracle fractions, contributing ${\sim}$90\% of total variance at 5\% oracle; shrinks to ${\sim}$30\% at 100\% because the estimator still fits a calibrator).}
  \label{fig:oua-decomposition}
\end{figure}

\begin{figure}[t]
  \centering
  \includegraphics[width=\columnwidth]{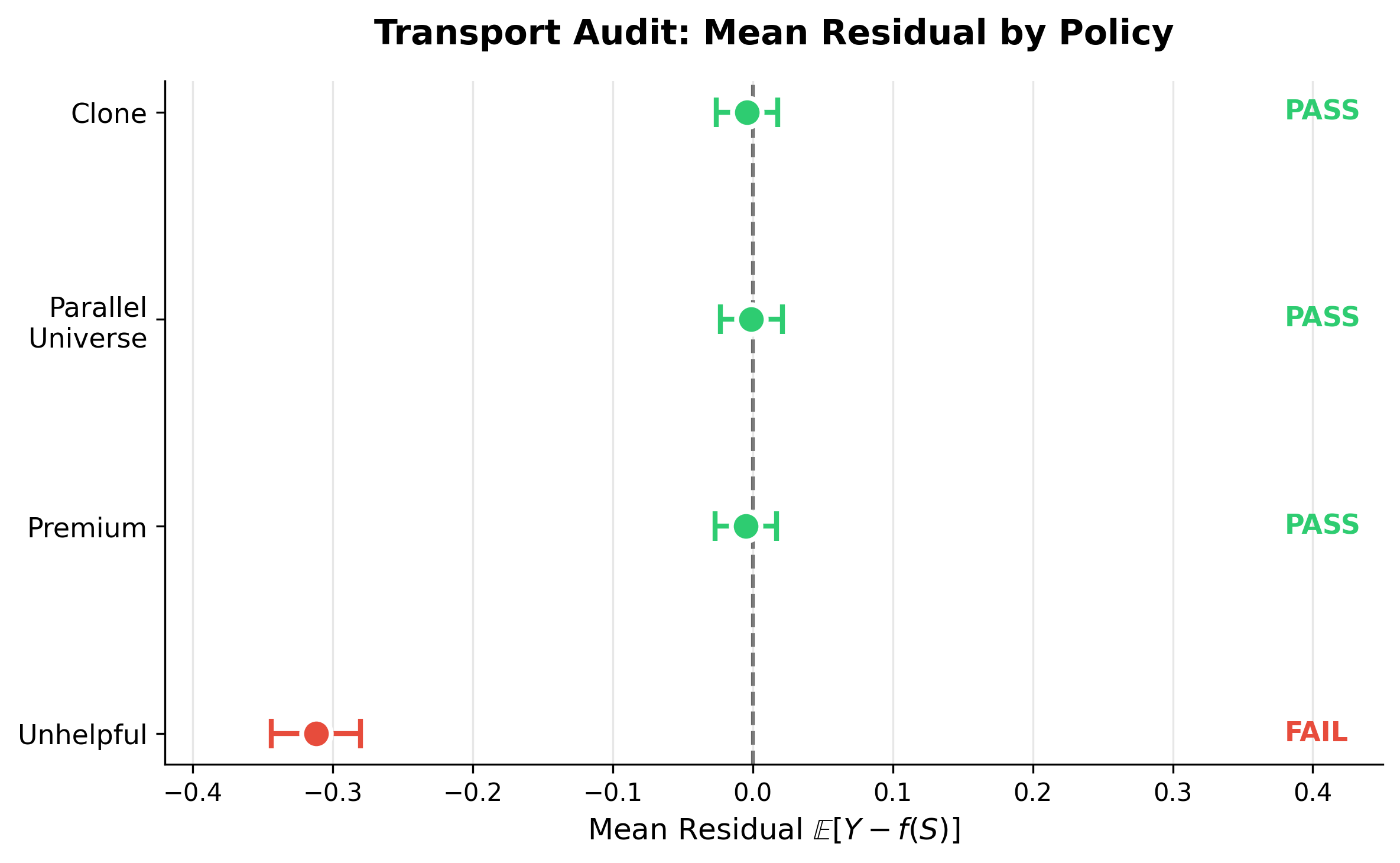}
  \caption{\textbf{Policy-wise mean transport test at 25\% oracle fraction.}
  Testing $H_{0,\pi'}: \E_{\pi'}[Y - f(S, X)] = 0$ per policy (Bonferroni-corrected $\alpha{=}0.0125$).
  Clone, Parallel Universe, and Premium pass (residuals centered at zero).
  Unhelpful fails (mean residual $= -0.31$): the surrogate overestimates oracle scores for adversarial responses by $+0.31$ in absolute level.}
  \label{fig:transportability-test}
\end{figure}

\paragraph{Calibration transportability.}
\Cref{fig:transportability-test} applies the policy-wise mean transport test (\cref{prop:mean-transport}) to validate whether the base-trained calibration can be reused across target policies.
We test $H_{0,\pi'}: \E_{\pi'}[Y - f(S,X)] = 0$ for each policy with Bonferroni correction ($\alpha{=}0.0125$).
Clone, Parallel Universe, and Premium all pass: their residual distributions are centered at zero, implying that Direct + calibrated surrogate remains mean-unbiased for these policies.
Unhelpful fails (mean residual $-0.31$, $p < 0.001$). By \cref{prop:mean-transport}, this implies $V^{\mathrm{oracle}}_{\pi'} - V^{\mathrm{sur}}_{\pi'} \approx -0.31$: the surrogate systematically \emph{overestimates} oracle quality for adversarial responses.
This flags a failure of mean transportability: absolute value estimates for this policy are biased, triggering the \textsc{REFUSE-LEVEL} gate (App.~\ref{app:diag-transport}).
Because Unhelpful's bias ($-0.31$) does not invert its rank relative to other policies (corrected value ${\approx}0.48$ remains well below ${\sim}0.75$), rankings remain valid; only absolute levels are refused.
\emph{Practical implication:} This test decides whether a single calibration can be safely reused across policies; if a policy fails, either recalibrate with policy-specific oracle data or fall back to oracle-only evaluation for that cell.
\emph{Remark (transport correction).} When transport fails, the estimated mean residual $\hat\Delta$ from the audit slice can serve as an intercept correction: $\hat V_{\mathrm{corr}} = \hat V_{\mathrm{sur}} + \hat\Delta$. However, we recommend refusing level claims by default, as the correction only fixes the mean and policy-specific recalibration is preferable when feasible.

\paragraph{Judge-policy confound.}
When a model judges its own outputs, transport can fail due to self-preference bias. On MBPP code generation (\cref{app:mbpp}), GPT-4o-mini judging its own outputs showed a significant mean residual ($-0.065$, $p{=}0.006$). Never judge a model with itself without auditing transport; this failure mode cannot be fixed by calibration alone.

\subsection{Comparison to Misclassification Correction}
\label{ssec:binary}

A natural alternative to continuous calibration is misclassification correction~\citep{RoganGladen1978}, recently adapted for LLM judges by~\citet{Lee2025LLMJudgeReporting}. Both approaches are calibration-based; they differ in what signal they use (continuous vs.\ thresholded) and in how uncertainty propagates. Lee et al.\ assume \emph{confusion-matrix stability}---that $\Pr(\hat Z \mid Z)$ is constant across distributions---analogous to our transport condition.

\paragraph{Setup.} This comparison uses a separate experiment from the main benchmark, designed for an apples-to-apples evaluation with Lee et al.'s binary setting.
We test on 9,999 Arena preference pairs (50.5\% A-wins), the most favorable domain for binary methods: mean Youden's $J = 0.41$ across policies, compared to $J = 0.05$ on code correctness where binary estimation is ill-conditioned. Both methods are evaluated on the same 3 target policies (verbose, terse, contrarian) with identical oracle budgets, test splits, seeds, and judge model (gpt-4o-mini). Confidence intervals follow~\citet{Lee2025LLMJudgeReporting} Eq.~6 with Agresti-Coull adjustment.

\begin{table}[t]
\centering
\small
\caption{\textbf{Misclassification correction vs.\ continuous calibration on Arena preferences} (mean Youden's $J \approx 0.41$). Both methods achieve near-nominal 95\% CI coverage, but CJE produces $9\times$ narrower confidence intervals while achieving 93\% lower RMSE even when Rogan-Gladen uses the optimal threshold.}
\label{tab:binary-comparison}
\begin{tabular}{lccc}
\toprule
Method & RMSE & 95\% CI Coverage & CI Half-Width \\
\midrule
CJE (direct) & $\mathbf{0.004}$ & 96.4\% & $\mathbf{0.015}$ \\
Rogan-Gladen ($\tau{=}0.5$) & $0.063$ & 95.6\% & 0.135 \\
Rogan-Gladen ($\tau{=}0.6$, best) & $0.061$ & 98.9\% & 0.136 \\
\bottomrule
\end{tabular}
\vspace{1mm}
\par\noindent{\footnotesize \emph{Setup:} 9,999 Arena preference pairs (50.5\% A-wins); 3 target policies (verbose, terse, contrarian); identical oracle budgets per cell. CIs follow~\citet{Lee2025LLMJudgeReporting} Eq.~6. Threshold $\tau$ binarizes the continuous judge score; $\tau{=}0.5$ is the natural argmax, $\tau{=}0.6$ is optimal from a sweep over $\{0.3, 0.4, 0.5, 0.6, 0.7\}$.}
\end{table}

\paragraph{Results.} \Cref{tab:binary-comparison} shows both methods achieve near-nominal coverage. However, even with the optimal threshold ($\tau{=}0.6$), CJE achieves 93\% lower RMSE (0.004 vs.\ 0.061) and $9\times$ narrower confidence intervals (half-width 0.015 vs.\ 0.136).

\paragraph{Why the gap?} Two factors explain the efficiency difference:
(1)~\emph{Ill-conditioning}: the Rogan-Gladen estimator $\hat\theta = (\hat p + \widehat{Sp} - 1)/J$ has standard error scaling like $1/J$; at $J \approx 0.44$, small errors in $(\hat p, \widehat{Se}, \widehat{Sp})$ amplify into large errors in $\hat\theta$.
(2)~\emph{Information loss}: binary thresholding discards the continuous probability signal that CJE exploits.
We swept thresholds $\tau \in \{0.3, 0.4, 0.5, 0.6, 0.7\}$ to rule out suboptimal binarization; the best ($\tau{=}0.6$) is shown in the table.

\paragraph{Implication.} Even when misclassification correction is numerically well-conditioned ($J$ well above zero), continuous calibration remains substantially more sample-efficient. Both approaches require transportability diagnostics: Lee et al.'s method relies on confusion-matrix stability; CJE relies on score transport. The key difference is that CJE makes diagnostics \emph{actionable}---we gate the calibrated estimator on transport passing, preventing calibration from damaging estimates when the assumption fails. CJE also generalizes beyond preferences: on MBPP code generation with unit-test ground truth, calibration reduces RMSE by 32--84\% for transport-valid policies at 10\%+ oracle fraction (\cref{app:mbpp}).

\paragraph{Distribution shift robustness.}
\citet{Lee2025LLMJudgeReporting} emphasize that their estimator is robust to shifts in $\Pr(Z)$ between calibration and test sets (their Section~7), since confusion-matrix parameters $\Pr(\hat Z \mid Z)$ do not depend on $\Pr(Z)$.
We tested this claim by varying the calibration-set class balance while holding the test set at its natural rate ($\Pr(Z{=}1) \approx 0.51$).
Results confirm their prediction: Rogan-Gladen bias remains near zero across all shifts (mean $|\mathrm{bias}| < 0.007$), while CJE bias scales linearly with the calibration--test mismatch ($-0.14$ at $\Pr(Z{=}1)_{\mathrm{calib}}{=}0.30$, $+0.14$ at $0.70$).
This is a genuine advantage of confusion-matrix methods: when practitioners cannot control the calibration distribution, misclassification correction provides robustness that continuous calibration lacks.
However, when calibration and test distributions match---the typical experimental design---CJE's efficiency advantage dominates (RMSE $0.009$ vs.\ $0.032$ at no shift).

\paragraph{Sample size planning.}
\Cref{fig:mde-contours} shows MDE contours for the Direct Method: the smallest true policy difference detectable at 80\% power, 5\% two-sided significance.
Key observation: for a fixed label budget $m = n \times \text{oracle fraction}$, larger $n$ with lower oracle fraction outperforms smaller $n$ with higher oracle fraction, suggesting \emph{spread labels across more samples} as a practical rule.
\emph{Practical guidance:} For new domains, oversample the first batch (e.g., $n{=}2{,}000$ with 25\% oracle fraction) to generate a domain-specific MDE plot, then use it to select the minimal viable $(n, \text{oracle fraction})$ for future experiments.

\begin{figure}[t]
  \centering
  \includegraphics[width=\columnwidth]{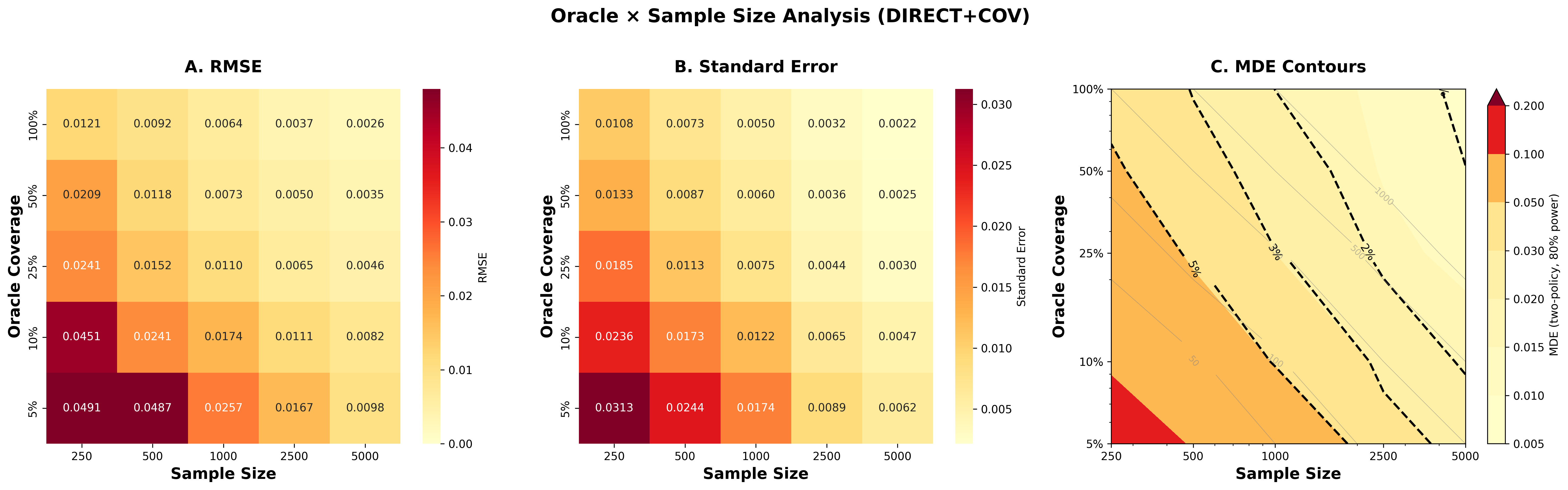}
  \caption{\textbf{Sample size planning for \texttt{direct+cov}.}
  (A) RMSE by sample size and oracle fraction.
  (B) Standard errors including calibration uncertainty.
  (C) MDE contours showing the smallest effect detectable at 80\% power.
  Dashed lines show common detection thresholds; cells with MDE $< 0.02$ enable detecting 2-point quality differences.
  For a fixed label budget, spread labels across more samples (larger $n$, lower oracle \%) rather than concentrating labels on fewer samples.}
  \label{fig:mde-contours}
\end{figure}

\paragraph{Cost analysis.}
\Cref{tab:cost-analysis} quantifies the cost-accuracy tradeoff.
CJE with 5\% oracle calibration achieves 99\% pairwise ranking accuracy at $n{=}5{,}000$ (98.7\% averaged over 50 seeds) at \textbf{14$\times$ lower total cost} than pure oracle labeling (for all 5 policies).
Oracle labels cost approximately $16\times$ more than judge scores, so total cost scales primarily with oracle fraction.
At 5\% oracle fraction, this yields an 8.8$\times$ cost reduction for a single policy.
With amortization across 5 policies, this becomes 14$\times$ (see below), enabling scaling to production workloads where pure oracle labeling is prohibitive.
Amortization improves further: calibrating once on a shared oracle slice enables evaluating multiple policies at marginal judge-only cost, approaching the 16$\times$ cost ratio as $P$ grows.
\emph{Sensitivity:} At smaller $n$ or higher oracle fractions, the cost advantage shrinks; at 25\% oracle fraction the reduction is 3.5$\times$. Ranking accuracy depends on policy gaps and transport validity: the 94\% average across all configurations (\cref{tab:accuracy}) is more representative than the 99\% peak.

\begin{table}[h]
\centering
\small
\caption{\textbf{Cost-accuracy tradeoff} for $n{=}5{,}000$ prompts $\times$ 5 policies (rounded for cost arithmetic; experiments use $n{=}4{,}961$ after filtering). CJE achieves near-oracle accuracy at 14$\times$ lower cost.\textsuperscript{$\dagger$}}
\label{tab:cost-analysis}
\begin{tabular}{lrrrr}
\toprule
Method & Oracle Cost & Judge Cost & Total & Pairwise \% \\
\midrule
Pure oracle (100\%) & \$55.00 & -- & \$55.00 & 100\% \\
CJE (5\% oracle) & \$0.55 & \$3.50 & \textbf{\$4.05} & 99\% \\
\bottomrule
\end{tabular}
\vspace{1mm}
\par\noindent{\footnotesize \textsuperscript{$\dagger$}Costs exclude one-time calibration audits (transportability tests, residual diagnostics). These add ${\sim}$10--20\% overhead initially but amortize across subsequent evaluations on the same domain.}
\end{table}

\begin{table}[h]
\centering
\small
\caption{\textbf{Estimator Selection Guide.} Requirements and recommended use cases for each estimator family.}
\label{tab:estimator-guide}
\begin{tabular}{@{}lcccp{4cm}@{}}
\toprule
\textbf{Estimator} & \textbf{Fresh} & \textbf{Logprobs} & \textbf{Overlap} & \textbf{When to Use} \\
\midrule
Direct & \checkmark & -- & -- & \textbf{Default} for LLM eval \\
IPS/SNIPS & -- & \checkmark & Required & Small policy changes only \\
Calibrated IPS & -- & \checkmark & Required & + weight stabilization \\
DR & \checkmark & \checkmark & Helpful & Logs + fresh draws available \\
Stacked-DR & \checkmark & \checkmark & Helpful & Production ensemble \\
\bottomrule
\end{tabular}
\end{table}

\subsection{Practical takeaways}
\label{ssec:takeaways}

\begin{enumerate}[leftmargin=*,itemsep=2pt,topsep=2pt]
\item Default to Direct + two-stage calibration. For open-ended generation under nontrivial policy shifts, logs-only OPE faces two compounding issues: TF brittleness and coverage sparsity. Direct requires no overlap assumption and achieves the best ranking accuracy in our benchmark (\cref{tab:estimator-guide}).
\item Audit transport for high-stakes decisions. When reporting absolute values or making deployment decisions, run the mean transport test on a small oracle slice per target policy (or time window). Pass $\to$ reuse calibration; fail $\to$ recalibrate with policy-specific data or refuse level claims for that cell. For exploratory analysis or ranking-only comparisons, the audit can be deferred.
\item Gate on TTC before relying on logged data. If TTC $< 0.70$, CLE implies a precision floor that weight stabilization cannot overcome; in our benchmark, only near-clone policies pass this threshold. For constrained settings (short outputs, near-clone shifts), OPE may remain viable; verify with diagnostics before trusting logged-data estimates.
\item Always use bootstrap inference with $\hat\theta_{\mathrm{aug}}$. Standard CIs severely under-cover (15--52\%); calibration-aware jackknife improves but still falls short (70--89\%); bootstrap with bias correction achieves near-nominal coverage (${\sim}$95\%).
\item Use calibration uncertainty share to guide resource allocation. The fraction $\widehat{\mathrm{Var}}_{\mathrm{cal}} / \widehat{\mathrm{Var}}_{\mathrm{total}}$ identifies bottlenecks: if calibration uncertainty share $> 50\%$, collect more oracle labels; if calibration uncertainty share $< 20\%$, collect more evaluation prompts (\cref{fig:oua-decomposition}).
\item Covariates matter. Response length as a calibration covariate (not a reweighting covariate) improves ranking across all methods.
\item Optimize the budget ratio. How many oracle labels do you need? The optimal ratio of labels ($m$) to scores ($n$) follows a square-root law based on costs and variances (\cref{app:budget}). Check the calibration uncertainty share $\omega$ to equalize marginal utility: if $\frac{\omega}{1-\omega} > \frac{\mathrm{Spend}_{\mathrm{oracle}}}{\mathrm{Spend}_{\mathrm{surrogate}}}$, the experiment is under-investing in oracle labels.
\end{enumerate}

% Limitations (with brief ethics & reproducibility notes)
\section{Limitations}
\label{sec:limitations}

\paragraph{Oracle alignment (scope).}
We assume the operational oracle $Y$ aligns with stakeholder values; oracle selection itself is a governance question outside the scope of this work (Assumption~A0 in App.~\ref{app:assumptions-ledger}).

\paragraph{Overlap (positivity).}
IPS requires support overlap between $\pi_0$ and each $\pi'$; DR benefits from overlap but remains consistent via its outcome model when overlap is poor; Direct requires no overlap (it generates fresh responses under $\pi'$).
When overlap is poor, raw ratios are heavy-tailed and uncertainty inflates.
\emph{Mitigations:} Weight stabilization reduces dispersion and raises $\ESS$; if tails persist we (i) gate on $\ESS$ and Hill indices, (ii) use overlap weighting or cohort restriction, and (iii) run an online check when $\widehat{\alpha}_{\text{Hill}}{<}1$ or single–row dominance persists (App.~\ref{app:diagnostics}).

\paragraph{Judge assumptions (surrogate validity).}
Our calibration model uses a monotone single-index restriction (J2-M/J2-SI) as a stability/regularization choice. Correctness of level claims across policies is governed by \emph{mean transport} ($\E_{\pi'}[Y - f(S,X)] = 0$), which we audit; when the audit fails or evidence is weak, we widen/refresh the oracle slice or refuse level claims.
\emph{Mitigations:} (i) \textbf{With audit}: run the mean transport test per policy (\cref{prop:mean-transport}); fail $\to$ recalibrate or refuse levels. (ii) \textbf{Without audit}: surface reliability curves as indirect evidence; when evidence is weak, target labels where error concentrates.

\paragraph{Calibration failure modes.}
Reward calibration can fail in three distinct ways, each with observable signatures:
\begin{enumerate}[leftmargin=*,nosep]
\item \textbf{Policy-specific calibration:} The $S \to Y$ mapping differs across policies.
\emph{Signature:} Within-policy rankings accurate, cross-policy comparisons fail.
\emph{Test:} Policy-wise mean transport test (\cref{prop:mean-transport}). If $\E_{\pi'}[Y - f(S,X)] \neq 0$, surrogate values are biased for that policy.
\item \textbf{Extrapolation beyond range:} Target policy produces $S$ values outside the calibration range.
\emph{Signature:} Boundary flatness; \texttt{OutOfRange} $> 5\%$.
\emph{Test:} Coverage diagnostic (App.~\ref{app:diagnostics}).
\item \textbf{Temporal drift:} The $S \to Y$ mapping changes over time.
\emph{Signature:} Recent residuals larger than historical; calibration accuracy degrades.
\emph{Test:} Periodic mean transport test on fresh oracle batches.
\end{enumerate}

\paragraph{Scope of the mean transport test.}
The policy-wise mean transport test (\cref{prop:mean-transport}) is a \emph{necessary but not sufficient} condition for full surrogate transportability (S2): it guarantees that the surrogate is unbiased for policy-wise mean values, but does not rule out conditional miscalibration within subgroups or at the tails of the score distribution.
\emph{Mitigations:} perform subgroup-specific transport tests when fairness concerns apply; monitor reliability diagrams for regional deviations; use richer moment conditions for stronger tests (see remark after \cref{prop:mean-transport}).

\paragraph{Calibration coverage (identification).}
If a $\pi'$ pushes $S$ outside the labeled range, isotonic calibration flattens at the boundary and levels are not point–identified.
\emph{Mitigations:} flag \textsc{Limited Calibration Support} and set \RefuseLevel\ (report rankings and partial–ID bounds) until targeted labels cover the uncovered $S$ region (App.~\ref{app:diagnostics}).

\paragraph{Approximate sufficiency (bias modulus).}
When $\E[Y\mid X,A,S]\neq \mu(S)$, the residual $\Delta(X,A,S)$ induces bias proportional to calibration error.
\emph{Mitigations:} by Cauchy–Schwarz, $|\mathrm{Bias}|\le \|m-W_{\pi'}\|_2\,\|\Delta\|_2$; weight stabilization shrinks $\|m-W_{\pi'}\|_2$, so bias is second order when either calibration is tight or the violation small; we surface this via diagnostics and invoke \RefuseLevel\ when unbounded.

\paragraph{Temporal dependence and logger drift.}
Non–stationarity (launches, safety updates) can bias or widen intervals.
\emph{Mitigations:} report dependence–robust SEs (block/stationary bootstrap), shorten analysis windows, and monitor judge drift via rank–based/residual change detection with FDR control.

\paragraph{Oracle independence and leakage.}
Calibration-aware inference assumes the oracle slice is i.i.d.\ with non-overlapping folds.
\emph{Mitigations:} reuse deterministic folds across modules, de–duplicate the slice, and periodically refresh it.

\paragraph{Selection and multiplicity.}
Scanning many $\pi'$ inflates winner's curse.
\emph{Mitigations:} use FDR control (BH/BY), optionally an outer split for influence-function stacking to reduce selection optimism, and emphasize pre–specified contrasts.

\paragraph{Teacher forcing and API drift.}
Accurate propensities require deterministic, chat–native TF (stable tokenizer/template) with additivity/conditionality invariants; missing/invalid TF corrupts ratios.
In our experiments, 39 of 5,000 prompts (0.8\%) were filtered: 11 due to base-policy TF conformance failures, and 28 due to missing target-policy logprobs. While small, we report this explicitly: practitioners should expect some data loss from TF gaps, and filtering should always be logged for reproducibility. This also highlights the current state of TF support from LLM providers: major closed-source providers (OpenAI, Anthropic, Google) do not currently support teacher forcing for chat models, and among open-weight hosting providers, Fireworks AI was the only service we found with reliable TF support.
Even for our TF reliability test (clone = base with different seed), raw ESS is 26\% rather than the theoretical 100\%, reflecting API-level non-determinism and floating-point divergence in log-probability summation; weight stabilization restores this to 99\% (\cref{tab:weight-diagnostics}). This 26\% is substantially higher than other policies (0.4--0.7\%), confirming TF captures real distributional similarity, but it is far from the theoretical 100\% and suggests propensity estimates carry meaningful error even before overlap considerations.
OPE failure in our experiments reflects two compounding issues: (i) TF brittleness corrupts propensity estimates, and (ii) CLE limits precision even with perfect propensities. We cannot cleanly separate their contributions, but both matter: if TF were the only problem, weight stabilization should improve ranking more than it does; if CLE were the only problem, clone should have ESS $\approx$ 100\%.
\emph{Mitigations:} enforce schema/conformance checks, ledger failures, and treat results as conditional on TF quality (App.~\ref{app:impl}).

\paragraph{Subgroups and fairness.}
Calibration quality and ESS gains may differ across subgroups.
\emph{Mitigations:} provide subgroup diagnostics (ESS, reliability) and, when feasible, use subgroup–specific calibration/weights or constrained pooling.

\paragraph{Judge informativeness (garbling).}
Coarser judges raise the surrogate information bound and widen CIs.
\emph{Mitigations:} prefer richer rubrics (multi–dimensional $S$ with stable aggregation) and validate with coarsening ablations (empirical Blackwell monotonicity).

\paragraph{Compute.}
Calibrated DR adds one rollout\,+\,judge per $(X,\pi')$; reward calibration refits for calibration-aware inference add modest overhead.
We amortize via TF caches, shared folds, and a small stacking library.

\noindent\textbf{Ethics Statement.}
We analyze retrospective logs that may include sensitive content.
Diagnostics/gates prevent overconfident claims under poor overlap/coverage and surface judge drift.
When identification fails, we report rankings only (\RefuseLevel) and recommend targeted labeling or online checks.
Any deployment should assess subgroup reliability and adopt privacy safeguards for logs.

\noindent\textbf{Reproducibility Statement.}
Exact prompts, schema, TF contract, pseudocode, and numerics appear in the appendices.
Code and data are available at \url{https://github.com/cimo-labs/cje-arena-experiments}; the CJE library is available via \texttt{pip install cje-eval} (\url{https://github.com/cimo-labs/cje}).

\section{Conclusion}
\label{sec:conclusion}

Uncalibrated LLM-as-judge evaluation is the default practice, and it produces rankings that can be inverted, confidence intervals with near-zero coverage, and OPE estimates that fail silently. These are not edge cases; they are the norm when calibration is ignored.

We introduced \textbf{\CJE}, a framework that makes it affordable to aim at the right target. By calibrating cheap judges against a small oracle slice, teams no longer need to choose between expensive labels and proxies that may point the wrong direction. CJE fixes all three failures, built around a single principle: encode justified assumptions as model restrictions. Restrictions to \emph{subspaces} (nuisance-orthogonal scores), \emph{monotone cones} (mean-preserving reward/weight calibration), and \emph{simplices} (variance-hedged stacking) preserve the estimand while \emph{weakly reducing variance}.

Concretely, \emph{reward calibration} learns a mean-preserving surrogate $R=f(Z(S))$; \emph{weight stabilization} stacks $S$-monotone candidates to produce unit-mean ratios with empirical $\ESS$ gains via OOF stacking and a variance guard; \emph{calibrated DR} delivers $\sqrt{n}$ inference under cross-fitting; \emph{influence-function stacking} minimizes plug-in IF variance; and \emph{calibration-aware inference} with bootstrap and bias-corrected estimation ($\hat\theta_{\mathrm{aug}}$) achieves near-nominal CI coverage.

Theoretically, the \emph{Efficiency via Model Restriction} result, applying established semiparametric principles to surrogate-based evaluation, explains why encoding justified knowledge as model restrictions lowers the efficiency bound, which is \emph{attainable} with projection-designed estimators and cross-fitting. Two design corollaries follow: (\emph{i}) \emph{Blackwell–efficiency monotonicity}: richer judges (finer $\sigma$-fields) strictly help; (\emph{ii}) under the exact $\mathrm{IsoMeanOne}_S$ projection, isotonic mean-one calibration weakly reduces weight dispersion (the reference implementation uses a simpler normalization that empirically yields similar $\ESS$ gains). Empirically, on Arena-derived logs, weight stabilization turns near-degenerate ratios into stable weights (large $\ESS$ gains), stacked-DR achieves near-nominal coverage and near-$\sqrt{n}$ scaling, and stacking improves ordering; when calibration support is limited, \CJE{} \emph{flags} the issue and reports robust \emph{rankings} with conservative uncertainty (\RefuseLevel).

\paragraph{Takeaways.}
(i) \emph{Default to Direct}: For open-ended generation under nontrivial policy shifts, logs-only OPE faces two compounding issues---TF brittleness and coverage sparsity (low TTC)---that together create a precision floor weight stabilization cannot overcome. In constrained settings or near-clone regimes, OPE may remain viable; gate on TTC $\geq 0.70$ before relying on logged data.\\
(ii) \emph{Always use bootstrap inference with $\hat\theta_{\mathrm{aug}}$}: naive CIs achieve 0\% coverage; bootstrap with bias correction achieves ${\sim}$95\%.\\
(iii) \emph{Validate transportability}: the mean transport test catches real failures (e.g., adversarial policies).

\paragraph{Future work.}
Selection-aware inference over large policy sets; robust/DP isotonic calibration (mirror/Bregman projections) for heavy tails; active oracle budgeting via shadow prices; sequential/agent evaluations with prefix-aware weight stabilization and stepwise DR; and subgroup-aware constraints with fairness diagnostics.

\section{Related Work}
\label{sec:related}

\CJE\ draws on four research threads: surrogate-assisted causal inference, LLM evaluation, probability calibration, and off-policy evaluation. We position our contributions by first clarifying three regimes of surrogacy that organize the literature.

\paragraph{Three regimes of surrogacy.}
Surrogate-assisted estimation admits three regimes with increasing assumptions and decreasing oracle burden:

\begin{enumerate}[leftmargin=*,itemsep=2pt,topsep=2pt]
\item Surrogates for efficiency only \citep{KallusMao2020Surrogates}: The surrogate $S$ improves estimation efficiency (e.g., as features in outcome models) but does not identify treatment effects. Oracle labels $Y$ are required in every evaluation context. This is the fallback when stronger surrogacy assumptions fail.

\item Local surrogacy \citep{AtheyChettyImbensKang2019,Prentice1989Surrogate}: Surrogate sufficiency (S1) holds within a single environment $g^\star$: $\E[Y \mid X, A, S] = f(S, X)$ in $g^\star$. Calibrate once per environment; evaluate many policies within that environment using only $S$.

\item Reusable calibration with audit (CJE): Attempt to reuse calibration $f$ across environments, but budget oracle labels in each target context to audit transport rather than assume it. When the mean transport test fails, flag the policy/context and either recalibrate or fall back to oracle-only evaluation.
\end{enumerate}

\noindent CJE targets Regime~3 subject to audit. Unlike Kallus \& Mao (surrogates for efficiency only) and Athey et al.\ (surrogate index in a single environment), CJE attempts cross-environment reuse, but treats transportability as auditable rather than assumed. The benefit is amortization when transport holds; the cost is budgeting a small oracle slice per target context. Our diagnostics (mean residual test, TTC, Bhattacharyya affinity) detect when transport fails, making CJE an evaluation protocol with explicit failure detection rather than a brittle identification claim.
The key difference from Athey et al.: in their two-sample setting, $Y$ is often missing in the experimental arm, so surrogacy and comparability must be \emph{assumed}. CJE explicitly budgets oracle labels in each target environment, allowing us to \emph{audit} (and if needed, correct) the transport gap $\E_{\pi'}[\varepsilon]$ rather than assume it away (\cref{fig:dag}, \cref{eq:transport-decomp}).

\paragraph{Auditable assumptions via negative controls and proxies.}
CJE's approach to transportability draws on a broader principle in causal inference: turning untestable identification conditions into checkable diagnostics using auxiliary data.
Negative controls provide a canonical example: variables known to have no causal relationship with treatment or outcome can detect residual confounding when associations appear \citep{Lipsitch2010}.
Proximal causal inference extends this idea, using proxy variables to construct testable restrictions that support identification under weaker assumptions \citep{TchetgenTchetgen2024}.
CJE applies a similar philosophy: rather than assuming global surrogacy, we budget oracle labels in each target context to test a \emph{policy-wise moment restriction} ($\E_{\pi'}[Y - f(S,X)] = 0$).
The contribution is operational: we provide an evaluation protocol with explicit pass/fail criteria and decision rules (reuse calibration vs.\ recalibrate vs.\ refuse level claims), making the assumption \emph{auditable} rather than hoping it holds.
CJE is not a negative-control estimator; the parallel is conceptual and design-oriented, not methodological.

\paragraph{LLM-as-judge evaluation.}
Automatic judges (LLMs scoring other LLMs) are now standard for scalable evaluation \citep{Zheng2023LLMasJudge,Kim2024PrometheusJudge,KocmiFedermann2023}.
Benchmarks like AlpacaEval \citep{DuboisLiTaoriEtAl2024AlpacaEval} and RewardBench \citep{LambertPiechChenEtAl2024RewardBench} systematically compare judge quality, revealing biases (verbosity, position, self-preference) and drift under distribution shift \citep{DietzEtAl2025LLMJudges}.
These works focus on \emph{ranking} judges by correlation with human labels; \CJE\ addresses the orthogonal problem of \emph{calibrating} a fixed judge to produce unbiased point estimates and calibration-aware confidence intervals for policy value.
Recent work like CalibraEval \citep{Li2025CalibraEval} addresses position and token bias via label-free calibration; \CJE\ is complementary, providing oracle-grounded calibration with transportability guarantees and uncertainty quantification.

\paragraph{Misclassification correction.}
\citet{RoganGladen1978} introduced prevalence estimation under measurement error; \citet{Lee2025LLMJudgeReporting} adapt this for LLM judges, providing confidence intervals that account for calibration uncertainty. Their approach assumes \emph{confusion-matrix stability}---that sensitivity and specificity transfer from calibration to test---analogous to our transport condition. We compare directly in \cref{ssec:binary} and find continuous calibration $13\times$ more sample-efficient, even when binary correction is numerically well-conditioned (mean $J = 0.41$). The efficiency gap arises from ill-conditioning (the Rogan-Gladen estimator's standard error scales like $1/J$) and information loss from binary thresholding.

\paragraph{EIF-based debiasing.}
Concurrent work by \citet{Chen2026EfficientLLMJudge} formalizes LLM-as-judge debiasing through efficient influence functions (EIF), unifying Rogan--Gladen-style measurement error correction with PPI/PPI++-style calibration. They derive the EIF for mean estimation with a surrogate and labeled subset, showing that naive judge-based CIs can achieve 0\% coverage under bias while EIF-based estimators maintain nominal coverage with substantially narrower intervals than confusion-matrix methods. Their framing---learning $\mu(\hat Y) = \E[Y \mid \hat Y]$ on a small calibration set and using residual augmentation---is closely aligned with CJE's reward calibration. CJE extends this to continuous scores, policy/deployment shift (via the mean transport test), and off-policy evaluation with weight stabilization.

\paragraph{Probability calibration.}
Platt scaling \citep{Platt1999}, isotonic regression \citep{ZadroznyElkan2002,NiculescuMizilCaruana2005}, and temperature scaling \citep{GuoPleissSunWeinberger2017Calibration} are standard post-hoc calibration methods for classification.
Our reward calibration applies isotonic regression not for classification confidence but for \emph{reward calibration}: mapping a continuous judge score $S$ to an expected oracle outcome $\E[Y \mid S]$.
The mean-preserving property (Lemma~\ref{lem:iso-mean}) ensures the expected value of calibrated rewards matches the oracle under the logging distribution.

\paragraph{Off-policy evaluation (OPE).}
IPS \citep{HorvitzThompson1952,Hajek1964Rejective} and doubly robust estimators \citep{BangRobins2005,JiangLi2016,KallusUehara2020DRL} are foundational.
Weight stabilization via clipping \citep{Ionides2008TruncatedIS}, balancing \citep{Kallus2018Balanced}, and overlap weighting \citep{LiMorganZaslavsky2018,Crump2009LimitedOverlap} address variance inflation.
Most relevant is recent work on isotonic-calibrated IPW \citep{vanDerLaanLinCaroneLuedtke2025StabIPW} and DR inference via calibration \citep{vanDerLaanLuedtkeCarone2024DRCalibration}.
Our weight stabilization extends this by projecting importance weights onto the cone of \emph{judge-score-monotone} functions, exploiting surrogate structure absent in generic OPE.

\paragraph{Ensemble methods for estimation.}
Super Learner \citep{vanderLaan2007SuperLearner} provides a powerful framework for combining candidate algorithms by minimizing cross-validated prediction risk.
Recent work extends ensemble ideas to semiparametric estimation \citep{ShinLiuColeFine2020EnsembleSemiparametric}, combining efficient estimators derived from factored likelihoods.
Our influence-function stacking (\cref{thm:stack}) takes a complementary approach: instead of combining predictions to minimize MSE, we combine estimators by minimizing the asymptotic variance of their combined influence function.
This provides a general formulation that directly targets efficiency, is not tied to specific likelihood factorizations, and is explicitly framed as optimization in the Hilbert space of influence functions.

\paragraph{Transportability.}
Generalizing from source to target populations is formalized by the selection diagram framework \citep{BareinboimPearl2013Algorithm,BareinboimPearl2013Meta,PearlBareinboim2014}, which uses selection variables $\mathbf{S}$ to encode mechanisms that differ between domains.
A quantity is transportable when selection nodes can be ``blocked'' by conditioning (graphically, when $\mathbf{S} \perp\!\!\perp Y \mid (X, Z)$ in the manipulated graph).
Our transport assumption instantiates this: calibration transports when no selection mechanism points into $Y$ given $(X, A, S)$.
The dashed red edges in \cref{fig:dag} represent potential transport failures: policy shift ($\pi \to Y$) or environment shift ($E \to Y$) can break the $S$-$Y$ calibration, and the transport test detects when this occurs.
Covariate shift methods \citep{Shimodaira2000CovShift,SugiyamaKrauledatMueller2007IWCV} address the special case where selection affects only covariates ($E \to X$), not outcomes directly.
The CLE bound (\cref{sec:background}) formalizes a complementary failure mode: even with valid transportability, logs-only estimation fails under poor coverage, motivating Direct or DR methods.

\section*{Acknowledgements}

We thank Kevin Zielnicki, Molly Davies, Sven Schmit, Brad Klingenberg, Chris Rinaldi, Viridiana Lourdes, Izzy Farley, Sudhish Kasaba Ramesh, Sean Kelly, Adith Swaminathan, Jasmine Nettiksimmons, Winston Chou, and Tom Cunningham for helpful discussions and feedback on earlier drafts of this work.

% ----------------------------- References ---------------------------
\bibliography{references}
\bibliographystyle{iclr2026_conference}

% ------------------------------ Appendix ----------------------------
\appendix
\section{Notation and Formal Setup}
\label{app:notation}

\paragraph{Observed data and policies.}
We observe i.i.d.\ logs
\[
O_i=(X_i,A_i,S_i,Y^{\mathrm{obs}}_i,L_i),\qquad i=1,\dots,n,
\]
generated under a fixed logger $A_i\sim \pzero(\cdot\mid X_i)$. A scalar judge
$S_i=s(X_i,A_i)$ is available on \emph{all} rows. The label indicator $L_i\in\{0,1\}$ marks inclusion in the oracle slice; when $L_i=1$ we observe $Y^{\mathrm{obs}}_i=Y_i$, otherwise $Y^{\mathrm{obs}}_i$ is missing.
For a candidate policy $\pprime$, define the sequence-level importance ratio
\[
W_{\pprime,i}
\;=\;
\frac{\pprime(A_i\mid X_i)}{\pzero(A_i\mid X_i)}
\;=\;
\exp\!\Big\{\log p_{\pprime}(A_i\!\mid\!X_i)-\log p_{\pzero}(A_i\!\mid\!X_i)\Big\},
\]
computed via teacher forcing (TF) with the model’s own tokenizer/rendering. Write
\[
W^{\mathrm{m1}}_{\pprime,i}
\;=\;
\frac{W_{\pprime,i}}{\tfrac{1}{n}\sum_{j=1}^n W_{\pprime,j}}
\]
for the sample–mean–one (SNIPS) baseline (global normalization over the evaluation cohort).

\paragraph{Estimand.}
Let $Y(\pprime)$ denote the outcome under the counterfactual draw $A\sim \pprime(\cdot\mid X)$.
The target is
\[
V(\pprime)=\E\!\big[Y(\pprime)\big].
\]

\subsection{Assumptions (compact)}
\label{app:assumptions}
\textbf{(D1) Fixed logger \& i.i.d.} $(X_i,A_i,S_i)$ are i.i.d.\ under $\pzero$; TF log-likelihoods are stable and well-defined.

\noindent\textbf{(D2) Overlap (positivity).} $\pzero(a\mid x)>0$ whenever $\pprime(a\mid x)>0$, and $\E_{\pzero}[W_{\pprime}^2]<\infty$.

\noindent\textbf{(D3) Judge coverage \& stability.} $S$ is well-defined under both $\pzero$ and $\pprime$; the Radon–Nikodym derivative on $\sigma(S)$ exists; the judge/rubric is stable on the analysis window.

\noindent\textbf{(J1) Oracle slice.} There exists an i.i.d.\ subsample $O=\{i:L_i=1\}$ with $m=|O|\ll n$ on which $Y$ is observed.

\noindent\textbf{(J2-M) Mean sufficiency (monotone).} $\E[Y\mid X,A,S]=\mu(S)$ with $\mu$ weakly nondecreasing. This structural assumption enables the EIF derivations (\cref{thm:sur-eif}); it is \emph{sufficient} for transport (T1) but not required when transport is verified via audit.\quad
\textbf{(J2-SI) Single-index fallback.} There exist $g^\star:\mathcal{S}\times\mathcal{X}\!\to\!\R$ and nondecreasing $\mu^\star$ such that $\E[Y\mid X,A,S]=\mu^\star\!\big(g^\star(S,X)\big)$. When $g^\star$ depends only on $S$, this reduces to J2-M; when covariates $X$ (e.g., response length) are included in $g$, the target remains $V(\pi')$ and the estimand is preserved provided the index is consistently estimated.

\noindent\textbf{(T1) Transport.} $\E_{\pi'}[Y - f(S,X)] = 0$. Calibration learned under $\pi_0$ is mean-unbiased under $\pi'$. This is the \emph{fundamental requirement} for valid surrogate-only policy value estimation. With an audit slice under $\pi'$, T1 is testable (\cref{prop:mean-transport}); without audit, T1 must be assumed (e.g., via J2-M/J2-SI or domain knowledge). See the assumption hierarchy in \cref{app:assumptions-ledger}.

\noindent\textbf{(R1) Tails/moments.} $\E[Y^2]<\infty$, $\E[S^2]<\infty$, and $\E_{\pzero}[W_{\pprime}^2]<\infty$. Under the exact $\mathrm{IsoMeanOne}_S$ projection with relative variance bound $\rho\ge1$, $\Var_n(\hat W_{\pprime})\le \rho\,\Var_n(W^{\mathrm{m1}}_{\pprime})$ (i.e., variance at most $\rho$ times the SNIPS baseline).

\noindent\textbf{(R2) Calibration consistency.} Reward calibration satisfies $\|\hat f(Z)-\E[Y\mid Z]\|_{L^2(P_Z)}=o_p(1)$, where $Z=S$ in monotone mode and $Z=g(S,X)$ (learned index) in two-stage mode. Weight stabilization satisfies $\|\hat W_{\pprime}-m^\star_\rho\|_{L^2(P)}=o_p(1)$, where $m^\star_\rho$ is the $S$-monotone, mean-one, variance-capped projection of $W_{\pi'}$ (see \cref{app:projections}); when $\E[W_{\pi'}\mid S]$ is itself monotone and satisfies the variance constraint, $m^\star_\rho=\E[W_{\pi'}\mid S]$.

\noindent\textbf{(R3) One-of-two rates with cross-fitting.} With nuisances $\hat q(x,a)\approx \E[R\mid x,a]$ and $\hat W_{\pprime}$,
\[
\|\hat q-q^\star_R\|_{L^2(P)}\cdot \|\hat W_{\pprime}-m^\star\|_{L^2(P)} \;=\; o_p(n^{-1/2}),
\]
e.g., either factor is $o_p(n^{-1/4})$ with the other consistent.

\subsection{Cross-fitting and folds}
\label{app:folds}
Let $F:\{1,\dots,n\}\!\to\!\{1,\dots,K\}$ be a deterministic fold map (e.g., a hash of $x\_id$). For any learner $\mathcal{L}$,
train $\hat\eta^{(-k)}=\mathcal{L}$ on $\{i:F(i)\neq k\}$ and use out-of-fold predictions
$\hat\eta^{\mathrm{OOF}}_i=\hat\eta^{(-F(i))}(O_i)$ in influence-function (IF) calculations. The same folds are reused across reward calibration, weight stabilization, and DR nuisances.

\subsection{Projection operators used by \CJE}
\label{app:projections}
\textbf{Monotone cone.} $\mathcal{M}_\uparrow=\{f:\R\!\to\!\R \text{ nondecreasing}\}$. The isotonic projector (PAVA) $\Pi_{\mathcal{M}_\uparrow}$ enjoys:
(i) $L^2$ optimality; (ii) \emph{mean preservation} on the training sample; (iii) \emph{dispersion reduction} by majorization.

\noindent\textbf{Mean-one cone for weights.} For $w\in\R^n$ ordered by $S$, define the mean-one isotonic projection
\[
\mathrm{IsoMeanOne}_S(w)
\;=\;
\argmin_{u}\ \sum_i (u_i-w_i)^2
\quad\text{s.t.}\quad
u\in\mathcal{M}_\uparrow(S),\ \tfrac{1}{n}\sum_i u_i=1,
\]
where $\mathcal{M}_\uparrow(S)$ denotes vectors nondecreasing in the $S$-order. This preserves the sample mean and weakly reduces empirical variance; hence $\ESS$ weakly increases (deterministically, by majorization).

\noindent\textbf{Simplex hull.} For centered IF columns $\{\phi^{(e)}\}_{e\in\mathcal{E}}$, let $\Phi=[\phi^{(e)}]$ and $\Delta=\{\alpha:\alpha_e\ge0,\ \sum_e\alpha_e=1\}$. Influence-function stacking solves
$\min_{\alpha\in\Delta}\alpha^\top \hat\Sigma\,\alpha$ with $\hat\Sigma=(1/n)\Phi^\top\Phi+\lambda I$.

\subsection{Reward calibration and weight stabilization primitives}
\label{app:ops}
\textbf{Reward calibration.} On $O=\{i:L_i=1\}$, fit $R=\hat f(Z(S))$ via:
(i) \emph{two-stage} (recommended): $Z(S,X)=\mathrm{ECDF}\{g(S,X)\}$ with a spline $g$, or
(ii) \emph{monotone-only}: $Z(S)=S$ when covariates are unavailable.
Use OOF predictions $R^{\mathrm{OOF}}$ along IF paths; the point estimate may use a pooled fit. A terminal isotonic step enforces \emph{slice-mean preservation}.

\begin{figure}[t]
  \centering
  \includegraphics[width=\textwidth]{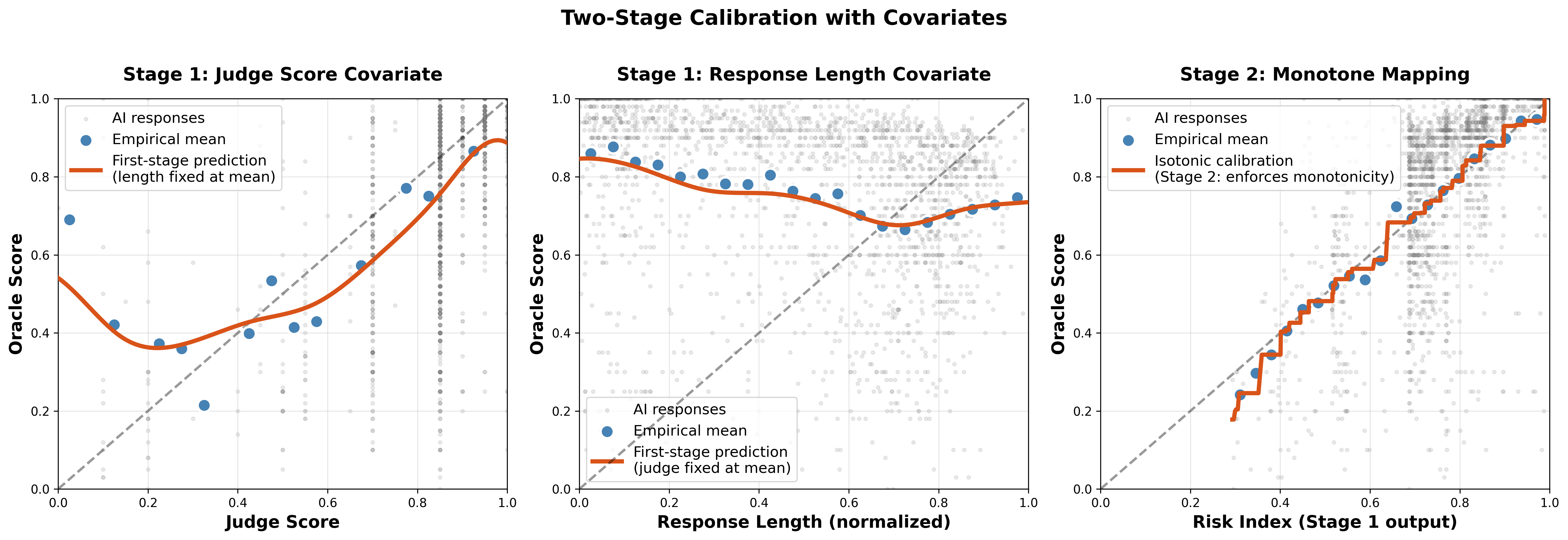}
  \caption{\textbf{Two-stage calibration with covariates.} Left: first-stage spline on judge score (response length fixed at mean). Center: first-stage effect of response length (judge score fixed). Right: second-stage isotonic mapping on the risk index (Stage 1 output) enforces monotonicity while preserving flexibility from the first stage.}
  \label{fig:two-stage-detail}
\end{figure}

\noindent\textbf{Weight stabilization.} (Per fold) Fit up/down isotonic maps on $S$ to obtain $W^{\mathrm{OOF}}_{\uparrow},W^{\mathrm{OOF}}_{\downarrow}$; optionally include $W^{\mathrm{OOF}}_{\mathrm{base}}\equiv 1$. Define residuals $\Delta_i$ (IPS: $R_i$; DR: $R_i-\hat q(X_i,A_i)$). Choose
$\hat\beta\in\arg\min_{\beta\in\Delta_3}\beta^\top \hat\Sigma\,\beta$ where $\hat\Sigma$ is the covariance of $U_c=W^{\mathrm{OOF}}_c \Delta$. Form $W^{\mathrm{stack}}=\sum_c \hat\beta_c W^{\mathrm{OOF}}_c$, renormalize to mean one (global, over the evaluation cohort), optionally apply the \emph{absolute variance guard}
\[
\alpha=\min\!\Big\{1,\ \sqrt{\frac{\rho}{\Var(W^{\mathrm{stack}})}}\Big\},\qquad
W^{\mathrm{blend}}=1+\alpha\,(W^{\mathrm{stack}}-1),
\]
and normalize to mean one: $\hat W_{\pprime}=W^{\mathrm{blend}}/\bar W^{\mathrm{blend}}$. (The theoretical Lemma~\ref{lem:guard} uses a relative bound.)

\subsection{DR nuisances and sequence value}
\label{app:nuisances}
Let $\hat q(x,a)\approx \E[R\mid x,a]$ and define $\hat g_{\pprime}(x)=\sum_a \pprime(a\mid x)\,\hat q(x,a)$. For sequences, approximate $\hat g_{\pprime}(x)$ with one (default) rollout $A'\!\sim\!\pprime(\cdot\mid x)$ and $R'=\hat f\!\big(s(x,A')\big)$; a light smoother (e.g., ridge over $(x,z)$ features) can reduce Monte Carlo noise. Cross-fitting is used throughout.

\subsection{Influence functions and variance}
\label{app:if-var}
Let $\{\phi_i\}_{i=1}^n$ denote the (approximately) centered influence–function contributions of $\hat\psi$, computed with cross–fitted/OOF nuisances and $R^{\mathrm{OOF}}$ along the IF path, so that $\frac{1}{n}\sum_{i=1}^n \phi_i \approx 0$. Under standard regularity conditions,
\[
\sqrt{n}\,\bigl(\hat\psi-\psi\bigr)\ \overset{d}{\longrightarrow}\ \mathcal{N}\!\left(0,\ \Var(\phi)\right),
\quad\text{hence}\quad
\Var(\hat\psi) \approx \frac{\Var(\phi)}{n}.
\]
Since $\E[\phi]=0$, we have $\Var(\phi)\approx \frac{1}{n}\sum_{i=1}^n \phi_i^2$. We estimate the variance of $\hat\psi$ (not $\phi$) and the total variance (including the oracle addition) by
\[
\widehat{\Var}_{\mathrm{main}}
\;=\;
\frac{1}{n^2}\sum_{i=1}^n \phi_i^2,
\qquad
\widehat{\Var}_{\mathrm{total}}
\;=\;
\widehat{\Var}_{\mathrm{main}}+\widehat{\Var}_{\mathrm{cal}},
\]
and report the $(1-\alpha)$ Wald interval
\[
\CI_{1-\alpha}:\ \hat\psi \ \pm\ z_{1-\alpha/2}\,\sqrt{\widehat{\Var}_{\mathrm{total}}}\,.
\]
When serial or cluster dependence is a concern, we additionally report dependence–robust SEs (e.g., cluster–robust sandwich or block/stationary bootstrap) as a sensitivity analysis.

\subsection{Calibration-aware jackknife}
\label{app:oua}
Partition $O$ into $K$ oracle folds $\{O_k\}_{k=1}^K$. For each $k$, refit reward calibration on $O\setminus O_k$, recompute $R^{(-k)}$, and rerun the full pipeline to obtain $\hat\psi^{(-k)}$. Then
\[
\bar\psi=\tfrac{1}{K}\sum_{k=1}^K \hat\psi^{(-k)},
\qquad
\widehat{\Var}_{\mathrm{cal}}=\frac{K-1}{K}\sum_{k=1}^K \big(\hat\psi^{(-k)}-\bar\psi\big)^2.
\]
(With unequal fold sizes, use the standard \emph{weighted} delete-one-group formula.)

\subsection{Diagnostics (definitions)}
\label{app:diagnostics-defs}
\textbf{ESS.} $\ESS(W)=\big(\sum_i W_i\big)^2/\sum_i W_i^2$; we report the fraction $\ESS/n$. Under global mean-one normalization (SNIPS), $\sum_i W_i=n$ and
\[
\frac{\ESS(W)}{n}\;=\;\frac{1}{1+\mathrm{CV}^2(W)}\quad\text{with}\quad \mathrm{CV}^2(W)=\Var(W)\ \text{(since }\E[W]=1\text{)}.
\]
Unless stated otherwise, diagnostics use the global (not per-fold) mean-one scaling.

\noindent\textbf{Max-weight share.} $\max_i W_i\big/\sum_j W_j$.

\noindent\textbf{Tail index (Hill).} For top-$k$ order statistics $W_{(1)}\ge\cdots\ge W_{(k)}$,
$\hat\alpha^{-1}=\tfrac{1}{k}\sum_{j=1}^k \log\!\big(W_{(j)}/W_{(k)}\big)$ (we sweep $k$ over a stability grid and report the plateau).

\noindent\textbf{Bhattacharyya affinity in $S$.} $A_B=\int \sqrt{p_{S|\pprime}(s)\,p_{S|\pzero}(s)}\,ds$ (discrete: sum over bins); $D_B=-\log A_B$.

\noindent\textbf{DR orthogonality score.} $n^{-1}\sum_i \hat W_{\pprime,i}\big(R^{\mathrm{OOF}}_i-\hat q^{\mathrm{OOF}}(X_i,A_i)\big)$ with a Wald CI.

\noindent\textbf{Coverage badge.} Plug-in estimate of $\Pr_{\pprime}\!\big(S\notin [S_{\min}^{\mathrm{orc}},S_{\max}^{\mathrm{orc}}]\big)$; large out-of-range mass with near-flat boundaries triggers \textsc{Limited Calibration Support} and \RefuseLevel.

\subsection{Symbol glossary}
\label{app:glossary}
\begin{center}
\begin{tabular}{ll}
\toprule
Symbol & Meaning \\
\midrule
$X,A$ & Context, action (sequence) \\
$S=s(X,A)$ & Judge score (scalar) \\
$Y$ & Ground-truth outcome (on oracle slice) \\
$\pzero,\pprime$ & Logger and candidate policies \\
$W_{\pprime}$ & Importance ratio $\pprime(A\mid X)/\pzero(A\mid X)$ \\
$W^{\mathrm{m1}}_{\pprime}$ & Mean-one (SNIPS) baseline \\
$R=\hat f(Z(S))$ & Calibrated reward (monotone or two-stage) \\
$\hat W_{\pprime}$ & Stabilized, unit-mean, $S$-monotone weights \\
$\hat q,\ \hat g_{\pprime}$ & Outcome and policy–value nuisances for DR \\
$\phi$ & Per-row centered influence-function contribution \\
$\widehat{\Var}_{\mathrm{main}}$ & Estimator variance $n^{-2}\sum_i\phi_i^2$ \\
$\widehat{\Var}_{\mathrm{cal}}$ & Calibration variance (oracle jackknife) \\
$\ESS(W)$ & Effective sample size \\
\bottomrule
\end{tabular}
\end{center}
     % formal setup & notation
\section{Algorithms (extended)}
\label{app:algorithms}

This appendix gives compact, cross-fitted pseudocode for \CJE{} modules: reward calibration, weight stabilization, estimators (calibrated IPS, calibrated DR), influence-function stacking, calibration-aware variance estimation, and cluster bootstrap for Direct mode. We reuse the same $K$-fold map $F(i)\!\in\!\{1{:}K\}$ across all modules. ``OOF'' denotes \emph{out-of-fold} predictions used along the IF path.

% ------------------------------------------------------------
\begin{algorithm}[t]
\caption{\textsc{Reward-Calibration}: mean-preserving calibration (cross-fitted; automatic two-stage fallback)}
\label{alg:autocalr}
\begin{algorithmic}[1]
\STATE \textbf{Inputs:} Oracle pairs $\{(S_i,Y_i):L_i{=}1\}$; optional covariates $\{X_i\}$; folds $F(\cdot)$; smooth index class $g(\cdot)$
\STATE \textbf{Outputs:} Calibrated rewards $R_i$ and OOF $R^{\mathrm{OOF}}_i$
\FOR{$k=1$ to $K$}
  \STATE Train set $O_{\neg k}\!=\!\{i:L_i{=}1,\,F(i)\neq k\}$; test set $O_k\!=\!\{i:L_i{=}1,\,F(i)=k\}$
  \STATE \emph{Monotone candidate:} $\hat f^{(-k)}_{\uparrow}\!\in\!\arg\min_{f\in \mathcal M_{\uparrow}}\sum_{i\in O_{\neg k}}(Y_i-f(S_i))^2$; set $R^{\mathrm{OOF}}_{\uparrow,i}\!=\!\hat f^{(-k)}_{\uparrow}(S_i)$ for $i\!\in\!O_k$
  \STATE \emph{Two-stage candidate:} fit $g^{(-k)}(S,X)$ on $O_{\neg k}$; ranks $U_i\!=\!\mathrm{ECDF}_{O_{\neg k}}(g^{(-k)}(S_i,X_i))$; fit $\hat h^{(-k)}_{\uparrow}\!\in\!\arg\min_{h\in \mathcal M_{\uparrow}}\sum_{i\in O_{\neg k}}(Y_i-h(U_i))^2$; set $R^{\mathrm{OOF}}_{2s,i}\!=\!\hat h^{(-k)}_{\uparrow}(U_i)$ for $i\!\in\!O_k$
\ENDFOR
\STATE Default: use two-stage mode (recommended when covariates like response length are available); use monotone-only as fallback
\STATE Refit the selected mode on the full oracle slice to obtain \emph{global} $R_i$ for all $i\in\{1{:}n\}$; retain $R^{\mathrm{OOF}}_i$ per fold for IFs
\STATE \textit{Note:} The terminal isotonic step preserves the oracle-slice mean exactly.
\end{algorithmic}
\end{algorithm}

% ------------------------------------------------------------
\begin{algorithm}[t]
\caption{\textsc{Weight-Stabilization}: surrogate-indexed, unit-mean monotone calibration (OOF project $\to$ stack $\to$ cap $\to$ normalize)}
\label{alg:simcalw}
\begin{algorithmic}[1]
\STATE \textbf{Inputs:} Baseline mean-one ratios $W^{\mathrm{m1}}_{\pi'}$; scores $S$; residuals $\Delta$ (\emph{IPS:} $\Delta{=}R$; \emph{DR:} $\Delta{=}R-\hat q$); folds $F(\cdot)$; variance cap $\rho\!\ge\!1$ (default $1$)
\STATE \textbf{Output:} Calibrated weights $\hat W_{\pi'}$ (mean-one, approximately $S$-monotone)
\FOR{$k=1$ to $K$}
  \STATE Train $I_{\neg k}=\{i:F(i)\neq k\}$; test $I_k=\{i:F(i)=k\}$ \COMMENT{OOF candidate projections}
  \STATE Fit isotonic regression of $W^{\mathrm{m1}}_{\pi'}$ on $S$ over $I_{\neg k}$: increasing $m^{(-k)}_{\uparrow}(S)$ and decreasing $m^{(-k)}_{\downarrow}(S)$ (via $-S$); rescale each to mean one on $I_{\neg k}$
  \STATE Predict on $I_k$: $W^{\mathrm{OOF}}_{\uparrow}=m^{(-k)}_{\uparrow}(S)$,\ \ $W^{\mathrm{OOF}}_{\downarrow}=m^{(-k)}_{\downarrow}(S)$,\ \ and optionally include $W^{\mathrm{OOF}}_{\mathrm{base}}\equiv 1$
\ENDFOR
\STATE \textbf{OOF stacking (variance-aware).} Form $U_c=W^{\mathrm{OOF}}_c\,\Delta$ for $c\in\{\mathrm{base},\uparrow,\downarrow\}$; compute $\hat\Sigma_{cd}=\mathrm{cov}(U_c,U_d)+\lambda \mathbf{1}_{c=d}$
\STATE Choose $\hat\beta\in \arg\min_{\beta\in\Delta_3}\beta^\top \hat\Sigma\,\beta$; set $W^{\mathrm{stack}}=\sum_c \hat\beta_c\,W^{\mathrm{OOF}}_c$; renormalize $W^{\mathrm{stack}}$ to sample mean one
\STATE \textbf{Light variance guard (optional).} $\alpha = \min\!\left\{1,\ \sqrt{\rho/\Var(W^{\mathrm{stack}})}\right\}$; set $W^{\mathrm{blend}}=1+\alpha\,(W^{\mathrm{stack}}-1)$
\STATE \textbf{Mean normalization.} $\hat W_{\pi'} \leftarrow W^{\mathrm{blend}}/\bar W^{\mathrm{blend}}$ \hfill(\emph{restores sample mean one})
\end{algorithmic}
\end{algorithm}

% ------------------------------------------------------------
\begin{algorithm}[t]
\caption{\textsc{Cal-IPS}: calibrated importance sampling}
\label{alg:calips}
\begin{algorithmic}[1]
\STATE \textbf{Inputs:} Calibrated rewards $R,R^{\mathrm{OOF}}$ (Alg.~\ref{alg:autocalr}); calibrated weights $\hat W_{\pi'}$ (Alg.~\ref{alg:simcalw})
\STATE \textbf{Outputs:} $\widehat V_{\IPS}$ and IF $\phi^{\IPS}$
\STATE $\displaystyle \widehat V_{\IPS}=\frac{1}{n}\sum_{i=1}^n \hat W_{\pi',i}\,R_i$, \quad $\phi^{\IPS}_i=\hat W_{\pi',i}\,R^{\mathrm{OOF}}_i-\widehat V_{\IPS}$
\end{algorithmic}
\end{algorithm}

% ------------------------------------------------------------
\begin{algorithm}[t]
\caption{\textsc{Calibrated-DR}: sequence-aware doubly robust estimator (cross-fitted)}
\label{alg:drcpo}
\begin{algorithmic}[1]
\STATE \textbf{Inputs:} $\hat W_{\pi'}$; $R$ and $R^{\mathrm{OOF}}$ from Alg.~\ref{alg:autocalr}; folds $F(\cdot)$
\STATE \textbf{Outputs:} $\widehat V_{\DR}$ and IF $\phi^{\DR}$
\FOR{$k=1$ to $K$}
  \STATE Train $\hat q^{(-k)}(x,a)\!\approx\!\E[R\mid x,a]$ on $\{i:F(i)\neq k\}$; predict OOF $\hat q^{\mathrm{OOF}}_i=\hat q^{(-k)}(X_i,A_i)$ for $i\in I_k$
  \STATE Approximate $\hat g^{(-k)}_{\pi'}(x)=\E_{A\sim \pi'(\cdot\mid x)}[\hat q^{(-k)}(x,A)]$ via one rollout $A'\sim\pi'(\cdot\mid X)$ and optional smoothing; obtain OOF $g^{\mathrm{OOF}}_{\pi',i}$ for $i\in I_k$
\ENDFOR
\STATE $\displaystyle \widehat V_{\DR}=\frac{1}{n}\sum_{i=1}^n \left\{ \hat g_{\pi'}(X_i)+\hat W_{\pi',i}\big(R_i-\hat q(X_i,A_i)\big)\right\}$
\STATE $\displaystyle \phi^{\DR}_i=\hat g_{\pi'}(X_i)+\hat W_{\pi',i}\big(R^{\mathrm{OOF}}_i-\hat q^{\mathrm{OOF}}_i\big)-\widehat V_{\DR}$
\end{algorithmic}
\end{algorithm}

% ------------------------------------------------------------
\begin{algorithm}[t]
\caption{\textsc{Influence-Function Stacking}: variance-optimal convex ensembling in influence-function space}
\label{alg:ifstack}
\begin{algorithmic}[1]
\STATE \textbf{Inputs:} Candidates $\{\widehat V^{(e)},\phi^{(e)}\}_{e\in\mathcal E}$ (centered IFs; same folds); ridge $\lambda$
\STATE \textbf{Outputs:} $\widehat V^{(\hat\alpha)}$ and $\phi^{(\hat\alpha)}$
\STATE Form $\Phi=[\phi^{(e)}]_{e\in\mathcal E}$; $\hat\Sigma=(1/n)\Phi^\top\Phi+\lambda I$
\STATE Solve $\hat\alpha\in\arg\min_{\alpha\in\Delta}\alpha^\top \hat\Sigma\,\alpha$
\STATE $\widehat V^{(\hat\alpha)}=\sum_{e\in\mathcal E}\hat\alpha_e\,\widehat V^{(e)}$,\quad $\phi^{(\hat\alpha)}=\sum_{e\in\mathcal E}\hat\alpha_e\,\phi^{(e)}$
\STATE \textbf{Optional outer split:} learn $\hat\alpha$ on one half; apply to the other to reduce selection optimism
\STATE \textit{Support note (Carath\'eodory).} If $\mathrm{rank}(\Phi)=r$, the variance-optimal stack uses at most $r{+}1$ base estimators.
\end{algorithmic}
\end{algorithm}

% ------------------------------------------------------------
\begin{algorithm}[t]
\caption{\textsc{Cal-Aware-Jackknife}: calibration-uncertainty-aware variance addition}
\label{alg:oua}
\begin{algorithmic}[1]
\STATE \textbf{Inputs:} Oracle folds $\{O_k\}_{k=1}^K$; end-to-end estimator $\widehat V(\cdot)$
\STATE \textbf{Outputs:} $\widehat{\Var}_{\mathrm{cal}}$ and $\widehat{\Var}_{\mathrm{total}}=\widehat{\Var}_{\mathrm{main}}+\widehat{\Var}_{\mathrm{cal}}$
\FOR{$k=1$ to $K$}
  \STATE Refit reward calibration on $O\setminus O_k$; recompute $R^{(-k)}$ and all downstream nuisances \& weights; run the full pipeline to get $\widehat V^{(-k)}$
\ENDFOR
\STATE $\displaystyle \bar V=\tfrac{1}{K}\sum_k \widehat V^{(-k)}$,\quad $\widehat{\Var}_{\mathrm{cal}}=\frac{K-1}{K}\sum_k\big(\widehat V^{(-k)}-\bar V\big)^2$
\STATE \textbf{Return:} $\widehat{\Var}_{\mathrm{total}}=\widehat{\Var}_{\mathrm{main}}+\widehat{\Var}_{\mathrm{cal}}$
\end{algorithmic}
\end{algorithm}

% ------------------------------------------------------------
\begin{algorithm}[t]
\caption{\textsc{Bootstrap-Direct}: cluster bootstrap with calibrator refit for Direct mode}
\label{alg:bootstrap-direct}
\begin{algorithmic}[1]
\STATE \textbf{Inputs:} Eval data $\{(X_i, S_i, Y_i, C_i)\}_{i=1}^n$ with cluster IDs $C_i \in \{1,\ldots,M\}$; calibrator mode $m$ (fixed from full-data selection); oracle mask $L$; replicates $B{=}2000$; min oracle count $n_{\min}{=}30$
\STATE \textbf{Outputs:} Point estimates $\{\hat\theta_p\}$, 95\% CIs for each policy $p$
\STATE Fit calibrator $\hat f$ on oracle slice; compute $\hat\theta_p^{\mathrm{aug}}$ for each policy via \cref{eq:theta-aug}
\FOR{$b=1$ to $B$}
  \REPEAT
    \STATE Draw $M$ clusters with replacement: $\{C_1^*, \ldots, C_M^*\}$
    \STATE Form bootstrap sample $\mathcal{D}^* = \bigcup_{j=1}^M \{i : C_i = C_j^*\}$
  \UNTIL{$|L \cap \mathcal{D}^*| \geq n_{\min}$} \COMMENT{Resample-until-valid}
  \STATE Refit calibrator $\hat f^{(b)}$ on $\{(S_i, Y_i) : i \in L \cap \mathcal{D}^*\}$ with mode $m$
  \STATE Compute $\hat\theta_p^{(b)}$ for each policy using $\hat f^{(b)}$ and $\hat\theta_{\mathrm{aug}}$ on $\mathcal{D}^*$
\ENDFOR
\STATE \textbf{Percentile CIs:} $[\hat\theta_p^{\mathrm{lo}}, \hat\theta_p^{\mathrm{hi}}] = [\hat\theta_p^{(0.025B)}, \hat\theta_p^{(0.975B)}]$
\end{algorithmic}
\end{algorithm}

\vspace{-2pt}
\paragraph{Complexity notes.}
PAVA is $O(n)$ after a shared sort by $S$. Weight stabilization is linear-time per fold; the stacking QP is $3{\times}3$ (weights) or a small $|\mathcal E|{\times}|\mathcal E|$ system. Calibrated DR adds one rollout\,+\,judge per $(X,\pi')$. The calibration-aware jackknife refits the calibrator $K$ times and reruns the pipeline. \textsc{Bootstrap-Direct} requires $B{=}2{,}000$ refits of the reward calibrator; since isotonic regression is $O(n\log n)$, total cost is $O(Bn\log n)$. TF caches and precomputed features amortize cost.
   % pseudocode (reward calibration, weight stabilization, calibrated DR, stacking, calibration-aware inference)
\section{Proofs and Technical Lemmas}
\label{app:theory}

We collect standing identities, shape-constrained facts, and proofs for the results in \Cref{sec:theory}. Unless stated otherwise, expectations are under the logging law $P_{\pi_0}$; $L^2$ norms are with respect to the relevant marginal (e.g., $L^2(P)$ or $L^2(P_S)$). We reuse the fold map $F(\cdot)$ from \Cref{app:folds} and the projection operators from \Cref{app:projections}.

\subsection{Standing identities and tools}
\label{app:theory-tools}

\paragraph{Change of measure.}
For any integrable $h(X,A,S,Y)$ and any candidate $\pi'$,
\begin{equation}
\E\!\big[ W_{\pi'}\, h(X,A,S,Y)\big] \;=\; \E_{\pi'}\!\big[ h(X,A,S,Y)\big],
\qquad \E[W_{\pi'}]=1.
\label{eq:com}
\end{equation}

\paragraph{Doob--Dynkin / conditional expectation as $L^2$ projection.}
Let $\mathcal{G}=\sigma(S)$. Then $m^\star(S):=\E[W_{\pi'}\mid\mathcal{G}]$ is the $L^2$ projection of $W_{\pi'}$ onto the closed subspace $L^2(\mathcal{G})\subset L^2(P)$, i.e.,
\begin{equation}
\E\!\big[(W_{\pi'}-U(S))^2\big]
=\E\!\big[(W_{\pi'}-m^\star(S))^2\big]+\E\!\big[(m^\star(S)-U(S))^2\big],
\label{eq:orth}
\end{equation}
for all $U(S)\in L^2(\mathcal{G})$. In particular, $\Var(W_{\pi'}U)\ge \Var(m^\star(S)U)$ for any $U(S)\in L^2(\mathcal{G})$.

\paragraph{Pythagoras in Hilbert spaces.}
Let $L^2_0(P)$ be the mean-zero Hilbert space with inner product $\langle f,g\rangle=\E[fg]$. For a nonempty closed convex set $\mathcal{C}\subset L^2_0$ and any $z\in L^2_0$, denote by $\Pi_{\mathcal{C}}(z)$ the metric projection. If $\mathcal{C}_1\subseteq \mathcal{C}_2$ then $\mathrm{dist}(z,\mathcal{C}_2)\le \mathrm{dist}(z,\mathcal{C}_1)$.

\subsection{Isotonic regression: mean preservation and majorization}
\label{app:iso-facts}

\begin{lemma}[Mean preservation; PAVA]
\label{lem:iso-mean}
Let $\hat f\in\arg\min_{f\in\mathcal{M}_\uparrow}\sum_{i\in I} (y_i-f(s_i))^2$ be the isotonic fit (PAVA) on indices $I$. Then $\frac{1}{|I|}\sum_{i\in I}\hat f(s_i)=\frac{1}{|I|}\sum_{i\in I}y_i$.
\end{lemma}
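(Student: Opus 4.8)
The plan is to give the standard one-line variational argument, with the pool-adjacent-violators (PAVA) block description as an alternative route; either suffices, and I would lead with the variational one since it avoids any discussion of the algorithm's internals.

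First, I would recast the isotonic fit as a Euclidean projection: writing $\hat v = (\hat f(s_i))_{i\in I}$, the fit is the unique minimizer of $\sum_{i\in I}(y_i - v_i)^2$ over the closed convex cone $\mathcal{K}\subset\R^{|I|}$ of vectors that are nondecreasing along the $s_i$-order (with no constraint between tied indices). The key structural observation is that $\mathcal{K}$ is invariant under adding a constant: if $v\in\mathcal{K}$ then $v + t\mathbf{1}\in\mathcal{K}$ for every $t\in\R$, since shifting a monotone vector keeps it monotone. Hence the whole line $\{\hat v + t\mathbf{1}: t\in\R\}$ lies in $\mathcal{K}$, so the scalar map $t\mapsto \sum_{i\in I}(y_i - \hat v_i - t)^2$ attains its minimum at $t=0$; differentiating and setting the derivative to zero yields $\sum_{i\in I}(y_i - \hat v_i)=0$, which is exactly $|I|^{-1}\sum_{i\in I}\hat f(s_i) = |I|^{-1}\sum_{i\in I} y_i$.

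As an alternative (and as a sanity check), I would invoke the PAVA output directly: it partitions $I$ (sorted by $s_i$) into consecutive blocks $B_1,\dots,B_m$ on each of which $\hat f$ equals the block mean $\bar y_{B_\ell}$, so $\sum_{i\in I}\hat f(s_i) = \sum_\ell |B_\ell|\,\bar y_{B_\ell} = \sum_{i\in I} y_i$. The only point needing a word of care is ties in $s_i$: the monotone constraint is on the ordered vector of fitted values, tied indices are pooled to their common group mean by any minimizer, and both the constant-shift invariance and the block-mean description are unaffected. Beyond that bookkeeping there is no real obstacle — existence and uniqueness of $\hat v$ are immediate from projection onto a closed convex set in a finite-dimensional inner-product space.
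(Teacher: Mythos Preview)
Your proof is correct: both the translation-invariance variational argument and the block-mean PAVA description are standard and valid routes to mean preservation. The paper itself does not supply a proof for this lemma---it simply states ``Proofs are standard'' and cites \texttt{Ayer1955PAVA}, \texttt{Barlow1972}, \texttt{RobertsonWrightDykstra1988}, and \texttt{Banerjee2001}---so there is no in-paper argument to compare against; your write-up already exceeds what the paper provides.
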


\begin{lemma}[Dispersion reduction by majorization]
\label{lem:maj}
After sorting by $s$, the isotonic fitted vector $\hat u$ is a mean-preserving \emph{adjacent pooling} of $y$; hence for any convex $\phi$, $\sum_i \phi(\hat u_i)\le \sum_i \phi(y_i)$ \citep{HardyLittlewoodPolya1952,MarshallOlkinArnold2011}. In particular, with sample mean one, $\Var_n(\hat u)\le \Var_n(y)$ and $\ESS(\hat u)\ge \ESS(y)$.
\end{lemma}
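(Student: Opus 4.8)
The plan is to realize the isotonic fit $\hat u$ (after sorting by $s$) as a doubly stochastic image $\hat u = Ty$ of the data vector, and then invoke the Hardy--Littlewood--P\'olya majorization inequality together with \cref{lem:iso-mean}. Everything then reduces to two classical facts: the block characterization of the monotone-cone projection, and ``doubly stochastic maps are majorizing''.

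First I would recall that the minimizer of $\sum_i (y_i-u_i)^2$ over vectors nondecreasing in the $s$-order partitions the sorted index set into maximal consecutive blocks $B_1,\dots,B_m$ on which $\hat u$ is constant, with common value the block average $\bar y_{B_k}=|B_k|^{-1}\sum_{j\in B_k}y_j$; this is the max--min / pool-adjacent-violators characterization of the projection onto the monotone cone \citep{Ayer1955PAVA,Barlow1972}. Concretely $\hat u_i=\max_{a\le i}\min_{b\ge i}\,\mathrm{avg}\{y_a,\dots,y_b\}$, so $\hat u$ is the output of a finite chain of adjacent poolings, each replacing a run by its mean --- exactly the ``mean-preserving adjacent pooling'' in the statement.

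Next I would set $T_{ij}=|B_{k(i)}|^{-1}$ when $i$ and $j$ belong to the same block and $T_{ij}=0$ otherwise, so that $\hat u=Ty$. The matrix $T$ is nonnegative and symmetric, every row sums to one by construction, and every column sums to one by the block structure, so $T$ is doubly stochastic (equivalently, $T$ acts on $y$ as a product of elementary averaging matrices, each doubly stochastic). By the Hardy--Littlewood--P\'olya theorem, $\hat u=Ty$ with $T$ doubly stochastic implies $\hat u\prec y$, and consequently $\sum_i\phi(\hat u_i)\le\sum_i\phi(y_i)$ for every convex $\phi$ \citep{HardyLittlewoodPolya1952,MarshallOlkinArnold2011}; majorization also forces $\sum_i\hat u_i=\sum_i y_i$, consistent with \cref{lem:iso-mean}. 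Specializing to $\phi(t)=(t-1)^2$ and using $\bar{\hat u}=\bar y$ gives $\Var_n(\hat u)\le\Var_n(y)$; when $\bar y=1$ this reads $\sum_i\hat u_i^2\le\sum_i y_i^2$, and since $\sum_i\hat u_i=\sum_i y_i=n$ we obtain $\ESS(\hat u)=n^2/\sum_i\hat u_i^2\ge n^2/\sum_i y_i^2=\ESS(y)$. (As $\ESS$ is scale-invariant, the mean-one normalization is only a convenient anchor; the variance comparison needs only equality of means, which is automatic.)

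The one step that is not entirely routine is the first: making rigorous that the isotonic minimizer genuinely block-averages consecutive runs, rather than merely describing the PAVA procedure. I would ground this in the max--min representation above, or equivalently in the obtuse-angle (KKT) conditions for the $L^2$ projection onto the monotone cone, from which both the block partition and the doubly stochastic representation $\hat u=Ty$ follow at once. After that, the remainder is a direct appeal to majorization theory and a one-line specialization to $\phi(t)=(t-1)^2$.
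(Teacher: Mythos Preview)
Your proposal is correct and matches the intended argument: the paper does not actually prove this lemma, stating only ``Proofs are standard; see \citet{Ayer1955PAVA,Barlow1972,RobertsonWrightDykstra1988,Banerjee2001}.'' Your route via the PAVA block structure, the doubly stochastic block-averaging matrix $T$, and the Hardy--Littlewood--P\'olya majorization theorem is precisely the standard argument those references contain, and your specialization to $\phi(t)=(t-1)^2$ for the variance and $\ESS$ claims is clean.
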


Proofs are standard; see \citet{Ayer1955PAVA,Barlow1972,RobertsonWrightDykstra1988,Banerjee2001}.

\subsection{Proof of \texorpdfstring{\Cref{thm:sur-eif}}{Theorem 1} (surrogate EIF \& variance drop)}
\label{app:proof-sur-eif}
Let $R^\star=\E[Y\mid S]$ and $m^\star(S)=\E[W_{\pi'}\mid S]$. Under mean-sufficiency $\E[Y\mid X,A,S]=R^\star(S)$,
\begin{equation}
V(\pi')=\E\!\big[W_{\pi'}R^\star\big]=\E\!\big[m^\star(S)R^\star(S)\big].
\end{equation}
Standard semiparametric calculations (projecting the unconstrained score onto the tangent space of the surrogate model) yield
\begin{equation}
\phi_{\mathrm{sur}}(O;\pi') = g^\star_{\pi',R}(X)+m^\star(S)\big(R^\star-q^\star_R(X,A)\big)-V(\pi'),
\end{equation}
with $q^\star_R(x,a)=\E[R^\star\mid x,a]$ and $g^\star_{\pi',R}(x)=\E_{A\sim\pi'(\cdot\mid x)}[q^\star_R(x,A)]$ \citep{Bickel1993,VaartWellner2000}. Since $m^\star$ is the $L^2$ projection of $W_{\pi'}$ onto $L^2(\sigma(S))$, Pythagoras (or \eqref{eq:orth}) implies $\Var(\phi_{\mathrm{sur}})\le \Var(\phi_{\mathrm{uncon}})$, strictly unless $W_{\pi'}\in L^2(\sigma(S))$ and $R^\star$ is degenerate.

\subsection{Proof of \texorpdfstring{\Cref{thm:model-restriction}}{Theorem 2} (Efficiency via Model Restriction)}
\label{app:proof-ckp}
Let $\mathcal{M}$ be a baseline semiparametric model with tangent space $T(P)$ and canonical gradient $\phi^\star$.
Let $\mathcal{M}_\mathcal{C} \subset \mathcal{M}$ be a restricted model with tangent space $T_\mathcal{C}(P) \subseteq T(P)$
and canonical gradient $\phi^\star_\mathcal{C}$.

The canonical gradient is defined as the projection of the pathwise derivative onto the tangent space.
Since $T_\mathcal{C}(P) \subseteq T(P)$, we can decompose:
\[
\phi^\star = \phi^\star_\mathcal{C} + \phi^\perp,
\]
where $\phi^\perp \in T(P) \cap T_\mathcal{C}(P)^\perp$ (the orthogonal complement of $T_\mathcal{C}(P)$ within $T(P)$).

By the Pythagorean theorem in Hilbert spaces:
\[
\|\phi^\star\|_2^2 = \|\phi^\star_\mathcal{C}\|_2^2 + \|\phi^\perp\|_2^2 \ge \|\phi^\star_\mathcal{C}\|_2^2,
\]
with equality iff $\phi^\perp = 0$, i.e., $T_\mathcal{C}(P) = T(P)$ (no actual restriction).

Monotonicity for nested restrictions follows: if $T_{\mathcal{C}_1}(P) \supseteq T_{\mathcal{C}_2}(P)$ (i.e., $\mathcal{C}_2$ imposes stronger restrictions, yielding a smaller tangent space), then $\|\phi^\star_{\mathcal{C}_2}\|_2^2 \le \|\phi^\star_{\mathcal{C}_1}\|_2^2$.

For attainability, standard semiparametric arguments apply: one-step corrections or TMLE with cross-fitting
yield regular estimators achieving the restricted efficiency bound \citep{Bickel1993,VaartWellner2000,vanDerLaanRose2011TL}.

\subsection{Proof of \texorpdfstring{\Cref{cor:blackwell}}{Corollary: Blackwell monotonicity}}
\label{app:proof-blackwell}
If $\sigma(S_2)\subseteq\sigma(S_1)$, then $L^2(\sigma(S_2))\subseteq L^2(\sigma(S_1))$.
The surrogate model using $S_1$ has tangent space $T(S_1)$, and the model using $S_2$ has
$T(S_2) \subseteq T(S_1)$ when $\sigma(S_2)\subseteq\sigma(S_1)$.
By \Cref{thm:sur-eif} and the nested-tangent-space argument in \Cref{thm:model-restriction},
the canonical gradient in the smaller tangent space has variance no larger than in the larger one:
$\Var(\phi_{\mathrm{sur}}(S_1))\le \Var(\phi_{\mathrm{sur}}(S_2))$.
Strictness fails only if the finer knowledge already holds, i.e., if $W_{\pi'}$ is $\sigma(S_2)$-measurable and $R^\star$ is degenerate.

\subsection{Proof of \texorpdfstring{\Cref{prop:calips}}{Proposition 1} (Cal-IPS)}
\label{app:proof-calips}
Write
\[
\widehat V_{\IPS}-V(\pi')=(P_n-P)\!\big[m^\star(S)R^\star\big] + P\!\big[(\hat W_{\pi'}-m^\star)R^\star\big]+ P\!\big[m^\star(\hat f(Z(S))-R^\star)\big] + \mathrm{rem},
\]
where $\mathrm{rem}$ collects second-order sample-splitting terms. The empirical process term is $O_p(n^{-1/2})$; the second and third vanish by $L^2$-consistency of weight stabilization and reward calibration (monotone or two-stage) and Cauchy--Schwarz; the remainder is $o_p(1)$ by cross-fitting. Finite-sample dispersion control follows from \Cref{lem:maj} and, if used, the blend cap $\rho\ge1$; under the exact $\mathrm{IsoMeanOne}_S$ projection, $\ESS(\hat W_{\pi'})\ge \ESS(W_{\pi'}^{\mathrm{m1}})$.

\subsection{Proof of \texorpdfstring{\Cref{thm:dr-eff}}{Theorem 3} (Calibrated DR $\sqrt{n}$ limits)}
\label{app:proof-dr}
With cross-fitted nuisances and $R^{\mathrm{OOF}}$ along the IF path,
\[
\widehat V_{\DR}-V(\pi')
= (P_n-P)\!\big[\phi_{\mathrm{sur}}\big]
 + P\!\big[(\hat W_{\pi'}-m^\star)\{q^\star_R-\hat q\}\big]
 + o_p(n^{-1/2}).
\]
The second term is $o_p(n^{-1/2})$ by the one-of-two product-rate condition $\|\hat W_{\pi'}-m^\star\|_2\cdot \|\hat q-q^\star_R\|_2=o_p(n^{-1/2})$ and cross-fitting; the central limit theorem yields the limit variance $\Var(\phi_{\mathrm{sur}})$.

\subsection{Proof of \texorpdfstring{\Cref{thm:budgeted}}{Theorem 4} (budgeted bound)}
\label{app:proof-budget}
Let $\mathcal{W}_\rho=\{m:\E[m]=1,\ m\uparrow S,\ \E[(m-1)^2]\le \rho\,\E[(m^\star-1)^2]\}$. Intersecting the surrogate tangent space with the linear span induced by $m\in\mathcal{W}_\rho$ replaces $m^\star$ by its $L^2(P_S)$ projection $m^\star_\rho=\Pi_{\mathcal{W}_\rho}(m^\star)$ in $\phi_{\mathrm{sur}}$, giving $\phi^{(\rho)}$. Monotonicity in $\rho$ follows from nested convex sets $\mathcal{W}_{\rho_1}\subseteq \mathcal{W}_{\rho_2}$ for $\rho_1\le\rho_2$, and $\lim_{\rho\to\infty}\phi^{(\rho)}=\phi_{\mathrm{sur}}$. If weight stabilization converges to $m^\star_\rho$ (with the same cap), calibrated DR attains $\Var(\phi^{(\rho)})$ by the same one-of-two rate argument.

\subsection{Proof of \texorpdfstring{\Cref{thm:stack}}{Theorem 5} (influence-function stacking) and \texorpdfstring{\Cref{cor:caratheodory}}{Carath\'eodory sparsity}}
\label{app:proof-stack}
Let $\{\widehat V^{(e)}\}$ be regular and asymptotically linear with centered IFs $\{\phi^{(e)}\}$. Set $\Phi=[\phi^{(e)}]$ and $\hat\Sigma=(1/n)\Phi^\top\Phi+\lambda I$. A uniform law of large numbers on the simplex $\Delta$ yields $\hat\Sigma\to\Sigma$ uniformly; by argmin continuity, $\hat\alpha\to\alpha^\star\in\arg\min_{\alpha\in\Delta}\alpha^\top\Sigma\,\alpha$. Hence $\widehat V^{(\hat\alpha)}$ is asymptotically linear with IF $\phi^{(\alpha^\star)}=\sum_e \alpha^\star_e \phi^{(e)}$ and variance $\min_{\alpha\in\Delta}\alpha^\top\Sigma\,\alpha \le \min_e \Sigma_{ee}$. For \Cref{cor:caratheodory}: if $\mathrm{rank}(\Phi)=r$, then the feasible IF combinations lie in an $r$-dimensional affine subspace; by Carath\'eodory's theorem, any point in $\mathrm{conv}\{\phi^{(e)}\}$ admits a representation using at most $r{+}1$ extreme points.

\subsection{Proof of \texorpdfstring{\Cref{prop:oua}}{Proposition 2} (calibration-aware jackknife)}
\label{app:proof-oua}
Let $\widehat V(\hat f)$ be a regular estimator that depends on $f$ only through $R=\hat f(T(S))$, with $f\mapsto \widehat V(f)$ Hadamard-differentiable at $f^\star$ in $L^2(P_S)$. Using a delta-method expansion and cross-fitting (so that oracle folds are asymptotically independent of the IF path), the delete-one-oracle-fold jackknife \citep{Bickel1993,Kunsch1989BlockBootstrap,PolitisRomano1994StationaryBootstrap} consistently estimates the variance contribution from first-stage calibration. Therefore $\widehat{\Var}_{\mathrm{total}}=\widehat{\Var}_{\mathrm{main}}+\widehat{\Var}_{\mathrm{oracle}}$ is consistent for $\Var(\widehat V)$.

\subsection{Auxiliary lemmas used in the main proofs}
\label{app:aux}

\begin{lemma}[OOF mean preservation for reward calibration]
\label{lem:oof-mean}
Let $K$ be fixed and let $R^{\mathrm{OOF}}_i=\hat f^{(-F(i))}(Z(S_i))$ be OOF predictions from reward calibration (either mode). Then $P_n[R^{\mathrm{OOF}}]-P_n[Y]=o_p(1)$ and $P[R^{\mathrm{OOF}}-R^\star]=o_p(1)$ under $L^2(P_S)$-consistency of $\hat f$.
\end{lemma}

\begin{lemma}[Second-order remainder for DR with cross-fitting]
\label{lem:remainder}
Let $\widehat V_{\DR}$ be calibrated DR with cross-fitted $(\hat q,\hat g_{\pi'})$ and calibrated $\hat W_{\pi'}$. Then
\[
\widehat V_{\DR}-V(\pi')-(P_n-P)\phi_{\mathrm{sur}}
= P\!\big[(\hat W_{\pi'}-m^\star)(q^\star_R-\hat q)\big]+o_p(n^{-1/2}),
\]
and the bracketed term is $o_p(n^{-1/2})$ under $\|\hat W_{\pi'}-m^\star\|_2\cdot \|\hat q-q^\star_R\|_2=o_p(n^{-1/2})$.
\end{lemma}

\begin{lemma}[Guard stability]
\label{lem:guard}
Let $W^{\mathrm{stack}}$ be the OOF-stacked candidate and $W^{\mathrm{m1}}_{\pi'}$ the mean-one baseline. For $\rho\ge1$, define $\alpha=\min\!\big\{1,\sqrt{\rho\,\Var(W^{\mathrm{m1}}_{\pi'})/\Var(W^{\mathrm{stack}})}\big\}$ and $W^{\mathrm{blend}}=1+\alpha(W^{\mathrm{stack}}-1)$. Then $\Var(W^{\mathrm{blend}})=\alpha^2\Var(W^{\mathrm{stack}})\le \rho\,\Var(W^{\mathrm{m1}}_{\pi'})$, and the subsequent mean-one isotonic projection cannot increase empirical variance by \Cref{lem:maj}.
\end{lemma}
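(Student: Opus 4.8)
The plan is short: Lemma~\ref{lem:guard} is an elementary variance-bookkeeping argument closed by a single appeal to majorization, so I do not expect a genuine obstacle — only a point of care about which normalization is in force at each stage of the pipeline.

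First I would pin down the affine structure of the blend. By the renormalization in Algorithm~\ref{alg:simcalw}, $W^{\mathrm{stack}}$ has sample mean one, so $W^{\mathrm{stack}}-1$ is centered; hence $W^{\mathrm{blend}}=1+\alpha(W^{\mathrm{stack}}-1)$ also has sample mean one, and $\Var_n(W^{\mathrm{blend}})=\alpha^2\,\Var_n(W^{\mathrm{stack}})$, since a constant shift leaves the empirical variance unchanged and a scalar factor $\alpha$ scales it by $\alpha^2$. (If $\Var_n(W^{\mathrm{stack}})=0$ the candidate is identically $1$, $\alpha$ defaults to $1$, and every claim is trivial, so I assume $\Var_n(W^{\mathrm{stack}})>0$ from here on.)

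Next I would split on the clip in $\alpha=\min\{1,\sqrt{\rho\,\Var_n(W^{\mathrm{m1}}_{\pi'})/\Var_n(W^{\mathrm{stack}})}\}$. If the minimum equals $1$, then by definition $\Var_n(W^{\mathrm{stack}})\le\rho\,\Var_n(W^{\mathrm{m1}}_{\pi'})$ and $W^{\mathrm{blend}}=W^{\mathrm{stack}}$ already satisfies the bound. Otherwise $\alpha^2=\rho\,\Var_n(W^{\mathrm{m1}}_{\pi'})/\Var_n(W^{\mathrm{stack}})$, so $\Var_n(W^{\mathrm{blend}})=\alpha^2\,\Var_n(W^{\mathrm{stack}})=\rho\,\Var_n(W^{\mathrm{m1}}_{\pi'})$ exactly; in either case $\Var_n(W^{\mathrm{blend}})\le\rho\,\Var_n(W^{\mathrm{m1}}_{\pi'})$, which is the first assertion.

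Finally, because $W^{\mathrm{blend}}$ is mean-one, the exact $\mathrm{IsoMeanOne}_S$ projection (Appendix~\ref{app:projections}) returns a mean-one vector $\hat W_{\pi'}$ that is an adjacent, mean-preserving pooling of $W^{\mathrm{blend}}$ in the $S$-order; by \Cref{lem:maj} (majorization) every convex functional weakly decreases, in particular $\Var_n(\hat W_{\pi'})\le\Var_n(W^{\mathrm{blend}})\le\rho\,\Var_n(W^{\mathrm{m1}}_{\pi'})$, and hence $\ESS(\hat W_{\pi'})\ge\ESS(W^{\mathrm{blend}})$. The single thing to verify carefully — and precisely the reason the algorithm renormalizes after each step — is that the sample-mean-one property is maintained at both the blend and the projection, so that the variance inequalities chain; once that invariant is checked, nothing further is required.
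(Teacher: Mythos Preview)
Your proof is correct and matches the paper's approach: the lemma is elementary enough that the paper does not give a separate proof beyond the statement itself, and you have correctly filled in the two steps (the affine variance identity with its case split on $\alpha$, and the appeal to \Cref{lem:maj} for the final projection). Your care about the mean-one invariant is exactly the right point to flag, since it is what makes the unconstrained PAVA and the constrained $\mathrm{IsoMeanOne}_S$ projections coincide on $W^{\mathrm{blend}}$.
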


\emph{Implementation note.} The reference implementation uses a simpler absolute-cap formula $\alpha=\min\{1,\sqrt{\rho/\Var(W^{\mathrm{stack}})}\}$ and omits the final isotonic re-projection (normalizing to mean one instead). This relaxation empirically yields similar ESS improvements but does not carry the deterministic guarantee of \Cref{lem:guard}.

\begin{lemma}[Firm non-expansiveness of projections]
\label{lem:firm}
Metric projections onto closed convex sets in Hilbert spaces are firmly non-expansive:
$\|\Pi_{\mathcal{C}}(x)-\Pi_{\mathcal{C}}(y)\|^2 \le \langle \Pi_{\mathcal{C}}(x)-\Pi_{\mathcal{C}}(y),\, x-y\rangle$.
Hence compositions of projection modules (reward, weight, IF-space projections) are non-expansive. See, e.g., \citet[Prop.~4.16]{BauschkeCombettes2017}.
\end{lemma}

\paragraph{Remarks on dependence.}
If logs exhibit serial or cluster dependence, the IF CLTs can be replaced by block/stationary bootstrap arguments; our reported intervals can include a dependence-robust alternative (App.~\ref{app:diagnostics-defs}).

\subsection{Coverage-Limited Efficiency (CLE) bound}
\label{app:proof-cle}

We derive the CLE bound stated in \cref{eq:cle-bound}. Let $\mathcal{T} \subset \mathcal{X} \times \mathcal{A}$ be a target-relevant region with $\alpha = P_{\pi'}(\mathcal{T})$ and $\beta = P_{\pi_0}(\mathcal{T})$.

\begin{lemma}[Weight second moment on $\mathcal{T}$]
\label{lem:weight-identity}
For the normalized conditional distributions $\pi'_{\mathcal{T}}$ and $\pi_{0,\mathcal{T}}$,
\[
\E_{\pi_0}\!\big[W_{\pi'}^2\, \ind{\mathcal{T}}\big] = \frac{\alpha^2}{\beta}\bigl(1 + \chi^2(\pi'_{\mathcal{T}} \| \pi_{0,\mathcal{T}})\bigr).
\]
\end{lemma}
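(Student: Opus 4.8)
The plan is to reduce the identity to the textbook relation between the second moment of a likelihood ratio and the $\chi^2$-divergence, after peeling off the two normalizing masses $\alpha$ and $\beta$. Let $p(x,a)$ and $q(x,a)$ denote the joint densities of $(X,A)$ under $\pi_0$ and $\pi'$ with respect to a common dominating measure. Because contexts are exogenous (the two policies share the context marginal $p_X$), the joint ratio agrees with the sequence-level ratio, $q(x,a)/p(x,a)=\pi'(a\mid x)/\pi_0(a\mid x)=W_{\pi'}$, and overlap (Assumption D2) gives $q\ll p$, so all the integrals below are well-defined (or the claimed bound holds vacuously with $\chi^2=+\infty$).

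First I would rewrite the second moment as a single integral over $\mathcal{T}$,
\[
\E_{\pi_0}\!\big[W_{\pi'}^2\,\ind{\mathcal{T}}\big]
=\int_{\mathcal{T}}\frac{q(x,a)^2}{p(x,a)^2}\,p(x,a)\,d(x,a)
=\int_{\mathcal{T}}\frac{q(x,a)^2}{p(x,a)}\,d(x,a).
\]
Next I would substitute the normalized conditionals: by definition $\pi'_{\mathcal{T}}$ has density $q/\alpha$ on $\mathcal{T}$ and $\pi_{0,\mathcal{T}}$ has density $p/\beta$ on $\mathcal{T}$, with $\alpha=P_{\pi'}(\mathcal{T})=\int_{\mathcal{T}}q$ and $\beta=P_{\pi_0}(\mathcal{T})=\int_{\mathcal{T}}p$. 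Writing $q=\alpha\,q_{\mathcal{T}}$ and $p=\beta\,p_{\mathcal{T}}$ on $\mathcal{T}$ pulls the constant $\alpha^2/\beta$ out of the integral and leaves $\int_{\mathcal{T}}q_{\mathcal{T}}^2/p_{\mathcal{T}}$. Finally I would invoke the standard identity $\int (dQ/dP)^2\,dP=1+\chi^2(Q\|P)$ (equivalently $\chi^2(Q\|P)=\int (dQ-dP)^2/dP$) with $Q=\pi'_{\mathcal{T}}$ and $P=\pi_{0,\mathcal{T}}$, yielding exactly $\tfrac{\alpha^2}{\beta}\big(1+\chi^2(\pi'_{\mathcal{T}}\|\pi_{0,\mathcal{T}})\big)$.

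There is no substantive obstacle here; the only points needing care are bookkeeping ones — confirming that the joint and conditional ratios coincide because the context marginal is common to both policies, and checking that restricting to the measurable region $\mathcal{T}$ preserves measurability and integrability (finiteness of $\chi^2(\pi'_{\mathcal{T}}\|\pi_{0,\mathcal{T}})$ under D2, with the bound otherwise trivial). This lemma is the engine of \cref{eq:cle-bound}: restricting the IPS influence contributions to $\mathcal{T}$ and bounding the conditional outcome variance below by $\sigma_{\mathcal{T}}^2$, the variance of $\hat\Psi_{\mathrm{IPS}}$ is at least $\sigma_{\mathcal{T}}^2\,\E_{\pi_0}[W_{\pi'}^2\,\ind{\mathcal{T}}]/n$; substituting \cref{lem:weight-identity} and taking square roots produces the stated standard-error floor.
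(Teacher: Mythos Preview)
Your proof is correct and follows essentially the same route as the paper's: both rely on the identity $\int (dQ)^2/dP = 1+\chi^2(Q\|P)$ applied to the normalized conditionals, together with the change of variables $q=\alpha\,q_{\mathcal{T}}$, $p=\beta\,p_{\mathcal{T}}$ that pulls out the factor $\alpha^2/\beta$. The paper merely runs the computation in the reverse direction (starting from the $\chi^2$ definition and rearranging), so the two arguments are equivalent.
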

\begin{proof}
By definition of $\chi^2$ divergence for the normalized conditionals:
$\chi^2(\pi'_{\mathcal{T}} \| \pi_{0,\mathcal{T}}) = \int_{\mathcal{T}} \frac{(d\pi'_{\mathcal{T}})^2}{d\pi_{0,\mathcal{T}}} - 1
= \frac{\beta}{\alpha^2} \int_{\mathcal{T}} \frac{(d\pi')^2}{d\pi_0} - 1
= \frac{\beta}{\alpha^2} \E_{\pi_0}[W_{\pi'}^2 \ind{\mathcal{T}}] - 1$.
Rearranging gives the stated identity.
\end{proof}

\begin{proposition}[CLE bound for IPS-style estimators]
\label{prop:cle}
For any IPS-style estimator $\hat\Psi_{\mathrm{IPS}}$ of $V(\pi') = \E_{\pi'}[Y]$ based on importance-weighted outcomes,
\[
\SE(\hat\Psi_{\mathrm{IPS}}) \;\ge\; \frac{\sigma_{\mathcal{T}}\, \alpha}{\sqrt{\beta\, n}}\, \sqrt{1 + \chi^2(\pi'_{\mathcal{T}} \| \pi_{0,\mathcal{T}})},
\]
where $\sigma_{\mathcal{T}}^2 := \mathrm{ess\,inf}_{(x,a)\in \mathcal{T}} \Var(Y \mid X{=}x, A{=}a)$ is the minimal outcome noise in $\mathcal{T}$.
\end{proposition}

\begin{proof}
We lower bound $\Var_{\pi_0}(W_{\pi'} Y)$ using the Law of Total Variance, conditioning on $(X,A)$:
\[
\Var_{\pi_0}(W_{\pi'} Y) = \E_{\pi_0}[\Var(W_{\pi'} Y \mid X,A)] + \Var_{\pi_0}(\E[W_{\pi'} Y \mid X,A]).
\]
Since $W_{\pi'}$ is $(X,A)$-measurable:
\begin{align*}
\Var(W_{\pi'} Y \mid X,A) &= W_{\pi'}^2 \Var(Y \mid X,A), \\
\E[W_{\pi'} Y \mid X,A] &= W_{\pi'} \E[Y \mid X,A].
\end{align*}
Dropping the non-negative second term:
\[
\Var_{\pi_0}(W_{\pi'} Y) \ge \E_{\pi_0}[W_{\pi'}^2 \Var(Y \mid X,A)].
\]
Restricting to $\mathcal{T}$ (non-negative integrand) and applying $\Var(Y \mid X,A) \ge \sigma_{\mathcal{T}}^2$ on $\mathcal{T}$:
\[
\E_{\pi_0}[W_{\pi'}^2 \Var(Y \mid X,A)] \ge \sigma_{\mathcal{T}}^2 \E_{\pi_0}[W_{\pi'}^2 \ind{\mathcal{T}}].
\]
Applying \Cref{lem:weight-identity}:
\[
\Var_{\pi_0}(W_{\pi'} Y) \ge \frac{\sigma_{\mathcal{T}}^2 \alpha^2}{\beta}(1 + \chi^2).
\]
The asymptotic variance of IPS is $\Var_{\pi_0}(W_{\pi'} Y)/n$; taking square roots yields the stated bound.
\end{proof}

\paragraph{Interpretation.}
The CLE bound has three multiplicative factors:
(i) the \textbf{coverage penalty} $\alpha/\sqrt{\beta}$, which explodes when the logger rarely visits target-typical regions ($\beta \ll \alpha$);
(ii) the \textbf{shape mismatch} $\sqrt{1+\chi^2}$, which inflates the floor even with good coverage if the target concentrates differently within $\mathcal{T}$;
(iii) the standard \textbf{noise term} $\sigma_{\mathcal{T}}/\sqrt{n}$.

\paragraph{Corollary (ESS form).}
Using $\ESS(W_{\pi'}) = n / (1 + \CV^2(W_{\pi'}))$ and the relationship between $\chi^2$ and ESS on $\mathcal{T}$:
\[
\SE(\hat\Psi) \;\ge\; \frac{\sigma_{\mathcal{T}}}{\sqrt{\ESS_{\mathcal{T}}}},
\]
where $\ESS_{\mathcal{T}}$ is the effective sample size \emph{restricted to} $\mathcal{T}$. This shows that high global ESS is insufficient: what matters is ESS in target-relevant regions.

\subsection{TTC collapse with sequence length (structural coverage sparsity)}
\label{app:ttc-collapse}

Even with \emph{perfect} propensity estimates (no teacher-forcing noise), logger coverage collapses exponentially with sequence length. This formalizes why logs-only OPE is fundamentally impractical for open-ended generation, independent of TF brittleness.

\begin{proposition}[Exponential collapse of logger coverage]
\label{prop:ttc-collapse}
Let $q$ and $p$ be token-level distributions with $\mathrm{KL}(p \| q) = \delta > 0$ per token.
For i.i.d.\ sequences of length $L$, the trajectory-level KL divergence is $L \delta$, and
\[
\mathrm{TTC} := P_{\pi_0}(\text{target-typical region}) \;\le\; \exp\bigl(-\Omega(L \delta)\bigr).
\]
When $L \delta \gg 1$, logger coverage is exponentially small in $L$.
\end{proposition}

\begin{proof}[Sketch]
Under the product measure, trajectory log-likelihood ratios concentrate around $L \cdot \E_p[\log(p/q)] = L \delta$.
By standard large-deviation arguments (Sanov's theorem), the probability that a logger-sampled trajectory falls in the target-typical region decays as $\exp(-L \cdot D)$ for some $D > 0$ depending on $\delta$.
\end{proof}

\paragraph{Simulation verification.}
\Cref{fig:ttc-collapse} demonstrates this collapse empirically. We simulate sequences of length $L \in \{25, 50, 100, 200, 400\}$ with Zipf-distributed tokens ($V{=}1000$, $\alpha{=}1.2$) and reward-tilted target policies ($\eta \in \{0.1, 0.3, 0.5\}$, corresponding to small/medium/large shifts). Outcomes $Y = u_{a_0} + \varepsilon$ depend on the first token's utility (modeling a judge score with constant variance regardless of response length). Even with \emph{exact} importance weights (no TF noise), TTC collapses from $>0.8$ at $L=25$ to $<0.01$ at $L=400$ for $\eta \geq 0.3$. For medium and large shifts, IPS RMSE grows 20--50$\times$ with sequence length (from ${\sim}0.02$ to ${\sim}0.5$); for small shifts where TTC remains above 0.70, IPS stays closer to Direct. Direct RMSE remains flat at ${\sim}0.01$ regardless of sequence length or policy shift.

\begin{figure}[t]
  \centering
  \includegraphics[width=\columnwidth]{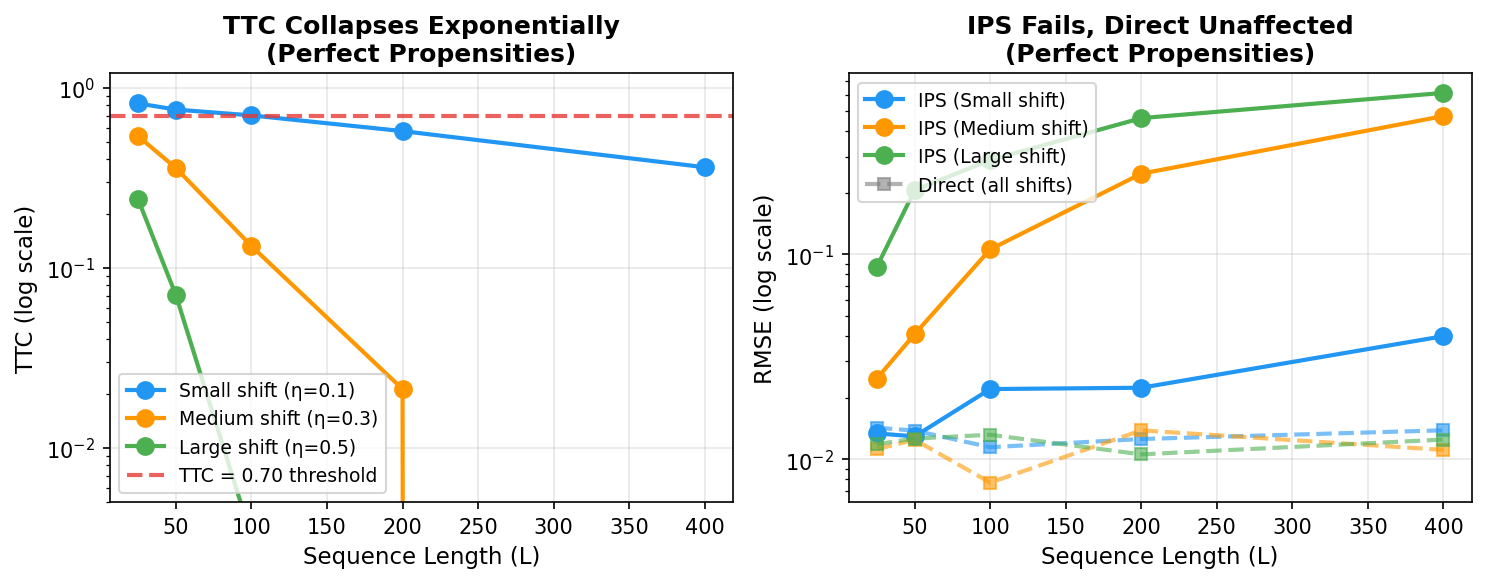}
  \caption{\textbf{TTC collapse with sequence length---even with perfect propensities.} Left: TTC decreases exponentially with $L$ for all policy shifts $\eta > 0$; only the smallest shift ($\eta{=}0.1$) stays above the 0.70 threshold beyond $L{=}100$. Right: For medium/large shifts, IPS RMSE (solid) grows 20--50$\times$ with $L$; for small shifts where TTC ${\geq} 0.70$, IPS stays closer to Direct. Direct RMSE (dashed) remains flat at ${\sim}0.01$ regardless of $L$. This demonstrates that coverage sparsity is a \emph{structural} limitation of autoregressive trajectory spaces, not a bug in propensity estimation.}
  \label{fig:ttc-collapse}
\end{figure}

\emph{Implication.} For open-ended LLM generation with typical response lengths ($L \approx 100$--$500$ tokens), even tiny per-token KL divergence ($\delta \approx 0.03$ nats) yields negligible TTC. This is not a TF bug---it is the geometry of high-dimensional trajectory spaces. Logs-only OPE faces a \emph{structural} precision floor that no estimator can overcome.

\subsection{EIF derivation for Direct mode}
\label{app:eif-direct}

Direct mode estimates $\theta := \E[Y]$ when oracle labels are observed only for a subset $L \subset [n]$.
Let $L_i \in \{0,1\}$ indicate oracle observation, and assume missing at random (MAR):
$\Pr(L_i=1 \mid Z_i, Y_i) = \Pr(L_i=1 \mid Z_i) =: e(Z_i)$, where $Z_i$ is the calibration index.
Under this model, the efficient influence function for $\theta$ is \citep{Tsiatis2006,BangRobins2005}:
\[
  \phi(O_i) = \mu(Z_i) - \theta + \frac{L_i}{e(Z_i)}\bigl(Y_i - \mu(Z_i)\bigr),
\]
where $\mu(z) := \E[Y \mid Z=z]$.
Averaging this EIF with estimated $(\hat\mu, \hat e)$ yields the one-step estimator.
In our setting, oracle inclusion is approximately uniform, so $\hat e(Z_i) \approx |L|/n$, giving:
\[
  \hat\theta_{\mathrm{aug}} = \frac{1}{n}\sum_{i=1}^n \hat\mu(Z_i) + \frac{1}{|L|}\sum_{i \in L}\bigl(Y_i - \hat\mu^{(-i)}(Z_i)\bigr),
\]
which matches \cref{eq:theta-aug}. The cross-fit predictions $\hat\mu^{(-i)}$ avoid overfitting bias in the residual term.

\emph{Remark (TMLE equivalence).} The cross-fitted augmented estimator $\hat\theta_{\mathrm{aug}}$ is the one-step/AIPW estimator for the EIF above, specializing to $e(Z) \approx |L|/n$ under approximately uniform oracle sampling. A TMLE that targets $\hat\mu$ via a logistic fluctuation using clever covariate $H(O) = L/e(Z)$ (equivalently, $H(Z) = 1/e(Z)$ on labeled rows) solves the EIF estimating equation and is asymptotically equivalent to $\hat\theta_{\mathrm{aug}}$, yielding the same influence function and achieving the semiparametric efficiency bound for $\theta$ in the MAR model \citep{vanDerLaanRose2011TL}.

\subsection{Assumptions and what CJE relaxes}
\label{app:assumptions-ledger}

CJE's key innovation is replacing untestable identification assumptions with auditable diagnostics. We organize assumptions into four categories, with \textbf{mean transport} as the fundamental identification requirement.

\paragraph{A0 (Bridge Assumption).}
$\E[Y^* \mid X, A] = \E[Y \mid X, A]$. The operational oracle $Y$ is sufficiently aligned with the idealized deliberation oracle $Y^*$ such that optimizing for $Y$ approximates optimizing for $Y^*$. This is a governance question outside the statistical framework---oracle selection requires domain expertise and construct validity audits.

\paragraph{Category A: Design/missingness requirements (always needed).}
\begin{itemize}[leftmargin=*,nosep]
\item \textbf{Oracle MAR (L1):} $L \perp Y \mid (X, A, S)$. Oracle labeling is ignorable conditional on observed data.
\item \textbf{Oracle positivity (L2):} $P(L=1 \mid X, A, S) > 0$ on the support where calibration will be applied.
\end{itemize}

\paragraph{Category B: Identification for surrogate-only mean claims (the fundamental requirement).}
The requirement for valid policy value estimation is \textbf{mean transport}:
\[
\E_{\pi'}[Y - f(S,X)] = 0.
\]
This is exactly the condition tested by the mean residual audit (\cref{prop:mean-transport}). CJE treats transport as testable rather than assumed:
\begin{itemize}[leftmargin=*,nosep]
\item \textbf{With audit} (${\sim}50$--$200$ oracle labels per $\pi'$): Test transport directly. Pass $\to$ reuse calibration; fail $\to$ recalibrate or refuse level claims.
\item \textbf{Without audit}: Transport must be assumed. In our experiments, calibration transported well across comparable policies (clone, premium, prompt-engineered) but failed for adversarial ones. How aggressively you audit depends on context---stakes, policy similarity, and historical transport stability.
\end{itemize}

\paragraph{Category C: Structural assumptions (sufficient for transport, not required).}
The following are \emph{sufficient conditions} that imply transport. CJE does not require them when transport is verified via audit:
\begin{itemize}[leftmargin=*,nosep]
\item \textbf{Prentice-style surrogacy:} $Y \perp A \mid (X, S)$ (distributional stability across policies).
\item \textbf{Mean sufficiency / single-index (J2-M, J2-SI):} $\E[Y \mid X, A, S] = \mu(g(S,X))$ for some monotone $\mu$ and index $g$.
\end{itemize}
These structural assumptions also enable the efficiency theory (\cref{thm:sur-eif,thm:dr-eff}), but are not required for consistent estimation when transport is audited. Mean sufficiency implies transport (if it holds globally), but transport can hold without it---errors may cancel in expectation even when the structural assumption fails.

\paragraph{Category D: Estimator-dependent requirements.}
\textbf{Overlap (S3):} $\pi'(a \mid x) > 0 \Rightarrow \pi_0(a \mid x) > 0$. Required for IPS and DR estimators. The Direct method requires no overlap---it generates fresh responses under $\pi'$.

\emph{Diagnostic:} Target-Typicality Coverage (TTC). If TTC $< 0.7$, logs-only IPS will fail due to coverage-limited efficiency; prefer Direct.

\subsection{Additional corollaries}
\label{app:additional-corollaries}

\begin{corollary}[Blackwell-efficiency monotonicity]
\label{cor:blackwell}
If $S_2$ is a garbling of $S_1$ (i.e., $\sigma(S_2)\subseteq\sigma(S_1)$), then
$\Var\!\big(\phi_{\mathrm{sur}}(S_1)\big)\le \Var\!\big(\phi_{\mathrm{sur}}(S_2)\big)$,
with strict inequality unless $W_{\pi'}$ is already $\sigma(S_2)$-measurable and $R^\star$ is degenerate.
This is a statistical analog of Blackwell's theorem on experiment comparisons \citep{Blackwell1953}: finer information structures dominate coarser ones.
\end{corollary}

\begin{corollary}[Carath\'eodory sparsity]
\label{cor:caratheodory}
If $\Phi=[\phi^{(e)}]$ has empirical rank $r$, the variance-optimal convex combination uses at most $r{+}1$
base estimators.
\end{corollary}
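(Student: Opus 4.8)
The plan is to recast the simplex QP as a minimum-norm problem over a convex hull in $\mathbb{R}^n$ and then apply Carath\'eodory's theorem inside the span of the influence-function columns. Writing the objective as $\alpha^\top\hat\Sigma\,\alpha=\tfrac1n\|\Phi\alpha\|_2^2$, minimizing over $\alpha\in\Delta$ is the same as minimizing $\|v\|_2^2$ over $v$ in the convex hull $C:=\mathrm{conv}\{\phi^{(e)}:e\in\mathcal{E}\}\subset\mathbb{R}^n$. Since $C$ is compact, a minimizer $v^\star\in C$ exists (and is unique by strict convexity of $\|\cdot\|_2^2$, though its convex representation need not be), and any $\alpha\in\Delta$ with $\Phi\alpha=v^\star$ attains the minimal objective value and, by \cref{thm:stack}, the minimal limiting variance.

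Next I would invoke the rank hypothesis. Because $\mathrm{rank}(\Phi)=r$, the columns $\{\phi^{(e)}\}$ lie in a linear subspace of $\mathbb{R}^n$ of dimension $r$, so their affine hull has dimension at most $r$. Carath\'eodory's theorem, applied in that affine hull, writes $v^\star=\sum_{e\in T}\beta_e\,\phi^{(e)}$ with $\beta$ a convex weight vector and $|T|\le r+1$. Padding $\beta$ with zeros off $T$ yields a feasible $\hat\alpha\in\Delta$ supported on at most $r+1$ estimators with $\phi^{(\hat\alpha)}=\Phi\hat\alpha=v^\star$; hence $\hat\alpha$ is variance-optimal, and since it realizes the same (empirical) influence function as the dense QP solution, the sparsified stacked estimator has the same asymptotic distribution. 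The support reduction is therefore ``free.''

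The argument is a thin wrapper around Carath\'eodory, so there is no substantial obstacle; the two points needing care are (a) reading ``empirical rank $r$'' as the rank of the unregularized Gram matrix $\Phi^\top\Phi$ --- with a strictly positive ridge $\lambda I$ the QP becomes strictly convex and its unique optimum need not be sparse, so the corollary concerns the exact-covariance case and asserts \emph{existence} of a sparse optimizer when several are available --- and (b) observing that the affine hull of $\{\phi^{(e)}\}$ may have dimension only $r-1$ but never exceeds $r$, so the stated bound $r+1$ is correct. A finer analysis (using that a minimum-norm point of a convex set lies on a supporting hyperplane, hence on a lower-dimensional face of $C$) could in some cases shave the bound, but $r+1$ follows immediately from the above.
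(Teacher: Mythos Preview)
Your proof is correct and follows essentially the same approach as the paper: both observe that the influence-function columns lie in an $r$-dimensional subspace and invoke Carath\'eodory's theorem to represent any point of $\mathrm{conv}\{\phi^{(e)}\}$, in particular the variance-minimizing one, using at most $r{+}1$ columns. Your version is more careful than the paper's one-line sketch, explicitly handling existence via compactness, noting that the ridge $\lambda I$ must be taken as zero for the sparsity claim to hold (a subtlety the paper glosses over), and correctly bounding the affine-hull dimension.
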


\subsection{What the bounds do not include}
\label{app:not-in-bounds}
The surrogate and budgeted information bounds describe \emph{model} limits: they do not include (i) finite-sample dispersion control from the sample cap (we encode it population-wise via $\rho$) or (ii) the oracle first-stage uncertainty (added separately by calibration-aware inference).

\subsection{Budgeted bound and IF stacking (OPE-specific results)}
\label{app:budgeted-stacking}

The following results apply to off-policy evaluation (IPS and DR estimators). Direct mode does not use importance weights and is unaffected by these bounds.

\paragraph{Budgeted bound (variance cap).}
For $\rho\ge 1$, define
\[
\mathcal{W}_\rho \;=\; \Big\{ m:\ \E[m]=1,\; m\uparrow S,\; \E\big[(m-1)^2\big]\le
\rho\,\E\big[(m^\star-1)^2\big]\Big\},
\]
and let $m^\star_\rho$ be the $L^2(P_S)$ projection of $m^\star$ onto $\mathcal{W}_\rho$.
Define the budgeted gradient
$\phi^{(\rho)}= g^\star_{\pi',R}(X) + m^\star_\rho(S)\big(R^\star-q^\star_R(X,A)\big) - V(\pi')$.

\begin{theorem}[Budgeted information bound]
\label{thm:budgeted}
The optimal asymptotic variance under the cap $\rho$ equals $\Var(\phi^{(\rho)})$,
which is nonincreasing in $\rho$ and satisfies
$\lim_{\rho\to\infty}\Var(\phi^{(\rho)})=\Var(\phi_{\mathrm{sur}})$.
If weight stabilization converges to $m^\star_\rho$ (with the guard), then calibrated DR attains $\Var(\phi^{(\rho)})$.
\end{theorem}

\noindent\emph{In plain terms:} This formalizes ``don't let weights explode.'' Limiting weight variance ($\rho$) trades a bit of asymptotic efficiency for much better finite-sample behavior.

\begin{theorem}[Influence-function stacking]
\label{thm:stack}
Let $\{\widehat V^{(e)}\}_{e\in\mathcal{E}}$ be regular, asymptotically linear estimators with centered IFs
$\{\phi^{(e)}\}$, and let $\hat\alpha\in\arg\min_{\alpha\in\Delta}\alpha^\top \hat\Sigma\,\alpha$ with
$\hat\Sigma$ the empirical IF covariance. If $\hat\Sigma\to \Sigma$ uniformly on $\Delta$, then the
stacked estimator $\widehat V^{(\hat\alpha)}$ is asymptotically linear with IF $\phi^{(\alpha^\star)}$ and
variance $\min_{\alpha\in\Delta}\alpha^\top \Sigma\,\alpha\le \min_e \Sigma_{ee}$.
An outer split leaves the limit unchanged. (This extends ensemble learning from prediction, where Super Learner \citep{vanderLaan2007SuperLearner} combines algorithms to minimize cross-validated risk, to asymptotic efficiency: we combine estimators by minimizing the variance of their combined influence function.)
\end{theorem}

       % proofs: EIF, CKP, bounds, isotonic properties, CLTs
\section{Diagnostics, Gates, and Reporting (details)}
\label{app:diagnostics}

This appendix formalizes the diagnostics used in \CJE{}, the associated \emph{ship/stop} gates, and the reporting ledger. The main text shows a compact panel per policy; here we provide precise formulas, defaults, and recommended thresholds. Unless stated otherwise, expectations and variances are empirical over the evaluation cohort (global SNIPS normalization).

\subsection{Weight behavior \& overlap}
\label{app:diag-weights}

\paragraph{Effective sample size (ESS).}
For nonnegative weights,
\[
\ESS(W)=\frac{\big(\sum_i W_i\big)^2}{\sum_i W_i^2}.
\]
Under global mean-one normalization (SNIPS), $\sum_i W_i=n$, so
\[
\frac{\ESS(W)}{n}=\frac{1}{1+\mathrm{CV}^2(W)},\qquad
\mathrm{CV}^2(W)=\Var(W)\ \text{when }\E[W]=1.
\]
Report the \emph{ESS fraction} $\ESS(W)/n$ and the multiplicative \emph{uplift}
$\ESS(\hat W_{\pi'})/\ESS(W^{\mathrm{m1}}_{\pi'})$.

\paragraph{Max-weight share.}
$\max_i W_i \big/ \sum_j W_j$ flags single-row dominance; display alongside the empirical 99.5th percentile of $W$.

\paragraph{Tail index (Hill) and CCDF.}
For the top-$k$ order statistics $W_{(1)}\ge\cdots\ge W_{(k)}$,
\[
\hat\alpha^{-1}(k)=\frac{1}{k}\sum_{j=1}^{k}\log\!\left(\frac{W_{(j)}}{W_{(k)}}\right),\qquad k\in\mathcal{K},
\]
swept over a stability grid $\mathcal{K}$ (e.g., 1--5\% of $n$). Plot $\hat\alpha(k)$ with a band over the plateau region (median and IQR over $\mathcal{K}$), and the empirical CCDF of $W$ on a log--log scale.

\paragraph{Overlap in judge space.}
Let $p_{S|\pi_0}$ and $p_{S|\pi'}$ denote the (binned) densities of $S$ under $\pi_0$ and $\pi'$:
\[
A_B=\sum_{b}\sqrt{p_b(\pi_0)\,p_b(\pi')},\qquad D_B=-\log A_B.
\]
Overlay an $S$-binned heatmap of $\log W$ to localize regions of poor overlap.

\subsection{Judge calibration, coverage, and drift}
\label{app:diag-judge}

\paragraph{Reliability diagram.}
Partition $S$ into $B$ bins; for bin $b$, plot the bin mean of $R=\hat f(Z(S))$ against the oracle mean of $Y$, with 95\% binomial intervals. Report a Brier-style \emph{reliability} term and the OOF RMSE.

\paragraph{Coverage badge.}
Estimate the fraction of evaluation mass outside the oracle $S$ range:
\[
\texttt{OutOfRange}=\widehat{\Pr}_{\pi'}\!\big(S<S_{\min}^{\text{orc}}\ \text{or}\ S>S_{\max}^{\text{orc}}\big).
\]
Also report \emph{boundary flatness} (slope of $\hat f$ in the lowest/highest oracle decile). Large \texttt{OutOfRange} together with flat boundaries triggers \textsc{Limited Calibration Support} and the \textsc{REFUSE-LEVEL} gate.

\paragraph{Rank drift (optional anchor).}
Given a fixed anchor set of $(X,A)$ pairs scored over time, compute Kendall’s $\tau$ between historical and current judge rankings with a permutation $p$-value. Change detection on residuals can be monitored via CUSUM/EWMA with FDR control across anchors.

\subsection{DR orthogonality and decomposition}
\label{app:diag-ortho}

\paragraph{Orthogonality score.}
Let $U_i=\hat W_{\pi',i}\big(R^{\mathrm{OOF}}_i-\hat q^{\mathrm{OOF}}(X_i,A_i)\big)$ and $\bar U=n^{-1}\sum_i U_i$.
Form a Wald CI for $\bar U$ using the standard error $\sqrt{\widehat{\Var}(U)/n}$ (or a cluster-/block-robust analogue). Report $\bar U$ and its CI; near-zero indicates successful orthogonality.

\paragraph{DM--IPS decomposition.}
Display $\widehat V_{\mathrm{DM}}=n^{-1}\sum_i \hat g_{\pi'}(X_i)$ and $\widehat V_{\mathrm{Aug}}=n^{-1}\sum_i U_i$,
with CIs and the empirical correlation between their per-row contributions.

\subsection{Uncertainty: IF variance and calibration uncertainty addition}
\label{app:diag-uncertainty}
For any estimator with centered IF contributions $\{\phi_i\}_{i=1}^n$,
\[
\widehat{\Var}_{\mathrm{main}}=\frac{1}{n}\,\widehat{\Var}(\phi_i),\qquad
\widehat{\Var}_{\mathrm{total}}=\widehat{\Var}_{\mathrm{main}}+\widehat{\Var}_{\mathrm{cal}},
\]
with $\widehat{\Var}_{\mathrm{cal}}$ from the oracle jackknife (App.~\ref{app:oua}). Report the \emph{calibration uncertainty share} $\widehat{\Var}_{\mathrm{cal}}/\widehat{\Var}_{\mathrm{total}}$ and, optionally, a dependence-robust alternative (below).

\paragraph{Dependence-robust SEs.}
When time/cluster dependence is suspected, also report:
(i) \emph{cluster-robust} sandwich SEs when a cluster id (e.g., session/user) is available; and
(ii) \emph{block/stationary bootstrap} intervals (block length chosen by a simple variance-stability sweep) \citep{Kunsch1989BlockBootstrap,PolitisRomano1994StationaryBootstrap}.

\paragraph{Bootstrap with bias-corrected estimation.}
For Direct mode, we recommend bootstrap inference with $\hat\theta_{\mathrm{aug}}$ (see \cref{ssec:method-oua}). \Cref{tab:uq-diagnostics} compares UQ methods across oracle fractions, showing that bootstrap with $\hat\theta_{\mathrm{aug}}$ achieves ${\sim}$95\% coverage by correcting both variance and bias.

\begin{table}[t]
\centering
\caption{\textbf{UQ diagnostics for Direct mode.} Coverage, centering ($\bar{Z}$), and calibration (SD$(Z)$) across oracle fractions. Well-calibrated intervals have coverage $\approx 95\%$, $\bar{Z} \approx 0$, and SD$(Z) \approx 1$.}
\label{tab:uq-diagnostics}
\small
\begin{tabular}{@{}llrrr@{}}
\toprule
\textbf{Oracle} & \textbf{Method} & \textbf{Coverage} & \textbf{$\bar{Z}$} & \textbf{SD$(Z)$} \\
\midrule
\multirow{3}{*}{5\%}
& Cluster-Robust & 22\% & $+1.52$ & 9.01 \\
& Cal-Aware Jackknife & 80\% & $+0.48$ & 3.11 \\
& Bootstrap + $\hat\theta_{\mathrm{aug}}$ & \textbf{95\%} & $+0.10$ & 0.98 \\
\midrule
\multirow{3}{*}{10\%}
& Cluster-Robust & 30\% & $+1.12$ & 6.34 \\
& Cal-Aware Jackknife & 80\% & $+0.40$ & 2.52 \\
& Bootstrap + $\hat\theta_{\mathrm{aug}}$ & \textbf{96\%} & $+0.15$ & 0.98 \\
\midrule
\multirow{3}{*}{25\%}
& Cluster-Robust & 42\% & $+0.36$ & 3.46 \\
& Cal-Aware Jackknife & 84\% & $+0.18$ & 2.09 \\
& Bootstrap + $\hat\theta_{\mathrm{aug}}$ & \textbf{97\%} & $+0.08$ & 0.95 \\
\midrule
\multirow{3}{*}{50\%}
& Cluster-Robust & 56\% & $-0.07$ & 2.63 \\
& Cal-Aware Jackknife & 87\% & $-0.05$ & 1.96 \\
& Bootstrap + $\hat\theta_{\mathrm{aug}}$ & \textbf{97\%} & $+0.11$ & 0.92 \\
\bottomrule
\end{tabular}
\vspace{1mm}

\footnotesize{
$Z = (\hat\theta - \theta) / \widehat{\mathrm{SE}}$ is the normalized error.
$N{=}600$ experiments per cell (4 sample sizes $\times$ 50 seeds $\times$ 3 policies).
$\hat\theta_{\mathrm{aug}}$ corrects the plug-in bias via AIPW-style residual augmentation.
}
\end{table}

\subsection{Multiplicity for many-policy comparisons}
\label{app:diag-multiplicity}
For contrasts $\Delta_{p}=\widehat V(\pi'_p)-\widehat V(\pi^\star)$, compute Wald $p$-values and apply BH at level $q\in[0.05,0.2]$; BY can be used under strong dependence. Provide a \emph{pairwise win matrix} with FDR marks and Kendall’s $\tau$ over policy means (all policies).

\subsection{Policy-wise mean transport diagnostic}
\label{app:diag-transport}

For each target policy $\pi'$ (or more generally, each evaluation environment such as a time period or subgroup), we test whether the calibration function $f(S,X)$ learned on the base policy remains mean-unbiased.

\paragraph{Test procedure.}
Given an oracle slice under policy $\pi'$, compute the residual $\varepsilon_{\pi'} = Y - f(S,X)$ and test
\[
H_{0,\pi'}: \E_{\pi'}[\varepsilon_{\pi'}] = 0 \quad \text{vs} \quad H_{1,\pi'}: \E_{\pi'}[\varepsilon_{\pi'}] \neq 0
\]
using a one-sample $t$-test. When testing multiple policies, apply Bonferroni or BH correction across environments.

\paragraph{Interpretation (Proposition~\ref{prop:mean-transport}).}
By \cref{prop:mean-transport}, $H_{0,\pi'}$ holds if and only if the surrogate-estimated policy value equals the oracle value: $\E_{\pi'}[f(S,X)] = \E_{\pi'}[Y]$. Rejecting $H_{0,\pi'}$ implies that surrogate-only evaluation is systematically biased for policy $\pi'$: the calibration does not transport.

\paragraph{Calibration-uncertainty-corrected inference.}
To account for uncertainty in estimating $f$, the confidence interval for the mean residual should incorporate calibration uncertainty (\cref{ssec:method-oua}). In practice, we refit $f^{(-k)}$ omitting each oracle fold and compute residuals under each refit, propagating calibration uncertainty into the test statistic.

\paragraph{Actions on failure.}
If a policy fails the mean transport test:
(i) flag surrogate-only value estimates for that policy as biased;
(ii) report the estimated bias ($\hat\Delta = \bar\varepsilon_{\pi'}$) and its direction;
(iii) either recalibrate using policy-specific oracle data or fall back to oracle-only evaluation for that cell;
(iv) rankings across policies that all pass may still be valid even if absolute levels are not.

\subsection{Gates: thresholds and actions}
\label{app:diag-gates}

\begin{table}[t]
\centering
\caption{Default gates (suggested; tighten for high-stakes launches).}
\label{tab:gates}
\setlength{\tabcolsep}{4pt}
\resizebox{\linewidth}{!}{%
\begin{tabular}{llp{7.8cm}}
\toprule
\bf Gate & \bf Default & \bf Action if failed \\
\midrule
\textsc{Overlap} & $\ESS/n \ge 0.30$ (${\le}3.3{\times}$ var); Hill $\hat\alpha\ge 2$ (finite var); $A_B\ge 0.85$ ($H{\le}0.39$); TTC $\ge 0.70$ &
Use overlap weights or cohort restriction; if TTC${}<0.70$, prefer Direct over IPS; report with warning if $\hat\alpha\in[1,2)$; do not ship offline conclusions if $\hat\alpha<1$ \\
\textsc{Judge} & Reliability band covers diagonal at knots; no persistent drift alarms &
Refresh/extend oracle slice; switch to two-stage index; re-validate \\
\textsc{Identification} & \texttt{OutOfRange} $\le \eta$ (default $\eta{=}5\%$; worst-case bias ${\le}\eta$) or non-flat boundaries &
Flag \textsc{Limited Calibration Support}; set \textsc{REFUSE-LEVEL}: report rankings + partial-ID only \\
\textsc{DR} & Orthogonality CI includes $0$; no NaNs; residual tails acceptable &
Strengthen nuisances/cross-fitting; fall back to stabilized IPS as a diagnostic \\
\textsc{Multiplicity} & FDR control applied when $|\Pi'|>5$ &
Report adjusted $p$-values; avoid uncorrected winner claims \\
\textsc{Cap} & Guard rarely engaged; CI width not sensitive to $\rho$ &
If guard active on $>50\%$ folds or sensitivity high, show cap curve and prefer overlap weights/restriction \\
\bottomrule
\end{tabular}}
\end{table}

\paragraph{Threshold calibration.}
The defaults in Table~\ref{tab:gates} are operationally calibrated for typical LLM evaluation regimes ($n \approx 5{,}000$, MDE $\approx 0.02$).
Practitioners may derive context-specific values:
(i)~ESS$/n \geq f$ bounds variance inflation to $1/f$ (e.g., $f{=}0.30 \Rightarrow {\leq}3.3{\times}$);
(ii)~Hill $\alpha > 2$ ensures finite weight variance, a CLT precondition;
(iii)~$A_B$ maps to Hellinger distance via $H = \sqrt{1 - A_B}$;
(iv)~\texttt{OutOfRange}${\leq}\eta$ bounds worst-case bias by $\eta$ under bounded outcomes.
For high-stakes applications, tighten thresholds or derive them from target MDE using the CLE bound (\cref{eq:cle-bound}).

\paragraph{REFUSE-LEVEL procedure.}
When \textsc{Identification} fails: (i) gray-out level estimates; (ii) highlight \texttt{OutOfRange} and boundary flatness; (iii) report rank-only conclusions with conservative relative CIs; (iv) recommend targeted labeling in uncovered $S$ regions.

\begin{table}[t]
\centering
\caption{Assumptions Ledger: Quick Reference. For each assumption, we list the estimators it applies to, how to test it, and recommended mitigations if violated.}
\label{tab:assumptions-ledger}
\setlength{\tabcolsep}{3pt}
\resizebox{\linewidth}{!}{%
\begin{tabular}{@{}p{2.2cm}p{1.8cm}p{4.5cm}p{4.5cm}@{}}
\toprule
\textbf{Assumption} & \textbf{Applies To} & \textbf{How to Test} & \textbf{If Violated} \\
\midrule
\textbf{Overlap} (positivity) & IPS, DR & ESS$/n \geq 0.30$; Hill $\hat\alpha \geq 2$; $A_B \geq 0.85$; TTC $\geq 0.70$ & Overlap weights; cohort restriction; if TTC${}<0.70$, use Direct \\
\textbf{Transport} (T1) & All (for valid levels) & Mean residual test: $\E_{\pi'}[Y - f(S,X)] = 0$ per policy & \emph{Fundamental requirement}; test with audit; recalibrate or refuse levels if fail \\
\textbf{Mean sufficiency} (J2-M) & Reward calibration & Reliability diagram; regional residuals centered at zero & \emph{Structural}, sufficient for transport; two-stage fallback; not required if transport passes \\
\textbf{Monotonicity} in $S$ & Reward calibration & OOF RMSE comparison & Default to two-stage; use monotone-only if covariates unavailable \\
\textbf{Calibration coverage} & All & \texttt{OutOfRange} $\leq 5\%$ (worst-case bias ${\le}5\%$); non-flat boundaries & \textsc{REFUSE-LEVEL}; targeted labeling in uncovered regions \\
\textbf{Shared prompts} & Direct & Paired design check (same prompt set across policies) & Cannot use Direct; must use OPE methods \\
\bottomrule
\end{tabular}}
\end{table}

\noindent\emph{Assumption hierarchy.} Transport (T1) is the fundamental requirement for valid surrogate-only level claims. Mean sufficiency (J2-M) and Prentice-style surrogacy are \emph{sufficient conditions} that imply transport, but transport can hold without them---errors may cancel in expectation. With an audit slice, test transport directly; structural assumptions become optional. See \cref{app:assumptions-ledger} for details.

\subsection{Planner: MDE and label/log budgets}
\label{app:diag-planner}
Given two independent estimates with equal SE $\widehat{\mathrm{SE}}$, the two-sided 95\% test at 80\% power has
\[
\mathrm{MDE}_{80\%}=(z_{0.8}+z_{0.975})\sqrt{2}\;\widehat{\mathrm{SE}}.
\]
We tabulate $\widehat{\mathrm{SE}}$ versus $(n,\,m/n)$ (labels per log) using \emph{Stacked-DR} with calibration-aware variance and annotate "iso-cost'' lines for the label budget.

\subsection{Weight Stabilization Visualization}
\label{app:weight-viz}

\Cref{fig:weight-stabilization-app} illustrates the effect of weight stabilization on importance weight distributions across target policies.

\begin{figure*}[t]
  \centering
  \includegraphics[width=\textwidth]{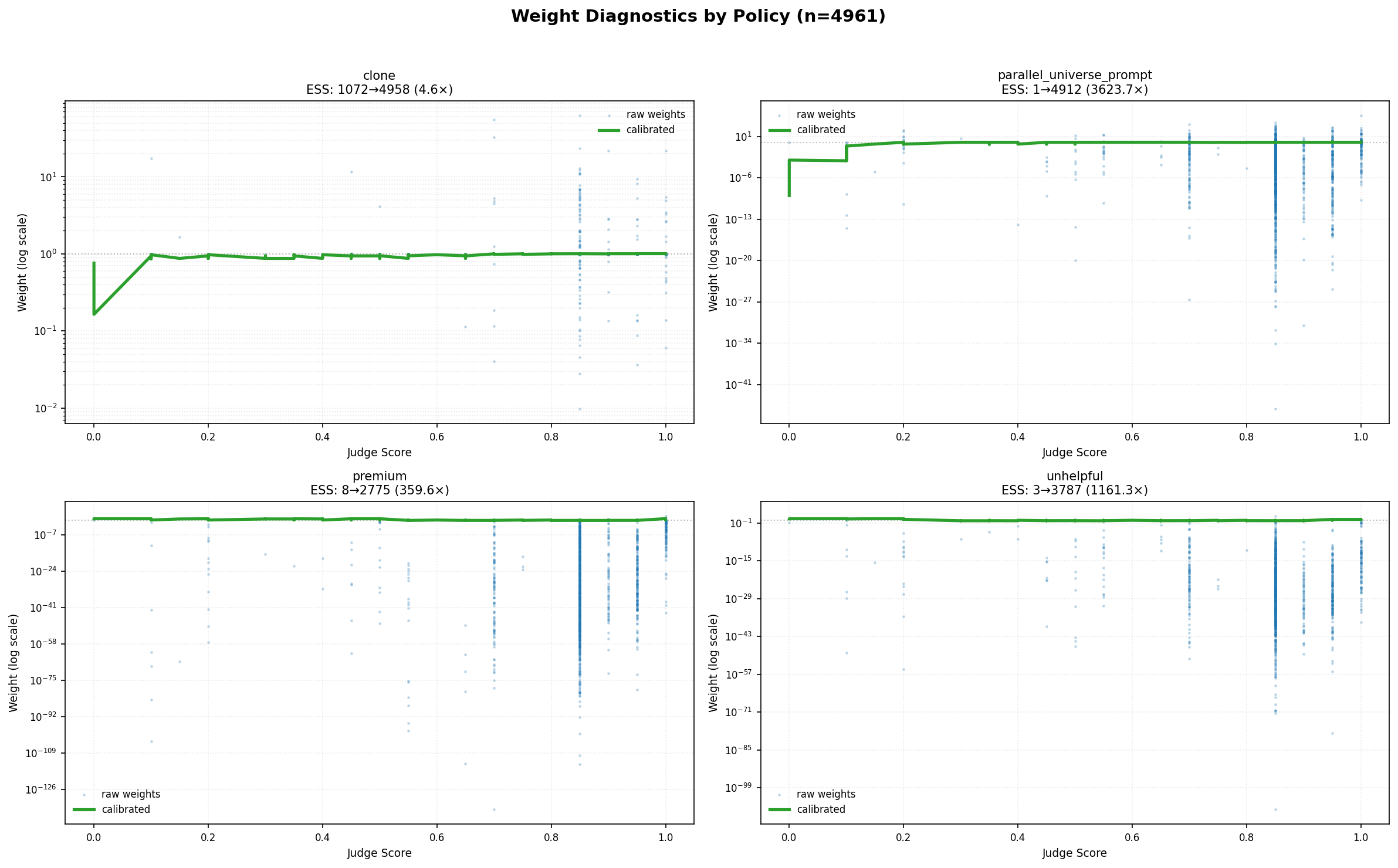}
  \caption{\textbf{Weight stabilization across Arena policies} ($n{=}4{,}961$ samples with complete logprobs for all four target policies). Raw importance weights (blue dots) span $10^{-130}$ to $10^{2}$; $S$-monotone projection (green line) stabilizes weights while preserving unit-mean. ESS improvements range from 4.6$\times$ (\texttt{clone}) to $>$3000$\times$ (\texttt{parallel\_universe\_prompt}). The \texttt{premium} policy shows weights spanning 130 orders of magnitude before stabilization. \textbf{Takeaway:} Weight stabilization recovers ESS from degenerate ($<$1\%) to healthy ($>$80\%), but high ESS alone doesn't guarantee ranking accuracy---coverage (TTC/CLE) is the binding constraint.}
  \label{fig:weight-stabilization-app}
\end{figure*}

\subsection{Reporting ledger (per policy/cohort)}
\label{app:diag-ledger}
Persist:
(i) calibrator mode, OOF risk by tertiles, knots/levels (hash);
(ii) weight stabilization maps, stacking weights $\hat\beta$, guard $\rho$ and blend $\alpha$;
(iii) ESS fraction, max-weight share, Hill index band, $A_B$;
(iv) DR orthogonality score and CI; DM--IPS split;
(v) calibration uncertainty trace $\{\widehat V^{(-k)}\}$ and variance breakdown;
(vi) filter counts (e.g., TF gaps) and an inclusion manifest of $x$-ids;
(vii) multiplicity control (family, $q$, adjusted $p$).

\subsection{Visualization primitives (for reproducible panels)}
\label{app:diag-viz}
\begin{itemize}[leftmargin=*,itemsep=2pt,topsep=2pt]
\item \textbf{ESS/tails strip:} bars for ESS fraction (baseline vs.\ weight stabilization); dot for max-weight share; Hill band.
\item \textbf{$S$-overlap heatmap:} density of $S$ under $\pi_0$ vs.\ $\pi'$ with overlaid $\log W$; annotate $A_B$.
\item \textbf{Reliability panel:} bin means of $(R,Y)$ with 95\% CIs; mode card (monotone vs.\ two-stage; OOF RMSE).
\item \textbf{Orthogonality panel:} point/CI for $\bar U$; DM--IPS bars with CIs and correlation.
\item \textbf{Uncertainty ring:} pie of $\widehat{\Var}_{\mathrm{cal}}/\widehat{\Var}_{\mathrm{total}}$ (calibration uncertainty share).
\end{itemize}

\subsection{Optional: dependence--robust implementation details}
\label{app:diag-dep}
\emph{Cluster-robust SEs:} if a cluster id $c(i)$ is available,
$\widehat{\Var}_{\mathrm{CR}}=n^{-2}\sum_{c}\big(\sum_{i\in c}\phi_i\big)\big(\sum_{i\in c}\phi_i\big)^{\!\top}$,
with finite-sample correction.\quad
\emph{Stationary bootstrap:} sample blocks of geometric length $\ell\sim\mathrm{Geom}(p)$ glued to length $n$; form the bootstrap distribution of $\hat\psi$ (or of $n^{-1/2}\sum\phi_i$) and report percentile or $t$-based bands.

\subsection{Compact gate pseudo-logic}
\label{app:diag-logic}
\begin{algorithm}[t]
\caption{Gate logic (per policy)}
\label{alg:gates}
\begin{algorithmic}[1]
\STATE Compute weight/tail metrics: ESS fraction, max-share, Hill band; compute $A_B$; judge reliability/coverage; orthogonality score; calibration uncertainty share
\IF{$\ESS/n<0.30$ \textbf{or} median Hill$<2$ \textbf{or} $A_B<0.85$} \STATE Flag \textsc{Overlap} (warn; restrict or use overlap weights) \ENDIF
\IF{\texttt{OutOfRange}$>\eta$ \textbf{and} boundary slopes ${\approx}\,0$} \STATE \textsc{REFUSE-LEVEL} $\leftarrow$ TRUE \ENDIF
\IF{Orthogonality CI excludes $0$} \STATE Flag \textsc{DR}; strengthen nuisances/cross-fitting \ENDIF
\IF{Cap engaged on $>50\%$ folds \textbf{or} CI sensitivity to $\rho$ high} \STATE Show cap--sensitivity; prefer overlap weights/restriction \ENDIF
\STATE Apply multiplicity control (BH/BY) when $|\Pi'|>5$
\end{algorithmic}
\end{algorithm}
  % metrics, gates, thresholds, dependence-robust SEs, multiplicity
\section{Implementation, Engineering, and Reproducibility}
\label{app:impl}

This appendix enumerates the concrete artifacts needed to reproduce \CJE{} end–to–end: a minimal logging schema, a teacher–forcing (TF) contract with conformance checks, fold construction, numerics, persisted outputs, and a lightweight resource model. No packages beyond the ICLR style file are required.

\subsection{Minimal logging schema (storage–agnostic)}
\label{app:impl-schema}
Each row corresponds to one prompt–continuation pair under the fixed logger $\pzero$. We persist only what is necessary to reconstruct SNIPS/IPS weights and judge scores.

\begin{table}[h]
\centering
\caption{Columns required for \CJE{}. Columnar formats (Parquet) are convenient but not required.}
\label{tab:schema}
\begin{tabular}{lll}
\toprule
\textbf{Field} & \textbf{Type} & \textbf{Description} \\
\midrule
\texttt{x\_id} & string & Stable identifier (hash of normalized prompt $+$ cohort) \\
\texttt{prompt} & bytes/string & Canonicalized $X$ (tokenizer $+$ normalization recorded) \\
\texttt{continuation} & bytes/string & Realized $A$ under $\pzero$ (full sequence) \\
\texttt{tokens} & int[] & Token ids for $A$ under each model’s TF tokenizer \\
\texttt{logp\_pi0} & float[] & Per–token $\log p_{\pi_0}(a_t\!\mid\!h_t)$ across $A$ \\
\texttt{judge\_S} & float/json & Scalar judge score $S=s(X,A)$ (or struct of sub–scores) \\
\texttt{judge\_cfg} & json & Judge rubric, decoding params, model snapshot hash \\
\texttt{run\_cfg} & json & $\pzero$ engine tag, decoding params, checkpoint hash, seed \\
\texttt{fold\_id} & int & Deterministic fold ($F(x\_id)$; see \S\ref{app:impl-folds}) \\
\texttt{cohort} & string & Optional slice label (time window, traffic source, etc.) \\
\bottomrule
\end{tabular}
\end{table}

\paragraph{TF cache (per target $\pi'$).}
A separate table stores, for each $(x\_id,\pi')$: \texttt{logp\_pi\_prime}, \texttt{logW}$=\log p_{\pi'}-\log p_{\pi_0}$, and
\[
W^{\mathrm{m1}}_{\pi'} \;=\; \exp\!\Big(\texttt{logW} - \mathrm{logsumexp}(\texttt{logW}) + \log n\Big),
\]
i.e., a single global denominator that enforces sample–mean–one. Rows with missing/invalid TF are filtered and recorded in a ledger.

\subsection{Teacher forcing: contract and conformance}
\label{app:impl-tf}
We require a \emph{single–call, chat–native} TF API that returns per–token and summed $\log p_\pi(A\!\mid\!X)$ under a fixed template, tokenizer, and snapshot. Client–side checks:

\begin{itemize}[leftmargin=*,itemsep=2pt,topsep=2pt]
\item \textbf{Determinism.} $k$ identical calls for the same $(X,A,\pi@\text{SNAPSHOT},\text{template})$ must be bit–identical (tolerance $<10^{-7}$).
\item \textbf{Additivity.} Also return $\log p_\pi(X)$ and $\log p_\pi(X{+}A)$ and verify
$\log p_\pi(X{+}A)\approx \log p_\pi(X)+\log p_\pi(A\!\mid\!X)\le \log p_\pi(X)$.
Violation $\Rightarrow$ discard row (and log it).
\item \textbf{Template/tokenizer provenance.} Return immutable hashes; reject moving aliases.
\item \textbf{Masking.} If safety masks are applied, return mask bits and a \texttt{renormalized} flag; prefer an evaluation–only path without hidden renormalization.
\end{itemize}

\noindent\textbf{Conformance snippet (pseudo).}
\begin{verbatim}
lp = TF(model_id, template_id, X, A)
assert same_bits(lp.sum, sum(lp.per_token), tol)
lpX, lpXA = TF_logp(X), TF_logp(X+A)
assert abs(lpXA - (lpX + lp.sum)) < eps and lp.sum <= 0
\end{verbatim}

\subsection{Folds and cross–fitting}
\label{app:impl-folds}
We use $K{=}5$ folds by default. The fold map $F(i)$ is a stable hash of \texttt{x\_id} modulo $K$, ensuring that all modules (reward calibration, weight stabilization, DR nuisances) share identical OOF boundaries. Oracle folds are derived by intersecting $F(i)$ with $L_i{=}1$. The hash rule is $F(x\_id){=}\mathrm{hash}(x\_id)\bmod 5$.

\subsection{Numerics and stability}
\label{app:impl-numerics}
\begin{itemize}[leftmargin=*,itemsep=2pt,topsep=2pt]
\item \textbf{Ratios in log–space.} Keep $\log W$ until forming the global mean–one normalization; use a single \texttt{logsumexp} for the denominator.
\item \textbf{Center residuals.} For stacking objectives and covariances, center $\Delta$ and drop NaNs/inf at ingestion.
\item \textbf{Variance estimates.} Use Welford’s online formulas for high dynamic range; add a tiny ridge ($\lambda\!\in\![10^{-10},10^{-6}]$) to covariance matrices.
\item \textbf{PAVA.} Run once per fold after sorting by $S$. The reference implementation enforces mean-one via \emph{multiplicative rescaling} ($W/\bar W$), which preserves nonnegativity but may slightly distort isotonicity at extremes. \emph{Note:} The exact constrained projection $\mathrm{IsoMeanOne}_S$ defined in \cref{app:projections} preserves both monotonicity and mean-one simultaneously; the deterministic majorization guarantee strictly applies only to that exact operator. We empirically observe similar ESS improvements with multiplicative rescaling.
\item \textbf{Guard (absolute variance cap).} Default $\rho{=}1$; compute $\alpha=\min\!\big\{1,\, \sqrt{\rho/\Var(W^{\mathrm{stack}})}\big\}$; blend and normalize to mean–one. Persist whether the guard engaged. \emph{Note:} This is an absolute cap ($\Var\le\rho$); the theoretical Lemma~\ref{lem:guard} uses a relative bound ($\Var\le\rho\cdot\Var(W^{\mathrm{m1}})$).
\end{itemize}
\emph{Reference code note.} A minimal reference implementation uses \emph{global} isotonic fits and computes the stacking covariance on those same \emph{in-sample} fits (no OOF); production code should use cross-fitting as described above.

\subsection{Persisted artifacts (per policy/cohort)}
\label{app:impl-artifacts}
\begin{itemize}[leftmargin=*,itemsep=2pt,topsep=2pt]
\item \textbf{Calibrator:} mode (monotone vs.\ two–stage), OOF RMSE (overall $+$ tertiles), knots/levels (hash), OOF vs.\ pooled predictions.
\item \textbf{Weights:} isotonic merge metadata, $S$ orientation (up/down), \emph{stacking weights $\hat\beta$ including the identity/baseline candidate}, guard $\rho$ and blend $\alpha$, final mean–one check.
\item \textbf{Estimators:} point estimates, centered IF vectors’ hashes, $\widehat{\Var}_{\mathrm{main}}$, orthogonality score and CI, dependence–robust SEs (if used).
\item \textbf{Calibration-aware inference:} $\{\widehat V^{(-k)}\}_{k=1}^K$, $\widehat{\Var}_{\mathrm{cal}}$, $\widehat{\Var}_{\mathrm{total}}$.
\item \textbf{Diagnostics:} ESS fraction, max–weight share, Hill band, $S$–overlap ($A_B$), coverage badge, gate statuses.
\item \textbf{Ledger:} counts by filter reason (TF gaps, moderation, timeouts), \texttt{x\_id} inclusion manifest.
\end{itemize}

\subsection{Reference run order (pseudocode)}
\label{app:impl-run}
\begin{verbatim}
# 0) Build TF cache for each pi' (one pass per policy)
build_tf_cache --policies <list> --dataset logs.parquet \
               --out tf_cache.parquet

# 1) Reward calibration (cross-fitted; auto monotone vs two-stage)
autocal_r --oracle oracle.parquet --folds 5 --out rewards.parquet

# 2) Weight stabilization per policy
#    (OOF project -> stack -> cap -> translate-to-mean-one)
weight_stabilize --tf-cache tf_cache.parquet --scores S.parquet \
                 --rho 1.0 --folds 5 --out weights.parquet

# 3) Estimation + IFs (Calibrated-IPS / Calibrated-DR)
estimate --rewards rewards.parquet --weights weights.parquet \
         --folds 5 --out estimates.parquet

# 4) Influence-function stacking (optional)
stack --estimates estimates.parquet --out stacked.parquet

# 5) Calibration-aware jackknife
cal_jackknife --oracle-folds 5 --pipeline-config cfg.yaml \
              --out variance.parquet

# 6) Report (diagnostics, gates, CIs)
report --inputs *.parquet --figs figs/ --out report.html
\end{verbatim}

\subsection{Compute and resource model}
\label{app:impl-compute}
Let $n$ be prompts, $\bar T$ the mean continuation length, and $|\Pi'|$ the number of candidate policies.
\begin{itemize}[leftmargin=*,itemsep=2pt,topsep=2pt]
\item \textbf{TF cache.} $O(|\Pi'|\,n\,\bar T)$ forward tokens; microbatch by length; near–linear scaling across GPUs.
\item \textbf{Weight stabilization.} $O(n\log n)$ for sort $+$ $O(n)$ for PAVA per fold; covariance/stacking are tiny ($3{\times}3$).
\item \textbf{Calibrated DR.} If $\hat g_{\pi'}$ uses one rollout per $(X,\pi')$, add $O(|\Pi'|\,n\,\bar T')$ tokens once; a light smoother amortizes Monte Carlo noise.
\item \textbf{Calibration-aware inference.} $K$ refits of reward calibration and re–runs of the pipeline; cache features to avoid recomputation.
\end{itemize}

\subsection{Determinism, versioning, and privacy}
\label{app:impl-determinism}
\textbf{Determinism:} fix seeds at engine, dataloader, and sampler; record random states in \texttt{run\_cfg}; serialize fold maps (internal). \\
\textbf{Versioning:} record immutable hashes for model weights, tokenizer, and template; pin checkpoints. \\
\textbf{Privacy:} encrypt prompts/continuations at rest; hash \texttt{x\_id} with salt; public artifacts include only aggregates/diagnostics and redacted IDs.

\subsection{Exact prompts used in experiments}
\label{app:impl-prompts}

\paragraph{Policy system prompts.}
All policies use Llama 3.3 70B except \texttt{premium} (Llama 3.1 405B), served via Fireworks AI (\url{https://fireworks.ai}). Temperature = 0.7 for all; no top-$p$/top-$k$ truncation, so TF logprobs represent the exact sampling distribution.
\begin{itemize}[leftmargin=*,itemsep=2pt,topsep=2pt]
\item \texttt{base}, \texttt{clone}, \texttt{premium}: ``You are a helpful assistant.''
\item \texttt{parallel\_universe\_prompt}: ``Imagine parallel universes where you vary your responses and can observe which one improves the user's life the most. Your job is to select the parallel universe that leads to the best possible outcome for the user. Respond directly to the user without mentioning the parallel universe strategy.''
\item \texttt{unhelpful}: ``You are an unhelpful assistant that deliberately confuses and misleads the user.''
\end{itemize}

\paragraph{Judge and oracle evaluation prompts.}
Both the judge (\texttt{gpt-4.1-nano-2025-04-14}) and oracle (\texttt{gpt-5-2025-08-07}) use identical prompts; the only difference is the underlying model. Temperature = 0.0 for judge, 1.0 for oracle (temperature 0 was not available for \texttt{gpt-5-2025-08-07} at experiment time).

\noindent\textbf{System prompt:} ``You are an AI evaluator. Rate responses from 0 to 100. Always provide a score, even if the response is incomplete or truncated.''

\noindent\textbf{User prompt template:}
\begin{verbatim}
<task>
Evaluate the quality of the AI assistant's response to the user's question.
Score from 0-100 based on relevance, helpfulness, clarity, and completeness.
If the response is truncated, score based on what is provided.
</task>

<exchange>
<question>{prompt}</question>
<answer>{response}</answer>
</exchange>

<instruction>
Provide your evaluation score (0-100):
</instruction>
\end{verbatim}

         % engineering & TF cache, configs, seeds, schema
\section{Efficient Budget Allocation}
\label{app:budget}

While CJE provides calibration-aware confidence intervals for any sample configuration, practitioners often face a constrained optimization problem: minimizing estimation error subject to a fixed financial or latency budget. Here we derive the optimal allocation between cheap surrogate scores ($n$) and expensive oracle labels ($m$).

\begin{proposition}[Square Root Allocation Law]
\label{prop:sqrt-allocation}
Let $c_S$ and $c_Y$ be the marginal costs of surrogate scoring and oracle labeling, respectively. Assume the variance decomposes as
\[
V_{\mathrm{total}}(n,m)\;\approx\;\frac{\sigma^2_{\mathrm{eval}}}{n}+\frac{\sigma^2_{\mathrm{cal}}}{m},
\qquad B=c_S n+c_Y m.
\]
The allocation $(n^*, m^*)$ minimizing $V_{\mathrm{total}}$ subject to budget $B$ satisfies:
\[
\boxed{\frac{m^*}{n^*} = \sqrt{\frac{c_S}{c_Y}} \cdot \sqrt{\frac{\sigma^2_{\mathrm{cal}}}{\sigma^2_{\mathrm{eval}}}}}
\]
with the feasibility constraint $m^* \le n^*$ (the oracle slice is a subset of scored examples). If the unconstrained optimum exceeds $n$, set $m^* = n$.
\end{proposition}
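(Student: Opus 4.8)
The plan is to treat this as a smooth convex program on the positive orthant, solve it by Lagrange multipliers, and then dispose of the inequality constraint $m\le n$ by a short boundary argument. First I would record the structural facts that make the program well behaved: with both per-unit variances positive, the objective $V_{\mathrm{total}}(n,m)=\sigma^2_{\mathrm{eval}}/n+\sigma^2_{\mathrm{cal}}/m$ is strictly convex on $\{n>0,\,m>0\}$ (its Hessian is diagonal with strictly positive entries $2\sigma^2_{\mathrm{eval}}/n^3$ and $2\sigma^2_{\mathrm{cal}}/m^3$), while the budget set $\{c_S n+c_Y m=B\}$ is an affine hyperplane. Hence the constrained problem has a unique minimizer, and any stationary point of the Lagrangian is that global minimizer.

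Second, I would form $\mathcal{L}(n,m,\lambda)=\sigma^2_{\mathrm{eval}}/n+\sigma^2_{\mathrm{cal}}/m+\lambda(c_S n+c_Y m-B)$ and set its gradient to zero:
\[
-\frac{\sigma^2_{\mathrm{eval}}}{n^2}+\lambda c_S=0,\qquad -\frac{\sigma^2_{\mathrm{cal}}}{m^2}+\lambda c_Y=0.
\]
Eliminating $\lambda$ by dividing the two equations yields $\dfrac{\sigma^2_{\mathrm{eval}}\,m^2}{\sigma^2_{\mathrm{cal}}\,n^2}=\dfrac{c_S}{c_Y}$, i.e.\ $\dfrac{m^*}{n^*}=\sqrt{\dfrac{c_S}{c_Y}}\cdot\sqrt{\dfrac{\sigma^2_{\mathrm{cal}}}{\sigma^2_{\mathrm{eval}}}}$, which is the boxed identity; the budget equation then pins down the absolute levels $n^*=B/(c_S+c_Y r^*)$ and $m^*=r^*n^*$ with $r^*$ the stated ratio. (Equivalently, without calculus: Cauchy--Schwarz gives $(c_S n+c_Y m)\bigl(\sigma^2_{\mathrm{eval}}/n+\sigma^2_{\mathrm{cal}}/m\bigr)\ge\bigl(\sqrt{c_S\sigma^2_{\mathrm{eval}}}+\sqrt{c_Y\sigma^2_{\mathrm{cal}}}\bigr)^2$ with equality exactly at the stated ratio, so for fixed budget $c_S n+c_Y m=B$ the variance is minimized there.)

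Third, I would handle feasibility $m\le n$ (the oracle slice is a subset of the scored set). Parametrizing the budget line by $n$ via $m=(B-c_S n)/c_Y$, the objective becomes $h(n)=\sigma^2_{\mathrm{eval}}/n+\sigma^2_{\mathrm{cal}}c_Y/(B-c_S n)$, which is convex on $(0,B/c_S)$ with unconstrained minimizer $n^*$ from Step~2; the constraint $m\le n$ is equivalent to $n\ge B/(c_S+c_Y)$, and a direct computation gives $n^*\ge B/(c_S+c_Y)\iff r^*\le 1$. If $r^*\le 1$ the unconstrained optimum already lies in the feasible subinterval and we are done. If $r^*>1$, then $n^*<B/(c_S+c_Y)$, so by convexity $h$ is increasing on the feasible subinterval $[B/(c_S+c_Y),B/c_S)$ and is minimized at its left endpoint $n=B/(c_S+c_Y)$; there $m=n$, recovering the fallback clause $n^*=m^*=B/(c_S+c_Y)$.

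The only real subtlety --- and the one step I would state carefully rather than merely invoke --- is justifying the variance model $V_{\mathrm{total}}(n,m)\approx\sigma^2_{\mathrm{eval}}/n+\sigma^2_{\mathrm{cal}}/m$ itself, i.e.\ the ``$\approx$'' in the hypothesis. This follows by combining the OUA decomposition $\Var_{\mathrm{total}}=\Var_{\mathrm{eval}}+\Var_{\mathrm{cal}}$ (\cref{ssec:method-oua}) with the $\sqrt{n}$-asymptotics of the influence-function variance (so $\Var_{\mathrm{eval}}\approx\sigma^2_{\mathrm{eval}}/n$) and the leading-order $1/m$ scaling of the jackknife calibration variance (so $\Var_{\mathrm{cal}}\approx\sigma^2_{\mathrm{cal}}/m$), treating the per-unit constants $\sigma^2_{\mathrm{eval}},\sigma^2_{\mathrm{cal}}$ as fixed. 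Once that modeling step is granted, everything above is elementary convex analysis and the proposition follows.
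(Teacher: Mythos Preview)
Your core argument is the same Lagrangian computation the paper uses: form $\mathcal{L}$, take first-order conditions, eliminate $\lambda$, read off the ratio. Where you go further than the paper is in three places: (i) you explicitly verify strict convexity so that the stationary point is a global minimum (the paper silently assumes this); (ii) you offer the Cauchy--Schwarz inequality as a calculus-free alternative, which is a genuinely different and more elementary route that also yields the sharp lower bound on $V_{\mathrm{total}}$ for fixed budget; and (iii) you actually prove the feasibility clause by parametrizing the budget line, showing the constraint $m\le n$ is $n\ge B/(c_S+c_Y)$, and using convexity to conclude the boundary point $m=n$ is optimal when $r^*>1$ --- the paper merely states the fallback without justification. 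All three additions are correct and make the argument more complete; the Cauchy--Schwarz aside in particular is a nice touch because it gives the minimum value directly rather than only the minimizer.
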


\emph{Note:} This first-order approximation assumes additive variance components with $O(1/m)$ calibration variance. While isotonic regression has nonparametric rates for \emph{function estimation}, the \emph{mean functional} $\E[f(S)]$ typically achieves $\sqrt{m}$ rates under smoothness. In practice, estimate $\sigma^2_{\mathrm{cal}} \approx m \cdot \widehat{\Var}_{\mathrm{cal}}$ and $\sigma^2_{\mathrm{eval}} \approx n \cdot \widehat{\Var}_{\mathrm{eval}}$ from the calibration-aware variance decomposition, then apply the formula.

\paragraph{Closed-form solution.} Solving the constrained optimization yields:
\[
n^* = \frac{B\,\sqrt{\sigma^2_{\mathrm{eval}}/c_S}}{\sqrt{c_S\sigma^2_{\mathrm{eval}}}+\sqrt{c_Y\sigma^2_{\mathrm{cal}}}},
\qquad
m^* = \frac{B\,\sqrt{\sigma^2_{\mathrm{cal}}/c_Y}}{\sqrt{c_S\sigma^2_{\mathrm{eval}}}+\sqrt{c_Y\sigma^2_{\mathrm{cal}}}}.
\]

\begin{proof}
We form the Lagrangian $\mathcal{L}$ for minimizing variance subject to cost $B$:
\[
\mathcal{L}(n, m, \lambda) = \frac{\sigma^2_{\mathrm{eval}}}{n} + \frac{\sigma^2_{\mathrm{cal}}}{m} + \lambda(c_S n + c_Y m - B)
\]
Taking partial derivatives with respect to $n$ and $m$ yields the first-order conditions:
\[
\frac{\partial \mathcal{L}}{\partial n} = -\frac{\sigma^2_{\mathrm{eval}}}{n^2} + \lambda c_S = 0 \implies \lambda = \frac{\sigma^2_{\mathrm{eval}}}{c_S n^2}
\]
\[
\frac{\partial \mathcal{L}}{\partial m} = -\frac{\sigma^2_{\mathrm{cal}}}{m^2} + \lambda c_Y = 0 \implies \lambda = \frac{\sigma^2_{\mathrm{cal}}}{c_Y m^2}
\]
Equating the expressions for $\lambda$:
\[
\frac{\sigma^2_{\mathrm{eval}}}{c_S n^2} = \frac{\sigma^2_{\mathrm{cal}}}{c_Y m^2} \implies \frac{m^2}{n^2} = \frac{\sigma^2_{\mathrm{cal}}}{\sigma^2_{\mathrm{eval}}} \cdot \frac{c_S}{c_Y}
\]
Taking the square root yields the ratio result. Substituting into the budget constraint and solving gives the closed-form expressions.
\end{proof}

\paragraph{The Calibration Uncertainty Marginal Utility Diagnostic.}
The optimality condition implies that resources are allocated efficiently when the marginal variance reduction per dollar is equal across both channels. We can express this in terms of the observed calibration uncertainty share $\omega = \Var_{\mathrm{cal}}/\Var_{\mathrm{total}}$. At the optimum, the \textbf{Spend-Balance Rule} holds:
\[
\omega \;=\; \frac{c_Y m}{c_S n + c_Y m} \;=\; \frac{\text{Spend}_{\mathrm{oracle}}}{\text{Spend}_{\mathrm{total}}}.
\]
That is, the variance share from calibration should equal the budget share spent on oracle labels. Deviations diagnose inefficiency:

\begin{itemize}[leftmargin=*,nosep]
\item \textbf{Under-labeled (Invest in $m$):} If $\omega > \frac{c_Y m}{c_S n + c_Y m}$, calibration uncertainty dominates relative to its share of the budget. \emph{Action:} Shift budget to acquire more oracle labels.
\item \textbf{Over-labeled (Invest in $n$):} If $\omega < \frac{c_Y m}{c_S n + c_Y m}$, evaluation sampling noise dominates. \emph{Action:} Shift budget to evaluate more prompts with the surrogate.
\end{itemize}

\paragraph{Worked Example (Arena Benchmark).}
Using costs from our experiments: the \texttt{gpt-4.1-nano-2025-04-14} judge is approximately $16\times$ cheaper than the \texttt{gpt-5-2025-08-07} oracle, giving a cost ratio $c_S/c_Y \approx 0.064$. At $n=1000$ prompts with $m=50$ oracle labels (5\% oracle fraction), the observed calibration uncertainty share is $\omega \approx 90\%$.

To find the intrinsic variance ratio, note that $\omega = \frac{\sigma^2_{\mathrm{cal}}/m}{\sigma^2_{\mathrm{cal}}/m + \sigma^2_{\mathrm{eval}}/n}$, which implies:
\[
\frac{\sigma^2_{\mathrm{cal}}}{\sigma^2_{\mathrm{eval}}} = \frac{\omega \cdot m}{(1-\omega) \cdot n} = \frac{0.90 \times 50}{0.10 \times 1000} = 0.45
\]

Applying the Square Root Law:
\[
\frac{m^*}{n^*} = \sqrt{0.064} \times \sqrt{0.45} \approx 0.25 \times 0.67 = 17\%
\]

The variance-optimal oracle fraction is ${\sim}17\%$, substantially higher than the empirical 5\%. This indicates the study under-invested in oracle labels relative to the variance-optimal allocation. However, the 5\% allocation was chosen for cost efficiency: it achieves 94\% ranking accuracy at 14$\times$ lower cost than pure oracle labeling, trading some precision for substantial cost savings.

\paragraph{Extension: Multi-Policy Amortization.}
When calibrating once to evaluate $P$ policies on the same prompt set, surrogate costs scale as $c_S \cdot P \cdot n$ while oracle costs remain $c_Y \cdot m$. The optimal ratio becomes $m^*/n = \sqrt{P \cdot c_S \cdot \sigma^2_{\mathrm{cal}} / (c_Y \cdot \sigma^2_{\mathrm{eval}})}$, favoring larger oracle slices as $P$ grows (calibration cost amortizes across policies, making it relatively cheaper per evaluation).
       % optimal budget allocation between surrogate and oracle
\section{Code Generation Experiment (MBPP)}
\label{app:mbpp}

To validate that CJE generalizes beyond preference judgments, we evaluated on MBPP code generation \citep{Austin2021MBPP}, where ground truth is objective: unit tests either pass or fail.

\paragraph{Setup.}
We use 257 MBPP problems (974 samples across 5 epochs) with five policies: \texttt{base} (GPT-4o standard, 87.4\% pass rate, used for calibration), \texttt{enhanced} (GPT-4o with expert prompting, 87.2\%), \texttt{terse} (GPT-4o with minimal prompting, 87.4\%), \texttt{mini} (GPT-4o-mini, 80.9\%), and \texttt{unhelpful} (deliberately buggy, 56.0\%).
The judge is GPT-4o-mini predicting pass probability (0--100, normalized to 0--1).
Oracle labels are actual unit test results.
Unlike Arena (where response length confounds judge scores), MBPP has no strong covariate effects, so we use monotone-only calibration.

\paragraph{Transport validation.}
We apply the mean transport test (\cref{prop:mean-transport}) to each target policy. Two policies fail:
\begin{itemize}[leftmargin=*,nosep]
\item \texttt{mini}: Mean residual $-0.065$ ($p{=}0.006$). The judge model (GPT-4o-mini) is identical to this policy, creating a judge--policy confound that violates transport.
\item \texttt{unhelpful}: Mean residual $-0.280$ ($p{<}0.001$). Low-support saturation: 31\% of samples score below base's 10th percentile, where calibrator training density is sparse.
\end{itemize}
Both policies are excluded from RMSE computation; including them would bias results.

\paragraph{Results.}
\Cref{tab:mbpp-results} shows CJE performance across oracle fractions for the two transport-valid policies (\texttt{enhanced}, \texttt{terse}) at full sample size ($n{=}974$).
Calibration provides consistent RMSE reduction across all oracle fractions: even at 5\% oracle (${\sim}50$ labels), CJE achieves 18\% improvement over naive estimation (raw judge scores without calibration).
At 10\%+ oracle, improvements reach 32--84\%.

\begin{table}[h]
\centering
\small
\caption{\textbf{CJE on MBPP code generation: oracle fraction tradeoff.}
RMSE computed over two transport-valid policies (\texttt{enhanced}, \texttt{terse}) at $n{=}974$.
Unlike Arena where 5\% oracle suffices for ranking, MBPP shows diminishing returns: 10\% oracle captures most of the benefit.}
\label{tab:mbpp-results}
\begin{tabular}{lcccc}
\toprule
Oracle \% & \# Labels & CJE & Naive (uncalib.) & Improvement \\
\midrule
5\%   & 49   & 0.039 & 0.048 & +18\% \\
10\%  & 97   & 0.033 & 0.048 & +32\% \\
25\%  & 244  & 0.019 & 0.048 & +60\% \\
50\%  & 487  & 0.014 & 0.048 & +70\% \\
100\% & 974  & 0.008 & 0.048 & +84\% \\
\bottomrule
\end{tabular}
\end{table}

\paragraph{Comparison to Arena.}
Both experiments validate the oracle fraction tradeoff, but with different thresholds:
\begin{itemize}[leftmargin=*,nosep]
\item \emph{Arena}: 5\% oracle achieves 94\% ranking accuracy; calibration benefit plateaus early.
\item \emph{MBPP}: 5\% oracle provides 18\% RMSE reduction; 10\% provides 32\%; benefit continues scaling to 84\% at full oracle.
\end{itemize}
The difference likely reflects task structure: Arena evaluates preferences (ordinal, high variance) while MBPP evaluates correctness (binary, low variance). Binary oracles require more labels to stabilize calibration.

\paragraph{Judge--policy confound.}
The \texttt{mini} transport failure highlights a methodological risk: when a model judges its own outputs, self-preference or style-alignment biases can corrupt calibration.
Practitioners should use external judges or validate transport explicitly.

\paragraph{Takeaway.}
CJE successfully transfers to code generation with objective ground truth.
The oracle fraction tradeoff generalizes: more labels enable larger gains.
Transport validation is essential---without it, confounded policies inflate RMSE by 2--3$\times$.
         % code generation experiment (domain generalization)

\end{document}